\def\resetMathstrut@{%
  \setbox\z@\hbox{%
    \mathchardef\@tempa\mathcode`\[\relax
    \def\@tempb##1"##2##3{\the\textfont"##3\char"}%
    \expandafter\@tempb\meaning\@tempa \relax
  }%
  \ht\Mathstrutbox@\ht\z@ \dp\Mathstrutbox@\dp\z@}
\newtheorem{definition}{Definition}
\newtheorem{lemma}{Lemma}
\newtheorem{claim}{Claim}[section]
\newenvironment{nscenter}
 {\parskip=0pt\par\nopagebreak\centering}
 {\par\noindent\ignorespacesafterend}
\title{Learning-Augmented Algorithms for Online TSP on the Line}
\author[1]{Themis Gouleakis}
\author[2]{Konstantinos Lakis}
\author[3]{Golnoosh Shahkarami}
\affil[1]{National University of Singapore,
\href{mailto:tgoule@nus.edu.sg}{tgoule@nus.edu.sg}}
\affil[2]{National and Kapodistrian University of Athens,
\href{mailto:konstlakis@gmail.com}{konstlakis@gmail.com}}
\affil[3]{Max Planck Institut für Informatik, Universität des Saarlandes, \href{mailto:gshahkar@mpi-inf.mpg.de}{gshahkar@mpi-inf.mpg.de}}
\date{}
\begin{document}

\maketitle
\begin{abstract}
    \noindent We study the online Traveling Salesman Problem (TSP) on the line augmented with machine-learned predictions. In the classical problem, there is a stream of requests released over time along the real line. The goal is to minimize the makespan of the algorithm. We distinguish between the \textit{open} variant and the \textit{closed} one, in which we additionally require the algorithm to return to the origin after serving all requests. The state of the art is a $1.64$-competitive algorithm and a $2.04$-competitive algorithm for the closed and open variants, respectively \cite{Bjelde:1.64}. In both cases, a tight lower bound is known \cite{Ausiello:1.75, Bjelde:1.64}.
    \medskip
    
    \noindent In both variants, our primary prediction model involves predicted \textit{positions} of the requests. 
    We introduce algorithms that (i) obtain a tight 1.5 competitive ratio for the closed variant and a 1.66 competitive ratio for the open variant in the case of perfect predictions, (ii) are robust against unbounded prediction error, and (iii) are smooth, i.e., their performance degrades gracefully as the prediction error increases.
    \medskip

    \noindent Moreover, we further investigate the learning-augmented setting in the \textit{open} variant by additionally considering a prediction for the last request served by the optimal offline algorithm. Our algorithm for this enhanced setting obtains a 1.33 competitive ratio with perfect predictions while also being smooth and robust, beating the lower bound of 1.44 we show for our original prediction setting for the open variant. Also, we provide a lower bound of 1.25 for this enhanced setting.
\end{abstract}

\section{Introduction}
The Traveling Salesman Problem (TSP) is one of the most fundamental and widely studied problems in computer science, both in its offline version~\cite{offlineTSP}, where the input is known in advance, and the online version~\cite{Ausiello:1.75} where it arrives sequentially. In this paper, we consider the online Traveling Salesman Problem (TSP) on the real line. This version of the problem arises in real-world scenarios such as one dimensional delivery/collection tasks. Such tasks include the operation of elevator systems, robotic screwing/welding, parcel collection from massive storage facilities and cargo collection along shorelines \cite{Elevators, Shoreline}. Furthermore, one can think of other relevant practical settings such as the movement of emergency evacuation vehicles along a perilous highway, where there cannot be knowledge in advance regarding the time and location of persons requiring assistance. However, the great availability of data as well as the improved computer processing power and machine learning algorithms can make it possible for predictions to be made on these locations (e.g combining information from historical data, the weather forecast, etc). In a line of work that started a few years ago~\cite{lykouris2018competitive} and sparked a huge interest~\cite{NEURIPS2018_73a427ba,AntoniadisCE0S20,pmlr-v97-gollapudi19a,WangL20,DBLP:conf/innovations/0001DJKR20,Wei20,rohatgi2020near}, it has been demonstrated that such prior knowledge about the input of an online algorithm has the potential to achieve improved performance (i.e competitive ratio) compared to known algorithms (or even lower bounds) that do not use (resp. assume the absence of) any kind of prediction. Therefore, it is natural to consider ways to utilize this information in this problem using a so-called learning-augmented approach.

The input to our online algorithm consists of a set of requests, each associated with a position on the real line as well as a release time. An algorithm for this problem faces the task of controlling an agent that starts at the origin and can move with at most unit speed. The agent may serve a request at any time after it is released. The algorithm's objective is to minimize the makespan, which is the total time spent by the agent before serving all requests.
We have two different variants of the problem, depending on whether the agent is required 
to return to the origin after serving all the requests or not. This requirement exists 
in the \textit{closed} variant, while it does not in the \textit{open} variant. The makespan in the closed variant is the time it takes the agent 
to serve the requests \textit{and} return to the origin.


We quantify the performance of an online algorithm by its \textit{competitive ratio}, i.e., the maximum ratio of the algorithm's \textit{cost} to that of an optimal \textit{offline} algorithm $OPT$, over all possible inputs. We say that an algorithm with a competitive ratio of $c$ is $c$-competitive. Under this scope, the online TSP on the line has been extensively studied and there have been decisive results regarding lower and upper bounds on the competitive ratio for both variants of the problem. Namely, a tight bound of $\approx 1.64$ was given for the closed variant, while the corresponding value for the open 
variant was proven to be $\approx 2.04$ \cite{Ausiello:1.75, Bjelde:1.64}. 

\subsection{Our setup}\label{ourSetup}
First of all, to define our prediction model and algorithms, it is necessary to know the number of requests $n$
\footnote{Since this (slightly) modifies the original problem definition, the previous competitive ratio results for the classical problem do not necessarily hold for our setup even without predictions. We show in the Appendix that the bound of $1.64$ still holds for the closed variant and that a tight bound of $2$ holds for the open variant.}.
This setting shows up in various real world scenarios. For example, in the case of item collection from a horizontal/vertical 
storage facility, the capacity of the receiving vehicle, which awaits the successful collection of all items in order to deliver them to customers, dictates the number of items to be collected. We note that since $n$ is known, we can assume
that each prediction corresponds to a specific request determined by a given labeling, which is shared by both sets (requests and predictions).   
Under this assumption, we define the $LOCATIONS$ prediction model. In this model, the predictions are estimates for the positions of the requests. The error $\eta$ increases along with the maximum distance of a predicted location to the actual location of the identically labeled  request and is normalized by the length of the smallest interval containing the entire movement of the optimal algorithm.
We also define an enhanced prediction model for the open variant named $LOCATIONS+FINAL$ ($LF$ in short) that additionally specifies a request which is predicted to be served last by $OPT$. In this model, we additionally consider the error metric $\delta$, which increases with the distance of the predicted request to the
request actually served last by $OPT$. We also normalize $\delta$ in the same way as $\eta$. These models and their respective errors are defined formally in Section \ref{section:Preliminaries}.

\paragraph*{Properties of learning-augmented algorithms.} 
In the following we formalize the consistency, robustness and smoothness properties.   We say that an algorithm is:
    
    1) $\alpha$-\textit{consistent}, if it is $\alpha$-\textit{competitive} when there is no error.
    
    2) $\beta$-\textit{robust}, if it is $\beta$-\textit{competitive} regardless of prediction error.
    
    3) $\gamma$-\textit{smooth} for a continuous function $\gamma(err)$, if it is $\gamma(err)$-\textit{competitive}, where $err$ is the prediction error. Note that $err$ could potentially be a tuple of error types.

In general, if $c$ is the best competitive ratio achievable without predictions, it is desirable to have $\alpha < c$, $\beta \le k \cdot c$ for some constant $k$ and also the function $\gamma$ should increase from $\alpha$ to $\beta$ along with the error $err$. We note that $c, \alpha, \beta$ and the outputs of $\gamma$ may be functions of the input and not constant.

\subsection{Our contributions}\label{section:ourContributions}
Throughout this paper, we give upper and lower bounds for our three different settings (closed variant-$LOCATIONS$, open variant-$LOCATIONS$, and open variant-$LF$). The lower bounds refer to the case of perfect predictions and are established via different $attack$ strategies. That is, we describe the actions of an adversary $ADV$, who can control only the release times of the requests and has the goal of \textit{maximizing} the competitive ratio of any algorithm $ALG$. We emphasize that $ADV$ is given the power to observe $ALG$'s actions and act accordingly. In more detail, $ADV$ does not
need to specify the release times in advance, but can release a request at time $t$, taking the actions of $ALG$ until time $t$ into account. This is, in fact, the most powerful kind of adversary. The upper bounds are established via our algorithms and are defined for every value of the error(s). Recall that $\eta$ and $\delta$ refer to the two types of error we consider. Our algorithms and attack strategies are intuitively described in their respective sections. We now present the main ideas and our results.
\paragraph{Closed variant under \textit{LOCATIONS}.}
We will start by intuitively describing our algorithm for this setting and then continue with our lower bound. We design the algorithm $FARFIRST$. The main idea 
is that we first focus entirely on serving the requests on the side with the furthest extreme, switching to the other side when all such requests are served. When serving the requests on one side, we prioritize them by order of decreasing amplitude. The intuition is that we have the least possible amount of leftover work for our second departure from the origin, which limits the ways in which an adversary may attack us. We obtain the theorem below. More details are given in Section \ref{section:closedLocations}.
\begin{restatable}{thm}{FarFirstBound}
    \label{the:FarFirstBound}
    The algorithm $FARFIRST$ is $min\left\{ \frac{3(1 + \eta)}{2}, 3\right\}$-competitive.
\end{restatable}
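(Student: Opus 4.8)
The plan is to prove separately the two estimates $ALG \le \tfrac{3}{2}(1+\eta)\,OPT$ and $ALG \le 3\,OPT$ for the makespan $ALG$ of $FARFIRST$; the theorem is then their minimum. I begin with three facts. Writing $R = \max\{0,\max_i p_i\}$ and $L = \max\{0,-\min_i p_i\}$, any feasible closed tour must reach $+R$ and $-L$ and return to the origin, so $OPT \ge 2(R+L)$ and the interval normalizing the error is exactly $[-L,R]$; hence $\max_i|\hat p_i-p_i| = E := \eta(R+L) \le \tfrac{\eta}{2}\,OPT$, the predicted extremes satisfy $|\hat R-R|\le E$ and $|\hat L-L|\le E$, and therefore the side $FARFIRST$ serves first — WLOG the right, since $\hat R \ge \hat L$ — is the truly farther side up to an additive $2E$, i.e.\ $L \le R+2E$. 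Finally, since request $i$ (at position $p_i$, released at time $t_i$) cannot be served before $t_i$ and then needs $|p_i|$ more time to get home, $OPT \ge \max_i(t_i+|p_i|) =: M$.

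I then decompose $FARFIRST$'s trajectory into a right excursion (repeatedly head toward the farthest released-but-unserved right request until all right requests are served and the agent is back at the origin, positioning itself during any forced idle time so as to serve the relevant known-but-unreleased request at the earliest possible moment), then an analogous left excursion, possibly followed by catch-up trips caused by requests whose positions were mispredicted (so the agent turned around too soon) or that were revealed on a side only after it had been deemed finished. The two sub-claims I would establish are: (a) the right excursion finishes by time $\max\{2R, M\}$ — once the last right request has been released the agent lies in $[0,R]$ and, always aiming at the farthest pending request, needs at most one more out-and-back, while its idle waiting for a known-but-unreleased request is scheduled to finish no later than that request's $t_i+|p_i|\le M$; symmetrically, the left excursion, begun from the origin at time $T_1 \le \max\{2R,M\} \le OPT$, finishes by $\max\{T_1+2L,\,M\}$; and (b) every catch-up trip is triggered only after the agent has returned to the origin (so the waiting already incurred is billed to $M$), costs at most $2\max\{R,L\}$, and, when not dominated by $M$, overshoots a side's predicted extreme by only $O(E)$, since the requests it missed lie within $E$ of that extreme. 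With perfect predictions there are no catch-ups, so $ALG \le \max\{T_1+2L,M\} \le OPT+2L$; and $2L \le R+L \le \tfrac12 OPT$ gives the tight bound $ALG \le \tfrac32\,OPT$. For general $\eta$, the side-choice inequality gives $2L \le R+L+2E \le \tfrac12 OPT + 2E$, the catch-ups add (when not absorbed into $M$) at most a further $O(E)$, and collecting these $O(E)$ terms to at most $3E \le \tfrac32\eta\,OPT$ yields $ALG \le \tfrac32\,OPT + 3E \le \tfrac32(1+\eta)\,OPT$.

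The robustness bound follows from the same decomposition with the $O(E)$-control replaced by a counting argument: the two base excursions cost at most $\max\{2R,M\}+2L \le 2\,OPT$ (using $2R,2L \le OPT$ and $M \le OPT$), and it remains to bound the extra catch-up work, where the key observation is that fresh catch-ups are serialized through the origin, so many of them would push their cumulative cost into $M \le OPT$, while a single one costs at most $2\max\{R,L\} \le OPT$; either way $ALG \le 3\,OPT$. I expect this last step to be the main obstacle — controlling the interaction between the adversary's release times and $FARFIRST$'s commitment to finish one side before the other: one has to pin down the agent's location when each late or mispredicted request appears, show that the adversary gains nothing from forcing many catch-ups, and verify that in the low-error regime all misprediction effects collapse into the single additive $3E$ term. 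The remainder is bookkeeping with $OPT \ge 2(R+L)$ and $OPT \ge M$.
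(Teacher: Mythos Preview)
Your high-level plan mirrors the paper's: split into a $3$-robustness argument and a $\tfrac{3}{2}(1+\eta)$-smoothness argument, decompose the run of $FARFIRST$ into one-sided excursions, and bound the delay $|FARFIRST|-|OPT|$ by $(R+L)+3E$. The ingredients $OPT\ge 2(R+L)$, $OPT\ge t_i+|p_i|$, and $L\le R+2E$ are exactly the ones the paper uses.

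The gap is in your claim (a). It is true for $\eta=0$ but false for $\eta>0$: $FARFIRST$ does \emph{not} position itself ``so as to serve the relevant known-but-unreleased request at the earliest possible moment''; it waits at the \emph{prediction}, which may be $E$ away from the true location. Concretely, take a request at position $1$ released at time $T$ with prediction $1/2$, and a request at $1/2$ released at time $0$ with prediction $1$ (so $R=1$, $E=1/2$). $FARFIRST$ sits at $1/2$ until time $T$, then goes to $1$ and back, ending the right excursion at $T+3/2$; but your $\max\{2R,M\}=\max\{2,T+1\}=T+1$. The overshoot is exactly $E$, and it does not fall under your catch-up category (the agent neither turned around too soon nor deemed the side finished). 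If you patch (a) by adding $+E$ per excursion, the collected terms become $2E+2E+\text{(phase 3)}$ and you no longer reach $3E$.

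The paper's fix is a sharper per-phase statement (its Claim~\ref{claim:zoomOrCatchup}): for each phase, \emph{either} it takes exactly the optimal round-trip time $2|R_{phase}|$, \emph{or} it ends no later than $|OPT|+E$. The disjunction is what keeps the accounting tight: the $+E$ is charged at most once (to the phase that runs past $|OPT|$), and the remaining phase contributes $2L\le (R+L)+2E$, giving $3E$ in total. Establishing the second branch is the real work: the paper tracks the potential $D(t)=|pos(t)-R_O(t)|+|R_O(t)-R_U[t]|$ and shows it stays $\le E$ whenever $FARFIRST$ is idling, so that at the last idle moment $OPT$ still has at least (remaining excursion time $-E$) to go. Your outline needs an analogous invariant; the loose ``collect the $O(E)$ terms to at most $3E$'' does not get there without it.

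For robustness, your serialize-through-the-origin idea can be made to work but is roundabout. The paper's argument is more direct: at the last release time $t_f$ one has $|pos(t_f)|\le t_f$, and the remaining tour through the two outstanding extremes and back to $0$ is at most $2(|L|+|R|)$ if $pos(t_f)$ lies inside the outstanding interval (giving $t_f+2(|L|+|R|)\le 3|OPT|$), and at most $t_f+2|R|$ otherwise (again $\le 3|OPT|$).
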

We emphasize that for $\eta = 0$, this competitive ratio remarkably matches our lower bound of $1.5$, making $FARFIRST$ optimal.

Our lower bound for this setting 
is accomplished via an attack strategy that is analogous to a cunning magician's trick. Suppose that the magician keeps a coin inside one of their hands. They then ask a pedestrian to make a guess for which hand contains the coin. If the pedestrian succeeds, they get to keep the coin. However, the magician can always make it so that the pedestrian fails, for example by having a coin up each of their sleeves and producing the one not chosen by the pedestrian. One can draw an analogy from this trick to our attack strategy, which is described in Section \ref{section:closedLocations} in more detail. In this way, we obtain the theorem below.
\begin{restatable}{thm}{CompetitiveRatio}
    \label{the:CompetitiveRatio}
    For any $\epsilon > 0$, no algorithm can be $\left(1.5 - \epsilon\right)$-competitive for closed online TSP on the line under 
    the $LOCATIONS$ prediction model.
\end{restatable}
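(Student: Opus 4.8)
The plan is to describe, for every $\epsilon>0$, an adaptive adversary $ADV$ that fixes a small instance, hands $ALG$ the (perfectly predicted) positions up front, and then reveals release times online so as to force $ALG\ge(1.5-\epsilon)\cdot OPT$. Since perfect predictions give $ALG$ all positions at time $0$, the only lever $ADV$ has is \emph{when} requests appear, so the instance is deliberately simple: three requests at $\pm1$ — one ``anchor'' at $+1$ released at time $0$ (so $ALG$ cannot just sit at the origin forever), and, known from the prediction to be at $+1$ and at $-1$, two more requests whose releases $ADV$ withholds. The construction is engineered so that the optimal closed tour is the ``right-order'' tour $0\to-1\to+1\to0$ of length $4$, while $ALG$ is forced into a tour containing an extra back-and-forth, of the form $0\to+1\to-1\to+1\to0$ of length $6$; the ratio $6/4$ is exactly the bound of Theorem~\ref{the:FarFirstBound}, so the two are tight.

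The magician's-trick heart of the argument is as follows. $ADV$ watches $ALG$ until it commits to a side — the first time $\tau$ at which $ALG$ reaches $+1$ or $-1$; say it reaches $+1$ first. This is the pedestrian's guess: $ALG$ has effectively bet that no urgent work sits on the $-1$ side. $ADV$ now produces ``the coin in the other hand'': it releases a request on the $-1$ side at time $\tau$ (early enough that the offline tour which visits $-1$ first still serves it with no idle time) and a further request on the $+1$ side at time $\tau+2$ (late enough that $ALG$, having already visited $+1$ and now also owing a visit to $-1$, cannot clear it on the same pass, yet early enough not to inflate $OPT$). A short computation gives $OPT=\tau+3$ via $0\to-1\to+1\to0$, collecting both late requests en route without waiting, whereas every residual plan of $ALG$ from time $\tau$ — go straight to $-1$, wait at $+1$ for the late $+1$ request, or any mixture — still requires travel $+1\to-1\to+1\to0$, so $ALG=\tau+5$; at $\tau=1$ this is $6$ versus $4$. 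The mirror schedule (release the late $+1$ at time $\tau$, the $-1$ request at time $\tau+2$) covers an algorithm that reaches $-1$ first.

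The remaining — and, in my view, principal — difficulty is to rule out \emph{non-committal} behavior: an algorithm that lingers near the origin, oscillates, or approaches an extreme and retreats, thereby pushing $\tau$ above $1$, where the naive estimate $(\tau+5)/(\tau+3)$ degrades below $1.5$. The proof must show that $ADV$ has a good response to indecision as well — intuitively, a non-committal $ALG$ still ends up paying to cover both sides in an order it was never allowed to optimize, which is precisely the $3/2$ phenomenon — and making this precise requires partitioning $ALG$'s trajectory by the first moment it is effectively (to within $\epsilon$) committed to a side and choosing a tailored $ADV$ schedule in each regime, in particular one that keeps $OPT$ small (or, if $ALG$ stalls so long that $OPT$ cannot be kept small adaptively, one that reverts to an early fixed schedule on which the stall is pure overhead $OPT$ does not pay). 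I expect the bulk of the write-up to be this case analysis, with the delicate point being consistency of the case boundaries with the tightness of the $OPT$ estimate across all regimes; the arithmetic within each regime is routine once the right $ADV$ response is pinned down.
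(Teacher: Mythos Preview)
Your proposal correctly identifies the ``magician's trick'' as the core mechanism---$ADV$ waits for $ALG$ to commit to a side and then freezes that side---and the target ratio $6/4=1.5$. However, the three-request construction (positions $+1,+1,-1$, with an anchor at $+1$ released at time~$0$) cannot be made to work, and the paragraph you flag as the ``principal difficulty'' is in fact a fatal gap rather than routine casework.

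Here is the concrete obstruction. Under the $LOCATIONS$ model with perfect predictions, $ALG$ knows all positions at time~$0$; the only hidden information is the release times. With just two withheld requests sitting at the extremes, $ADV$'s sole lever against indecision is to delay those releases---but every unit of delay inflates $|OPT|$ by essentially the same amount it inflates $|ALG|$, so the ratio $(\tau+5)/(\tau+3)$ really does degrade below $1.5$ as you note. You suggest $ADV$ might ``revert to an early fixed schedule'' against a stalling $ALG$, but any fixed schedule is defeated by an $ALG$ that simply tours in the order that schedule favors; and any adaptive release intended to break the stalemate hands $ALG$ precisely the directional information it was waiting for. Concretely, against the $ALG$ that sits at the origin until time~$1$ and only then heads to~$+1$, your rule gives $\tau=2$, whence $|ALG|=7$ and $|OPT|=5$, ratio~$1.4$; and no alternative release rule on these three fixed positions repairs this, because the ``unreleased interval'' is always all of $[-1,1]$ and never puts pressure on $ALG$ to move.

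The paper's construction supplies exactly the missing mechanism. Instead of three requests, it places $n$ requests evenly in $[-1,1]$ and, during a first phase, releases them \emph{symmetrically from the outside in}: the request at distance~$d$ from the origin appears at time $2-d$, so the interval $(L_U(t),R_U(t))$ of still-unreleased positions shrinks continuously and symmetrically toward the origin. A unit-speed $ALG$ starting at~$0$ is trapped inside this shrinking interval---and hence has served \emph{no} request---until the moment $t_{commit}\in[1,2]$ at which it is squeezed out one side; only then does $ADV$ freeze that side by delaying its remaining releases to time $4-d$. The density $n\to\infty$ is what makes commitment inevitable and bounded in time, and what lets $ADV$ postpone choosing a side without leaking directional information (both sides look identical throughout phase one). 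The bound then reads $|ALG|\ge 6-2\alpha(f)$ against $|OPT|=4$, with $\alpha(f)=2/(n-1)\to 0$. Your instinct about the trick is right, but the instance must be the dense one, not a handful of points at the extremes.
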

\paragraph*{Open variant under \textit{LOCATIONS}.}

The algorithm we present for this setting is named
$NEARFIRST$. This algorithm first serves the requests on the side opposite to the one $FARFIRST$ would choose. Another divergence from $FARFIRST$ that should be noted is that for the side focused on second, $NEARFIRST$ prioritizes requests that are predicted to be \textit{closer} to the origin, since there is no requirement to return to it, thus avoiding unnecessary backtracking. More details 
about the algorithm and the proof of the following theorem are given in Section \ref{section:openLOCATIONS}.

\begin{restatable}{thm}{NearFirstBound}
    \label{the:NearFirstBound}
    The algorithm $NEARFIRST$ is $min\left\{f(\eta), 3\right\}$-competitive,
    where

    \begin{displaymath}
    f(\eta) = \left\{\begin{array}{lr}
        1 + \frac{2(1 + \eta)}{3 - 2\eta}, & \text{for }\quad \eta < \frac{2}{3}\\
        3, & \text{for }\quad \eta  \ge \frac{2}{3}
        \end{array}\right..
    \end{displaymath}

\end{restatable}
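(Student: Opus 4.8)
The plan is to analyze the makespan of $NEARFIRST$ by a case distinction based on which side (left or right) contains the furthest extreme point of $OPT$'s movement, and on where the last-served request lies. Let me set up notation: let $p_1 < 0 < p_2$ be the extreme coordinates reached by $OPT$ (so the movement interval has length $L = p_2 - p_1$, which is the normalizing quantity for $\eta$), and let $d_- = |p_1|$, $d_+ = p_2$ be the two amplitudes, with $d_{\max} = \max\{d_-, d_+\}$. Recall $OPT \ge 2\min\{d_-,d_+\} + d_{\max}$ for the open variant (go to the near extreme, come back, go to the far extreme — or a permutation), and also $OPT \ge \max_i r_i$ where $r_i$ are release times, and $OPT \ge 2d_{\min} + d_{\max}$ combined with the release-time bound will be the two lower bounds on $OPT$ we play against. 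By symmetry assume the far extreme is on the right, i.e. $d_+ \ge d_-$; then $NEARFIRST$ first goes left to serve all left requests, returns, then goes right serving right requests in order of increasing predicted distance from the origin.

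First I would bound the cost of the "left phase": since all left requests have true positions within the movement interval, and the algorithm knows $n$ and the predictions, the furthest left point $NEARFIRST$ must reach is at most $\hat d_- := $ (predicted leftmost position magnitude), and by the error bound $\hat d_- \le d_- + \eta L$. A subtlety is that a request's release time may force waiting; I would argue that because $OPT$ must itself serve every request after its release, any forced idle time on the left is charged against $OPT$ directly (this gives the additive $\max_i r_i \le OPT$ slack). So the left phase plus return costs at most $2(d_- + \eta L)$ plus a term absorbed by $OPT$. Then the algorithm departs right from the origin. The key obstacle — the heart of the proof — is bounding the right phase: here the adversary can release a right request late, after $NEARFIRST$ has committed to an ordering based on predictions and has already passed the point where that request actually sits, forcing backtracking. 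This is exactly where the "serve closer-predicted-first" rule and the $\eta < 2/3$ threshold come in: one shows the extra backtracking on the right is at most something like $2\eta L$ relative to a clean sweep to $d_+$, and $d_+ \le d_{\max}$, while $L = d_- + d_+ \le 2 d_{\max}$, so $\eta L \le 2\eta d_{\max}$.

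Assembling the pieces, the total cost is at most roughly $2d_- + 2\eta L + d_+ + (\text{backtracking} \le 2\eta L) + (\text{idle} \le \text{extra over } OPT)$, and dividing by $OPT \ge 2d_- + d_+$ (when left is the near side) one wants to show the ratio is at most $1 + \tfrac{2(1+\eta)}{3 - 2\eta}$. The denominator $3 - 2\eta$ will emerge from writing $L \le 2 d_{\max}$ together with $OPT \ge d_{\max}$ and carefully tracking that the $\eta L$ terms, expressed in units of $OPT$, contribute a factor that degrades the effective "budget" — concretely, after substituting $L \le \tfrac{2}{3-2\eta}\cdot(\text{something} \le OPT)$ or an equivalent rearrangement, the bound falls out. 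I would then check the boundary behavior: at $\eta = 0$ this gives $1 + 2/3 = 5/3 \approx 1.66$, matching the claimed perfect-prediction guarantee, and as $\eta \to 2/3$ the expression $1 + \tfrac{2(1+\eta)}{3-2\eta} \to 1 + \tfrac{2 \cdot 5/3}{5/3} = 3$, matching continuously with the robustness bound of $3$; for $\eta \ge 2/3$ one just invokes the generic $3$-competitiveness, which follows from the crude bound that $NEARFIRST$ never travels more than $3 d_{\max}$ plus idle time regardless of predictions (go left at most $d_{\max}$, never needs to exceed the true extremes by more than the movement interval in the worst case — this crude argument must be made robust to arbitrary $\eta$ separately, e.g. by noting the algorithm can always fall back to serving everything within the true convex hull once a request reveals a position outside the predicted range). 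I expect the main difficulty to be making the right-phase backtracking bound precise while simultaneously handling the case where the last request served by $NEARFIRST$ is on the left (near) side versus the right (far) side, and the sub-case where the far extreme is actually served by a late-released request — these require separately verifying the inequality, and getting the constant $3 - 2\eta$ rather than something weaker demands choosing the tightest of the available $OPT$ lower bounds in each branch.
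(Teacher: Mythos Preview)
Your decomposition into a ``left phase'' and a ``right phase'' with separately bounded costs does not recover the correct constant, and the gap is a missing structural idea rather than a missing calculation. Run your accounting at $\eta = 0$: you get $|NEARFIRST| \le 2d_- + d_+ + \text{(idle)}$, and you lower-bound $|OPT| \ge 2d_- + d_+$. If idle time is ``absorbed by $OPT$'' this yields ratio~$1$, which is false; if idle time can be as large as $|OPT|$ you get ratio~$2$, which is too weak. Neither gives the target $5/3$. The actual source of the delay at $\eta=0$ is not backtracking or idling per se, but the fact that $NEARFIRST$ may sweep left-to-right while $OPT$ sweeps right-to-left and \emph{finishes on the left}; in that case $NEARFIRST$ ends at $R$ having effectively wasted the final leg that $OPT$ never needed.

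The paper captures this with one key quantity you do not introduce: let $q_f$ be the request $OPT$ serves last, and set $d = |q_f - R|$ (assuming $NEARFIRST$ ends on the right). The central lemma is the delay bound $|NEARFIRST| - |OPT| \le M + d$, proved via a potential argument that tracks how far $NEARFIRST$ lags behind the leftmost unreleased request. One then lower-bounds $|OPT|$ as a function of $d$ (either $2|L|+|R|+d$ or $3|R|+2|L|-d$ depending on the order $OPT$ visits $L,R,q_f$), maximizes the resulting ratio over $d \in [0,|L|+|R|]$, and obtains $1 + \tfrac{M+|L|+|R|}{2|R|+|L|}$. The denominator $3-2\eta$ then comes not from $L \le 2d_{\max}$ as you suggest, but from Lemma~\ref{lem:LOCATIONS}: since $NEARFIRST$ chose to go left based on \emph{predicted} extremes, one only knows $|R| \ge |L| - 2M$, hence $2|R|+|L| \ge \tfrac{3-2\eta}{2}(|L|+|R|)$. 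Your sketch also conflates actual and predicted extremes when deciding which side $NEARFIRST$ visits first; that distinction is exactly what forces the $-2\eta$ correction.
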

As in the previous setting, we utilize the "magician's trick" in order to design a similar attack strategy. We describe exactly how 
this is done in Section \ref{section:openLOCATIONS}.
This leads to the establishment of a lower bound, as stated below.

\begin{restatable}{thm}{CompetitiveRatioOpen}
    \label{the:CompetitiveRatioOpen}
    For any $\epsilon > 0$, no algorithm can be $\left(1.\overline{44} - \epsilon\right)$-competitive for open online TSP on the line under 
    the $LOCATIONS$ prediction model.
\end{restatable}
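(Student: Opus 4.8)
\section*{Proof proposal for Theorem~\ref{the:CompetitiveRatioOpen}}

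The plan is to construct, for every $\epsilon>0$, an instance of open online TSP on the line together with an adaptive adversarial release schedule on which \emph{every} deterministic online algorithm has makespan at least $(1.\overline{44}-\epsilon)\cdot\mathit{OPT}$, even though the predictions are perfect. Since perfect predictions reveal exactly the request positions, the adversary's only freedom is the (adaptive) choice of release times; so I would fix a small set of requests at carefully chosen positions --- essentially two ``extreme'' requests that delimit the travelled interval, together with one or two ``interior'' requests placed symmetrically on opposite sides --- and normalise so that the offline optimum of the resulting open tour is a fixed constant. The extreme requests are released at time $0$, which forces the algorithm to traverse the whole interval and, crucially, to commit to the \emph{direction} in which it does so. This is the ``magician's trick'': the interior requests play the role of a coin up each sleeve, and whichever direction the algorithm commits to, its trajectory is forced to pass one interior request ``early''; the adversary then keeps that interior request hidden until the algorithm has already passed its location (and is past the point where turning back to serve it first would be worthwhile), so the algorithm must later backtrack to it, while releasing the other interior request in time. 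The offline optimum, knowing the release schedule, traverses the interval in the \emph{opposite} order, sweeping the delayed request exactly when it becomes available and the other one for free, so it never backtracks.

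Concretely, I would proceed as follows. First, describe the instance and the precise release times, where ``just after the algorithm passes a location'' is made rigorous by a limit $\epsilon\to 0$. Second, compute $\mathit{OPT}$: verify that the traversal opposite to the one the algorithm chose lets $\mathit{OPT}$ serve every request with no waiting and no backtracking, and check this against the few alternative offline tours (serving an interior request first, or dawdling to meet a release time) to confirm optimality; this pins down the normalisation. Third --- the heart of the argument --- run a case analysis over the algorithm's behaviour, partitioning it by which extreme request it reaches first, equivalently by which interior request its trajectory is forced to pass early; the adversary commits the delayed release to exactly that request, released at the moment the algorithm is past its ``point of no return''. In each case lower-bound the makespan by (interval traversal) $+$ (the unavoidable backtrack to the delayed request) and divide by $\mathit{OPT}$. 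Fourth, handle the degenerate behaviours: an algorithm that waits near the origin only loses the waiting time (the adversary may then release everything at time $0$); an algorithm that ``probes'' toward an interior request and recommits is handled by having the adversary hold that request until the recommitment, which only lengthens the path while $\mathit{OPT}$ is unchanged; and an algorithm that slows down or idles at a held request's location hoping to catch its release is defeated because the adversary fixes the release time adaptively, \emph{after} seeing the trajectory. Finally, optimise the positions of the interior requests relative to the interval endpoints so that the worst ratio over all of these behaviours equals exactly $1.\overline{44}$, and let $\epsilon\to 0$.

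The hardest part, by far, is making the case analysis genuinely exhaustive rather than restricting to ``clean'' traversals. An adversary lower bound must rule out, in particular, an algorithm that after traversing part of the interval returns to an interior request's location and idles there to catch its release before resuming, and one that, upon seeing which interior request the adversary kept hidden, reverses direction to sweep it late on the opposite traversal. Preventing both simultaneously is exactly what forces the interior requests to be placed so that idling for one of them drags the algorithm too far from the other and so that a late reversal is already too costly; getting the positions so that this, together with the backtracking penalty and the requirement that the adversarially chosen release time still admit the cheap ``opposite'' offline tour, balances out to $13/9$ in every case --- and not to a larger constant --- is the delicate optimisation. A secondary but genuinely fiddly obstacle is the bookkeeping for $\mathit{OPT}$ in each case: one must re-verify that the release times designed to punish the algorithm still admit an offline tour of the target cost, checking it against dawdling tours that arrive at a location exactly at a release time rather than before it, and this is what ultimately pins down how long a request may be delayed and hence the instance parameters.
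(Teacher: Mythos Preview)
Your high-level picture --- the ``magician's trick'' of keeping the instance symmetric until the algorithm commits to a direction and then punishing that side --- is exactly the paper's intuition. The gap is in the instance you build to implement it: a \emph{constant} number of requests (two extremes plus one or two interior points) cannot force the commitment you need.

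Here is the obstruction. With perfect predictions the algorithm knows every position, so with interior requests at $\pm a$ it can simply walk to $-a$ and \emph{sit there}. Your adversary now has no good option: if it releases $-a$, the algorithm serves it for free and proceeds left-then-right with no backtrack; if it releases $a$ instead (to preserve the cheap right-then-left offline tour), the algorithm sees which side is being delayed and reverses to follow exactly that offline tour; and if it delays both, the offline optimum grows along with the algorithm's wait. Your paragraph on ``degenerate behaviours'' does not escape this --- the sentence ``the adversary may then release everything at time~$0$'' is not available to an adaptive adversary once time has passed, and ``hold the request until the recommitment'' is precisely what the sit-and-wait algorithm prevents. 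No placement of a couple of interior points balances these cases to $13/9$; the tension is structural, not a matter of optimising~$a$.

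The paper's construction is genuinely different. It places $n$ requests \emph{evenly} across $[-1,1]$ and releases them in a symmetric inward wave (the request at distance $d$ is released at time $2-d$), so the set of still-unreleased requests is always a shrinking interval around the origin rather than two isolated points. The algorithm is declared ``committed'' the first time it leaves the interval $[3L_U(t)+2,\,3R_U(t)-2]$; because the wave is continuous (in the limit $n\to\infty$) there is no single location the algorithm can park at to catch a release, and one shows $t_{commit}\le 4/3$ with the innermost unreleased request on each side still at amplitude $d_{commit}\ge 2/3 - O(1/n)$. Only then does the adversary break symmetry, delaying the committed side's remaining requests to release time $2+d$. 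This gives $|OPT|=3$ and, via a short case analysis on the order in which the algorithm subsequently serves $-1$, $1$, and the delayed block, $|ALG|\ge 13/3 - O(1/n)$, hence ratio $\to 13/9$. The density of requests and the limit $n\to\infty$ are not a technical convenience but the mechanism that makes commitment unavoidable; that is the missing idea in your plan.
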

\paragraph*{Open variant under \textit{LOCATIONS+FINAL}.}

Our algorithmic approach to this setting is again similar 
to the one implemented in $NEARFIRST$. The difference is that
instead of choosing the side with the near extreme first, we choose the side 
whose extreme is further away from the predicted endpoint of $OPT$. We name this algorithm $PIVOT$, to emphasize that the prediction for the last request acts as a pivot for the algorithm to decide the first side it will serve. A theorem about $PIVOT$ is presented below, the proof of which has been given in Section \ref{section:openLF}.

\begin{restatable}{thm}{PivotBound}
    \label{the:PivotBound}
    The algorithm $PIVOT$ is $min\left\{f(\eta, \delta), 3\right\}$-competitive, where
    
    \begin{displaymath}
        f(\eta, \delta) = \left\{\begin{array}{lr}
        1 + \frac{1 + 2(\delta + 3\eta)}{3 - 2(\delta + 2\eta)}, & 3 - 2(\delta + 2\eta) > 0\\
        3, & 3 - 2(\delta + 2\eta) \le 0
        \end{array}\right..
    \end{displaymath}

\end{restatable}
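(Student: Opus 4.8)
The analysis follows the template used for $NEARFIRST$ (Theorem~\ref{the:NearFirstBound}), the new ingredient being that $PIVOT$ commits to its first side according to the predicted endpoint rather than the near extreme --- which is exactly what brings $\delta$ into the bound. Concretely, the plan is to combine a handful of lower bounds on $OPT$ with an upper bound on $PIVOT$'s makespan written through its committed two‑phase route, and then take the ratio in a case analysis governed by whether $PIVOT$'s prediction‑based choice of which side to serve first agrees with the side on which $OPT$ ends. Fix notation: let the extreme requests sit at $-a$ and $b$ with $a,b\ge 0$, so $OPT$'s movement interval is $[-a,b]$ of length $\Delta:=a+b$ (the case $\Delta=0$ is trivial). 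Put $E:=\eta\Delta$, $D:=\delta\Delta$, so every request satisfies $|x_i-\hat x_i|\le E$ and $|x_\ell-x_m|\le D$, where $\ell$ is the predicted‑last request and $m$ the one $OPT$ actually serves last, at position $x_m$; also the leftmost and rightmost \emph{predicted} positions lie within $E$ of $-a$ and $b$. By the left--right symmetry of the problem, assume $PIVOT$ serves the left side first; by the definition of the algorithm this happens exactly when $\hat x_\ell$ lies weakly in the right half of the predicted interval, i.e.\ $\hat x_\ell\ge (\hat b-\hat a)/2$, and chaining $\hat x_\ell\ge(\hat b-\hat a)/2$ with the above error bounds yields the \emph{commitment inequality} $x_m\ge \tfrac{b-a}{2}-2E-D$.

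I would use three lower bounds on $OPT$: (i) $OPT\ge\Delta$; (ii) $OPT\ge 2\Delta-|x_m|$, since $OPT$ must reach both $-a$ and $b$ and finish at $x_m$; and (iii) $OPT\ge t_r+|x_r-x_m|$ for every request $r$, since $r$ is served no earlier than its release time $t_r$ and $OPT$ must afterwards travel to its final position. For the upper bound, $PIVOT$ first clears the left side --- moving out to $-a$ and back toward the origin, possibly parking near the predicted position of a not‑yet‑released left request --- and then clears the right side, ending at $b$ (or backtracking to a nearer right request released very late). Since it must in any case traverse $[-a,0]$ and then $[0,b]$, its makespan is at most $2a+b$, plus an $O(E)$ term from locating the true extremes through the predicted ones, plus a backtracking/waiting term forced by a request released after $PIVOT$ passed or misjudged its position; that last term makes some $t_r$ large and is absorbed through (iii). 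The quantitative target is $ALG\le 2a+b+E+(\text{terms charged to }OPT\text{ via (iii)})$.

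Now take the ratio. If $x_m\ge 0$ --- so $OPT$ ends on the side $PIVOT$ reserved for last, i.e.\ the pivot was ``correct'' --- then (ii) gives $OPT\ge 2\Delta-b=2a+b$, so $ALG/OPT=1+O(\eta)$, which is below $f(\eta,\delta)$; when $\eta$ is so large that this fails, one invokes instead $PIVOT$'s robustness clause (revert to waiting until all $n$ requests are released, which costs at most $OPT$ since the last release time lower‑bounds $OPT$, then run an optimal route from the origin, of length $\le 2\Delta\le 2\,OPT$, for a total of at most $3\,OPT$). The governing case is $x_m<0$: here (ii) reads $OPT\ge 2\Delta+x_m$, and substituting the commitment inequality gives $OPT\ge \tfrac{3a}{2}+\tfrac{5b}{2}-2E-D$ (when $3-2(\delta+2\eta)\le 0$ this is vacuous and the $3$‑bound is used). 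Dividing $ALG\le 2a+b+E$ by this, normalizing $\Delta=1$ so that $E=\eta$, $D=\delta$, and maximizing over $a\in[0,1]$ --- the numerator increases and the denominator decreases in $a$, so the worst configuration is $a=1$, $b=0$, all requests on the side served first --- collapses to $\tfrac{4+2\eta}{3-2(\delta+2\eta)}$, which equals exactly $1+\tfrac{1+2(\delta+3\eta)}{3-2(\delta+2\eta)}$.

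The main obstacle is the bookkeeping hidden in the informal ``$O(E)$'' and in the charging of $PIVOT$'s delays to $OPT$: one must show that the travel $PIVOT$ wastes because it can park near only \emph{one} predicted position while several requests may still be unreleased, together with the miscalibration of the extremes, contributes exactly one $E$ to the numerator (not a larger multiple), and that each detour for a late request is fully paid for by the matching instance of (iii), with the release times lining up so that no case beats $f(\eta,\delta)$. A secondary delicate point is the commitment geometry --- turning ``$PIVOT$ chose the left side first'' into $x_m\ge\tfrac{b-a}{2}-2E-D$ --- and the boundary configurations (all requests on one side, $x_m$ sitting at an extreme, the predicted interval nearly a point), where the constants are tight and the outer $\min\{\cdot,3\}$ takes over.
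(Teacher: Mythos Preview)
Your overall architecture is the same as the paper's: derive the ``commitment inequality'' $x_m \ge \tfrac{b-a}{2}-2E-D$ (which is exactly the paper's Claim~\ref{claim:dBound}, written as a bound on $q_f$ rather than on $d=|R-q_f|$), combine it with a lower bound on $|OPT|$ in terms of $x_m$, and optimize the resulting ratio. Your unified bound $|OPT|\ge 2\Delta-|x_m|$ is in fact the minimum of the paper's two order-dependent lower bounds, so your case split by the sign of $x_m$ and the paper's split by the order in which $OPT$ visits $L,R,q_f$ are the same split in different clothing; the final arithmetic at $a=1$, $b=0$ coincides.

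The genuine gap is precisely the one you flag as ``the main obstacle'': the claim $|PIVOT|\le 2a+b+E$ plus charged waiting is not an absolute bound on the makespan, and the charging via your inequality~(iii) is not worked out. The paper does \emph{not} attempt to bound $|PIVOT|$ absolutely; instead it proves the delay bound $|PIVOT|-|OPT|\le M+d$ (Lemma~\ref{lem:OpenDelay}), valid uniformly regardless of how much waiting occurs, and then writes the ratio as $1+\frac{M+d}{|OPT|}$. This is done via a $t_{chase}$ argument: either $PIVOT$ never stalls (and then $|PIVOT|=2a+b$ exactly), or one identifies the last moment $t_{chase}$ at which $PIVOT$ is within $M$ of the current checkpoint $L_U[t]$, after which $PIVOT$ makes a single unit-speed sweep; comparing $t_{chase}+D(t_{chase})$ against $|OPT|\ge t_{chase}+|L_U[t_{chase}]-q_f|$ yields the delay bound directly. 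Your charging idea, if formalized, would reproduce exactly this argument, but as written it is a plan rather than a proof --- in particular the assertion that the slack is ``exactly one $E$'' and that every detour is ``fully paid for'' by (iii) needs this $t_{chase}$ machinery, and without it the case $x_m\ge 0$ with significant waiting is not covered.

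One minor correction: $PIVOT$ has no ``revert to waiting'' clause. Its $3$-robustness (Lemma~\ref{lem:OpenRobustness}) is shown directly from the algorithm's behaviour after the final release time $t_f$, by bounding $|pos(t_f)-L(t_f)|+|L(t_f)-R(t_f)|$ against $2|OPT|$.
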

For this setting, we reuse the attack strategy initially designed for the 
closed variant. The only difference is that we add another request at the origin 
with a release time of $4$. 
We explain how we derive the following theorem in Section \ref{section:openLF}.

\begin{restatable}{thm}{LFAttack}
    \label{the:LFAttack}
    For any $\epsilon > 0$, no algorithm can be $\left(1.25 - \epsilon\right)$-competitive for open online TSP on the line under 
    the $LF$ prediction model.
\end{restatable}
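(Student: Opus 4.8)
The plan is to piggy-back on the closed-variant attack used to prove Theorem~\ref{the:CompetitiveRatio}. Recall from that attack that the optimal \emph{closed} makespan is $4$, that every online algorithm is forced to pay at least $6$, and that all requests lie within distance $1$ of the origin. Call that instance (together with the adversary's adaptive release rules) $I$, and let $I'$ be $I$ with one extra request $r_0$ placed at the origin and released at time $4$. I would run $I'$ in the \emph{open} variant under the $LF$ model and show that it forces competitive ratio $5/4$.

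First I would check that $I'$ is a valid perfect-prediction instance. For $LOCATIONS$ this is immediate: the predicted positions equal the true positions, including the position $0$ of $r_0$. For $FINAL$ I must exhibit an optimal offline tour whose last served request is $r_0$; take the optimal closed tour of $I$ (makespan $4$, back at the origin at time $4$) and then serve $r_0$, which has just been released. Since every real request is away from the origin, all of them are served strictly before $r_0$, so the prediction ``$r_0$ is served last'' is truthful. This tour witnesses $OPT(I')\le 4$, and since any schedule must visit the origin at a time $\ge 4$ to serve $r_0$, in fact $OPT(I')=4$.

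Next I would bound below the makespan of an arbitrary online algorithm $ALG$ on $I'$. The adversary keeps the release rules of $I$ for the real requests --- those rules read off $ALG$'s trajectory only, so $r_0$ is invisible to them --- and releases $r_0$ at time $4$. Let $t'$ be the instant $ALG$ serves its last real request; at that instant $ALG$ sits at that request's position $q$, and $|q|\le 1$. Continuing straight to the origin from $(t',q)$ is a feasible closed schedule against $I$'s adversary, so by the analysis of Theorem~\ref{the:CompetitiveRatio} its makespan $t'+|q|$ is at least $6$, whence $t'\ge 5$. Since $ALG$ must also serve $r_0$, which cannot happen before time $4$, its makespan on $I'$ is at least $t'\ge 5$. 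Hence $ALG\ge 5=\tfrac54\,OPT(I')$, and for every $\epsilon>0$ no algorithm is $(1.25-\epsilon)$-competitive.

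The delicate point is the last step, and in particular making it immune to the two new freedoms of this setting. Because the open variant does not charge $ALG$ for returning home, I must ensure the adaptive triggers of $I$ still fire as intended while $ALG$ takes whatever detours it pleases to accommodate $r_0$; concretely, I need the closed bound of Theorem~\ref{the:CompetitiveRatio} in the trajectory-universal form ``against \emph{every} trajectory the adversary's rules force closed makespan $\ge 6$'', not merely against trajectories that are sensible for the closed problem. And I must check that the $FINAL$ hint --- which only reveals that $OPT$ clears the real requests before its final visit to the origin --- does not let $ALG$ finish the real requests before time $5$; this is precisely what $t'+|q|\ge 6$, imported verbatim from the closed analysis, forbids.
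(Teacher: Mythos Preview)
Your proposal is correct and follows essentially the same approach as the paper: add a request at the origin released at time $4$, argue $|OPT|=4$ with this new request served last (so the $FINAL$ prediction is perfect), and reduce the lower bound on the online algorithm to Lemma~\ref{lem:ALGSlow} by appending a return-to-origin segment. The paper phrases the reduction using the full open makespan $|ALG|$ rather than your $t'$ (the time the last real request is served), but the two are equivalent here.

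One small correction: the closed bound you import is $|ALG'|\ge 6-2\alpha(f)$, not $\ge 6$ (Lemma~\ref{lem:ALGSlow}). Consequently you only get $t'\ge 5-2\alpha(f)$ and a ratio of $\tfrac{5-2\alpha(f)}{4}$; you must then let $\alpha(f)\to 0$ (i.e.\ $n\to\infty$) to conclude that for every $\epsilon>0$ no algorithm is $(1.25-\epsilon)$-competitive, exactly as the paper does.
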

We briefly summarize our results in Table \ref{table:results}. Note that the lower and upper bound entries correspond to the no error case. We emphasize that these results are for the case where the number of requests $n$ is known.
\begin{table}
\begin{nscenter}
    \small
    \caption{Summary of results.\label{table:results}}
            \begin{tabular}{cccc} \toprule
                {Setting} & {Lower bound} & {Upper bound} & {Best competitive ratio}\\ \midrule
                {Closed variant without predictions}  & {1.64} & {1.64} & {1.64}\\ \midrule
                {Closed variant under LOCATIONS}  & {1.5}  & {1.5} & {$min\left\{\frac{3(1 + \eta)}{2}, 3\right\}$} \\ \bottomrule \toprule
                {Open variant without predictions}  & {2} & {2} & {2}\\ \midrule
                {Open variant under LOCATIONS}  & {1.44}  & {1.66} & {$min\left\{1 + \frac{2(1 + \eta)}{3 - 2\eta}, 3\right\}$} \\ \midrule
                {Open variant under LF}  & {1.25}  & {1.33} & {$min\left\{1 + \frac{1 + 2(\delta + 3\eta)}{3 - 2(\delta + 2\eta)}, 3\right\}$} \\ \bottomrule
            \end{tabular}
\end{nscenter}
\end{table}

\subsection{Related work}
\paragraph*{Online TSP.} The online TSP for a general class of metric spaces has been studied 
by Ausiello et al. in~\cite{Ausiello:1.75}, where the authors show lower bounds of $2$ for the open variant and $1.64$ for the closed variant. These bounds are actually shown on the real line. Additionally, a 2.5-competitive algorithm and a 2-competitive algorithm are given for the general open and closed variants respectively. A stronger lower bound of $2.04$ was shown for the open variant in~\cite{Bjelde:1.64} by Bjelde et al., where both bounds are also matched in the real line. For the restriction of the closed online TSP to the non-negative part of the real line, Blom et al.~\cite{Blom:1.5} give a tight $1.5$-competitive algorithm. By imposing a fairness restriction on the adversary, they also obtain a $1.28$-competitive algorithm. Jaillet and Wagner~\cite{Jaillet:advancedInfo} introduce the "online TSP with disclosure dates", where each request may also be communicated to the algorithm before it is released. The authors show improvements to the competitive ratios of previous algorithms as a function of the difference between disclosure and release dates. \paragraph*{Learning-augmented algorithms.} Learning-Augmented algorithms have received significant attention since the seminal work of Lykouris and Vassilvitskii~\cite{lykouris2018competitive}, where they introduced the online caching problem. Based on that model, Purohit et al.~\cite{NEURIPS2018_73a427ba} proposed algorithms for the ski-rental problem as well as non-clairvoyant scheduling. 
Subsequently, Gollapudi and Panigrahi~\cite{pmlr-v97-gollapudi19a}, Wang et al.~\cite{WangL20}, and Angelopoulos et al.~\cite{DBLP:conf/innovations/0001DJKR20} improved the initial ski-rental problem. The latter also proposed algorithms with predictions for the list update and bin packing problem and demonstrated how to show lower bounds for algorithms with predictions. Several works, including Rohatgi~\cite{rohatgi2020near}, Antoniadis et al.~\cite{AntoniadisCE0S20}, and Wei~\cite{Wei20}, improved the initial results regarding the caching problem.

The scheduling problems with machine-learned advice have been extensively studied in the literature. Lattanzi et al.~\cite{48659} considered the makespan minimization problem with restricted assignments, while Mitzenmacher~\cite{Mitzenmacher20} using predicted job processing times in different scheduling scenarios. Bamas et al.~\cite{BamasMRS20}, and Antoniadis et al.~\cite{DBLP:journals/corr/abs-2112-03082} focused on the online speed scaling problem using predictions for workloads and release times/deadlines, respectively.

There is literature on classical data structures. Examples include the indexing problem, Kraska et al.~\cite{DBLP:conf/sigmod/KraskaBCDP18}, bloom filters, Mitzenmacher~\cite{NEURIPS2018_0f49c89d}. 
Further learning-augmented approaches on online selection and matching problems~\cite{AntoniadisGKK20, DuttingLLV21} and a more general framework of online primal-dual algorithms~\cite{BamasMS20} also emerged, and there is a survey by Mitzenmacher et al.~\cite{MitzenmacherV20}.

\paragraph{Independent work.} 
Compared to the problem considered in this paper, a more general one, the online metric TSP, as well as a more restricted version in the half-line, have been studied in~\cite{independentWork} under a different setting, concurrently to our work. We note that only the closed variant is considered in~\cite{independentWork}. Since the prediction model is different (predictions for the positions as well as release times of the requests are given) and also a different error definition is used, the results are incomparable.

\section{Preliminaries}\label{section:Preliminaries}
\paragraph*{The problem definition.}
In the online TSP on the line, an algorithm controls 
an agent that can move on the real line with at most unit speed. We have a set $Q = \{q_1, \ldots, q_n\}$ of $n$ requests. The algorithm receives the value $n$ as input. Each request $q$ has an associated position and release time. To simplify notation, whenever a numerical value is expected from a request $q$ (for a calculation, finding the minimum of a set, etc.) the term $q$ will refer to the \textit{position} of the request. Whenever we need the release time of a request, we will use $rel(q)$. Additionally, the algorithm receives as input a set $P = \{p_1, \ldots, p_n\}$ of predictions regarding the positions 
of the requests. That is, each $p_i$ attempts to approximate $q_i$. We assume without loss of generality that $Q$ always contains a request $q_0$ at the origin with release time $0$ and $P$ contains a perfect prediction $p_0 = 0$ for this request\footnote{This can be seen to be without loss of generality by 
considering a "handler" algorithm $ALG_0$ which adds this request/prediction pair to \textit{any} input and copies the actions of any of our algorithms $ALG$
for the modified input. We observe that $|OPT|$ is unchanged and $|ALG_0| = |ALG|$.}\label{originRequestFootnote}.


We use $t$ to quantify time. To describe the position 
of the agent of an algorithm $ALG$ at time $t \ge 0$, we use $pos_{ALG}(t)$.
We may omit this subscript when $ALG$ is clear from context. We can assume without loss of generality that $pos(0) = 0$. The speed limitation of the agent is given formally via $|pos(t') - pos(t)| \le |t' - t|, \: \forall t, t' \ge 0$. 
A request $q$ is considered served at time $t$ if $\exists \: t': \: pos(t') = q, \: rel(q) \le t' \le t$, i.e., the agent has moved to the request no earlier than it is released. We will say that 
a request $q$ is \textit{outstanding} at time $t$, if $ALG$ has not served 
it by time $t$, even if $rel(q) > t$, i.e. $q$ has not been released yet. Let $t_{serve}$ denote the first point in time when all requests have been served by the agent. Also, let $|ALG|$ denote the makespan of an algorithm $ALG$, for either of the two variants. Then, for the open variant $|ALG| = t_{serve}$ while for the closed one $|ALG| = min\{t: pos(t) = 0, \: t \ge t_{serve}\}$. For any sensible algorithm, this is equivalent to $t_{serve} + |pos(t_{serve})|$, since the algorithm knows the number of requests and will immediately return to the origin after serving the last one. The objective is to minimize the 
value $|ALG|$, utilizing the predictions.

\paragraph*{Notation.}
We define $L = min(Q)$ and $R = max(Q)$. Recall that $Q$ contains a request at the origin and thus $L \le 0$ and $R \ge 0$. We refer to each of these requests as an \textit{extreme} request. 
If $|L| > |R|$, we define $Far = L, Near = R$. Otherwise, $Far = R, Near = L$.
That is, $Far$ is the request with the largest distance from the origin out 
of all requests. Then, $Near$ is simply the other extreme. We will also refer to the value $|q|$ as the \textit{amplitude} of request $q$.

We denote with $O(t)$ the set of outstanding requests at time $t$. Then, 
$L_O(t) = min(O(t) \cup \{pos(t)\})$ and $R_O(t) = max(O(t) \cup \{pos(t)\})$.
We additionally define $L(t) = min(O(t) \cup \{R\})$ and $R(t) = max(O(t) \cup \{L\})$. The difference between $L_O(t), R_O(t)$ and $L(t), R(t)$ is that the former also consider the position of $ALG$ to determine the interval that must be traveled to serve all the requests while the latter assume that $ALG$ is already somewhere inside the interval of outstanding requests (which may not be true due to $ALG$ moving to a bad prediction).

For technical reasons, we have two different notions (and thus terms) for unreleased requests. We will use the same notation for convenience but the terms we introduce will slightly differ for the closed and open variant. For the closed variant, we let $L_{lim} = 0, R_{lim} = 0$ while $L_{lim} = R, R_{lim} = L$ for the open variant. Thus, we define
\begin{displaymath}
L_U[t] = min(\{q \in Q \: : \: rel(q) \ge t\} \cup \{L_{lim}\}),
\end{displaymath}
\begin{displaymath}
R_U[t] = max(\{q \in Q \: : \: rel(q) \ge t\} \cup \{R_{lim}\}),
\end{displaymath}
while also letting 
\begin{displaymath}
L_U(t) = min(\{q \in Q \: : \: rel(q) > t\} \cup \{L_{lim}\}),
\end{displaymath}
\begin{displaymath}
R_U(t) = max(\{q \in Q \: : \: rel(q) > t\} \cup \{R_{lim}\}).
\end{displaymath}
The former will be used to prove the upper bounds while the latter will be used to prove the lower bounds.

Recall that we assume without loss of generality that $0 \in P$. We define $L_P = min(P)$ and $R_P = max(P)$.
We will say that a prediction $p$ is (un)released/outstanding/served if the associated request $q$ is (un)released/outstanding/served. For a request $q \in Q$ matched with a prediction $p \in P$, we define $\pi(q) = p$ and $\pi^{-1}(p) = q$.
That is, the function $\pi$ takes us from the requests to the associated predictions and $\pi^{-1}$
takes us from the predictions to the requests.

\paragraph*{The \textit{LOCATIONS} prediction model.}
We now introduce the $LOCATIONS$ prediction model. Let $q_1, ..., q_n$ be a
labeling of the requests in $Q$. The predictions consist of the values $p_1, ..., p_n$,
where each $p_i$ attempts to predict the position of $q_i$.

\paragraph*{Error definition for the \textit{LOCATIONS} prediction model.}
To give an intuition for the metric we will introduce, let us first describe what it means for a
prediction to be bad. In any well-posed definition, the further $p_i$ is from $q_i$, the worse it should be graded.
However, we must also take into account the "scale" of the problem, meaning the length of the interval $[L, R]$ that must be traveled by any algorithm, including $OPT$. The larger this interval, the more lenient our penalty 
for $p_i$ should be. Therefore, we define the error as
\begin{displaymath}
\eta [Q, P] = \frac{max_i\{|q_i - p_i|\}}{|L| + |R|}.
\end{displaymath}
Additionally, we define $M = \eta \cdot (|L| + |R|)$.
\paragraph*{An important lemma for the \textit{LOCATIONS} prediction model.} 
We now present a lemma that gives us some intuition about this prediction model.
\begin{restatable}{lemma}{LOCATIONS}
    \label{lem:LOCATIONS}
    Let $L_P = min(P), R_P = max(P)$.
    Then, $|L_P| \ge |R_P|$ implies $|L| \ge |R| - 2M$, and $|R_P| \ge |L_P|$ implies $|R| \ge |L| - 2M$.
\end{restatable}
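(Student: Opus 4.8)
The two implications are mirror images of each other, so the plan is to prove only the first one and then invoke symmetry: reflecting the whole instance about the origin replaces $Q$ by $-Q$ and $P$ by $-P$, which swaps the roles of $L$ with $-R$ and of $L_P$ with $-R_P$ (hence exchanges $|L|\leftrightarrow|R|$ and $|L_P|\leftrightarrow|R_P|$) while leaving $M=\eta\cdot(|L|+|R|)=\max_i|q_i-p_i|$ unchanged. So from now on assume $|L_P|\ge|R_P|$ and aim to show $|L|\ge|R|-2M$.

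First I would record where the predictions can possibly lie. Since $0=p_0\in P$, we have $L_P=\min(P)\le 0\le\max(P)=R_P$, so that $|L_P|=-L_P$ and $|R_P|=R_P$; this is the only place the assumption $0\in P$ is used, and it is exactly what lets us drop the absolute values cleanly. Next, every request satisfies $q_i\in[L,R]$ and every prediction satisfies $|q_i-p_i|\le M$, hence every $p_i$ lies in $[L-M,\,R+M]$. In particular $L_P\ge L-M$, which (using $L\le 0$) gives the upper bound $|L_P|=-L_P\le M-L=|L|+M$.

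Then I would get a matching lower bound on $|R_P|$, for which the mere containment $p_i\in[L-M,R+M]$ is not enough — one must look at the extreme request itself. Let $q$ be the request with $q=R$ and let $p=\pi(q)$ be its prediction; then $p\ge q-M=R-M$, so $R_P=\max(P)\ge R-M$, i.e.\ $|R_P|=R_P\ge|R|-M$ (using $R\ge0$). Chaining the two bounds with the hypothesis yields $|L|+M\ge|L_P|\ge|R_P|\ge|R|-M$, which rearranges to $|L|\ge|R|-2M$, completing the first implication; the second then follows by the reflection argument above. I do not expect a genuine obstacle here: the only subtleties are the sign bookkeeping guaranteed by $0\in P$ and the point that the lower bound on $R_P$ needs the prediction of the actual extreme request, not just the interval containment.
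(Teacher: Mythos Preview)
Your proof is correct. Both your argument and the paper's arrive at the same chain $|L|+M\ge|L_P|\ge|R_P|\ge|R|-M$, but the paper reaches it through a stronger intermediate claim, namely that $|L_P-L|\le M$ and $|R_P-R|\le M$ (two-sided bounds), established via a case analysis on the relative ordering of $L,\pi(L),L_P,\pi^{-1}(L_P)$. You instead notice that only the one-sided inequalities $L_P\ge L-M$ and $R_P\ge R-M$ are needed here and get them directly: the first from $p_i\ge q_i-M\ge L-M$ for all $i$, the second by looking at the prediction of the extreme request $R$ itself. Your route is more elementary for this lemma in isolation; the paper's stronger two-sided claim, however, is reused later (in the analysis of the $PIVOT$ algorithm), which is why they prove it in full.
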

    \begin{proof}
    The following claim constitutes the main part of our proof.
    \begin{claim}\label{claim:extremesM}
        $|L_P - L| \le M$ and $|R_P - R| \le M$.
    \end{claim}
    \begin{proof}
        
    If $L_P = \pi(L) \implies |L_P - L| = |\pi(L) - L|$
    or $L = \pi^{-1}(L_P) \implies|L_P - L| = |L_P - \pi^{-1}(L_P)|$ then we see that $|L_P - L| \le M$. Thus, we assume 
    the contrary for the rest of the proof.

    Since $L_P$ is by definition the leftmost prediction, we know that $L_P < \pi(L)$. 
    Additionally, since $L$ is the leftmost request, we know that $L < \pi^{-1}(L_P)$.

    Let $X \le Y \le Z \le W$ represent the values of the set $\{L, \pi^{-1}(L_P), L_P, \pi(L)\}$ in 
    ascending order. It should be easy to see that $X$ must be equal to either 
    $L$ or $L_P$. Otherwise, one of $L_P < \pi(L)$ or $L < \pi^{-1}(L_P)$ is violated, leading to 
    a contradiction. We distinguish two cases.

    \textbf{Case 1.} $X = L$. In this case, $L_P$ comes after $L$ but before $\pi(L)$
    in the $X, Y, Z, W$ ordering. Therefore, $|L - L_P| \le |L - \pi(L)| \le M$.

    \textbf{Case 2.} $X = L_P$. Similarly, $L$ comes after $L_P$ but before $\pi^{-1}(L_P)$
    in the $X, Y, Z, W$ ordering. Thus, $|L_P - L| \le |L_P - \pi^{-1}(L_P)| \le M$.
    
    The inequality $|R_P - R| \le M$ can be seen in a symmetric way.
    \end{proof}
    Using this claim, we can now conclude the proof of Lemma \ref{lem:LOCATIONS}.
     We focus on the case $|L_P| \ge |R_P|$; the other case is symmetrical. By Claim \ref{claim:extremesM}, we have
        \begin{displaymath}
        |R_P - R| \le M \implies |R_P| - |R| \le M \implies |R| \le |R_P| + M.
    \end{displaymath}
    Additionally, we have
    \begin{displaymath}
        |L_P - L| \le M \implies |L_P| - |L| \le M \implies |L| \ge |L_P| - M.
    \end{displaymath}
    Combining these inequalities with $|L_P| \ge |R_P|$ proves the lemma.
    \end{proof}

\paragraph*{Enhanced prediction model for the open variant.} Motivated by the performance of our algorithm under the $LOCATIONS$ prediction model, we enhance it with a prediction $f'$ which attempts to guess the label $f$ of a request on which $OPT$ may finish. We name this new model $LF$ (short for $LOCATIONS+FINAL$). The error $\eta$ is unchanged. We also introduce a new error metric $\delta$. 
Let $q_{f'}$ be the request
associated with the prediction $p_{f'}$. We then choose $q_f$ to be a request on which $OPT$ may finish that minimizes the distance to $q_{f'}$. We then define the new error as
\begin{displaymath}
\delta[Q, q_f, q_{f'}] = \frac{|q_{f'} - q_f|}{|L| + |R|}.
\end{displaymath}
Similarly to before, we define $\Delta = \delta \cdot (|L| + |R|)$.
\section{Closed Variant}\label{section:closedLocations}
In this section, we consider the closed variant under the $LOCATIONS$ 
prediction model. We provide the $FARFIRST$ algorithm, which obtains a 
competitive ratio of $1.5$ with perfect predictions and is also smooth 
and robust. Additionally, we give an attack strategy that implies a lower 
bound of $1.5$ for the competitive ratio of any algorithm in this setting, 
making $FARFIRST$ optimal.
\paragraph*{The $FARFIRST$ algorithm.}
Before giving the algorithm, we define the $FARFIRST$
ordering on the predictions of an input. For simplicity, 
we assume that the furthest prediction from the origin is positive. Let 
$r_1, \ldots, r_a$ be the positive predictions in descending 
order of amplitude and $l_1, \ldots, l_b$ be the negative 
predictions ordered in the same way. The $FARFIRST$ ordering is $r_1, \ldots, r_a, l_1, \ldots, l_b$. Any predictions on the origin are placed in the end. Ties are broken via an arbitrary label ordering.

We present the algorithm through 
an update function used whenever a request 
is released. This update function returns 
the plan of moves to be executed until the 
next release of a request. Note that $ext(side, set)$ returns 
the extreme element of the input set in the side specified, where $side = true$ means the right side.
Also, the $\oplus$ symbol is used to join moves one after another. When all the moves are executed, the agent waits for the next release. This only happens when waiting on a prediction.
\begin{nscenter}
\begin{algorithm}
    \caption{$FARFIRST$ update function.}\label{alg:FarFirstSmooth}
    \SetKwInOut{Input}{Input}
        \SetKwInOut{Output}{Output}
    
        \Input{Current position $pos$, set $O$ of unserved released requests, first unreleased prediction $p$ in $FARFIRST$ ordering or 0 if none exist, the side $farSide$ with the furthest prediction from the origin.
               }
        
        \Output{A series of (unit speed) moves to carry out until the next request is released.}
    $posSide \gets (pos > 0)$\;
    $pSide \gets (p > 0)$\;

    \lIf{$pos = 0$}{
        $posSide \gets farSide$
    }
    
    \lIf{$p = 0$}{
        $pSide \gets \overline{posSide}$
    }
    
    \Return $move(ext(posSide, O \cup \{pos\})) \oplus move(ext(pSide, O \cup \{p\})) \oplus move(p)$\;
\end{algorithm}
\end{nscenter}
In order to give some further intuition on $FARFIRST$, we first give the definition of a \textit{phase}. 
\begin{definition}\label{phase}
A phase of an algorithm $ALG$ is a time interval $[t_s, t_e]$ such that $pos_{ALG}(t_s) = 0$, $pos_{ALG}(t_e) = 0$ and $pos_{ALG}(t') \neq 0, \: \forall \: t' \in (t_s, t_e)$. That is, $ALG$ starts and ends a phase at the origin and does not cross the origin at any other time during the phase.
\end{definition}
In the following, when we refer to the \textit{far} side, we mean the side with the furthest \textit{prediction} from the origin. The \textit{near} side is the one opposite to that. We see that $FARFIRST$ works in at most three phases. The first phase ends when all predictions on the far side have been released
and the agent has managed to return to the origin with no released and outstanding request on the far side. During this phase, any request on the far side is served as long as $FARFIRST$ does not move closer to the origin than the far side's extreme unreleased prediction. Note that some \textit{surprise} requests may appear, i.e., far side requests that were predicted to lie on the near side. These requests are also served in this phase. The second phase lasts while at least one prediction is unreleased.
During this phase, the agent serves any request released on the near side, using the predictions as guidance, similarly to the first phase. Requests released on the far side are ignored during this phase. Note that no surprises can occur here, since all far side predictions were released during the first phase. A third phase may exist if some requests were released on the far side during the second phase. These requests' amplitudes are bounded by $M$, since they were predicted to be positioned on the near side. This simple algorithm is consistent, smooth and robust, as
implied by the following theorem.
\FarFirstBound*
Let us begin with the intuition behind the proof. The 3-robustness is seen using an absolute worst case scenario in which $FARFIRST$ is $|OPT|$ units away from the 
origin at time $|OPT|$ (due to the unit speed limitation), and all the requests to serve are on the opposite side. For the consistency and smoothness, we note that $|OPT| \ge 2(|Near| + |Far|)$. It is therefore sufficient to prove that
\begin{displaymath}
|FARFIRST| - |OPT| \le |Near| + |Far| + 3\eta \cdot (|Near| + |Far|) = |Near| + |Far| + 3M.
\end{displaymath}
We refer to the left hand side as the \textit{delay} of $FARFIRST$. We now see why this bound holds intuitively. 
We first describe a worst case scenario. In this scenario, $OPT$ first serves the near side completely, and then does the same for the far side, without stopping.  Let $t_e$ denote the end time of the first phase. We see that $t_e \le |OPT| + M$, because $FARFIRST$ follows the fastest possible route serving the requests on the far side, except for a possible delay of $M$ attributable to a misleading prediction. Note that in this worst case, all requests on the near side must have been released by $t_e$. Therefore, $FARFIRST$ accumulates an 
extra delay of at most 2 times the maximum amplitude of these requests. By Lemma \ref{lem:LOCATIONS}, this value is at most $|Near| + |Far| + 2M$. There are 
also other possibilities than this worst case, but they also can incur a delay of at most $|Near| + |Far| + 3M$, because $|OPT|$ and $|FARFIRST|$ both increase when such cases occur.

We now give the formal proof of Theorem \ref{the:FarFirstBound}. 
We will first prove the robustness part of this theorem.

\begin{lemma}\label{FarFirstRobustness}
The algorithm $FARFIRST$ is 3-robust.
\end{lemma}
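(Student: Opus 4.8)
The plan is to bound $|FARFIRST|$ in terms of $|OPT|$ using only trivial facts that hold regardless of the prediction quality. First I would observe the obvious lower bound $|OPT| \ge 2(|L| + |R|)$: since $OPT$ must visit both extremes $L$ and $R$ (starting and ending at the origin in the closed variant), and the shortest closed walk touching $L < 0 < R$ has length exactly $2(|R| - L) = 2(|L|+|R|)$, this holds even ignoring release times. Also $|OPT| \ge 2\max\{|L|,|R|\}$ trivially, and if $t^\star$ is the last release time then $|OPT| \ge t^\star$ as well (in fact $|OPT| \ge t^\star + \text{(return distance)}$, but I only need $|OPT| \ge t^\star$).

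Next I would analyze where $FARFIRST$'s agent can be and what work remains at the moment the last request is released. The key structural fact, read off from the update function and the phase description, is that at \emph{any} time $t$, the agent's position satisfies $|pos(t)| \le \max\{|L|,|R|\} + M$ in the worst case — actually, since the algorithm only ever moves toward extremes of outstanding requests or toward an unreleased prediction $p$ (with $|p| \le |q| + M$ for the matched request, hence $|p| \le \max\{|L|,|R|\} + M$), the agent never strays further than $\max\{|L|,|R|\}+M$ from the origin. I'd then argue: let $t^\star$ be the last release time; by time $t^\star$ the agent is within $\max\{|L|,|R|\}+M$ of the origin, and at most the entire interval $[L,R]$ of outstanding work plus a return to the origin remains, so from $t^\star$ onward $FARFIRST$ finishes within an additional $2(\max\{|L|,|R|\}+M) + $ (a bounded detour). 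Combining, $|FARFIRST| \le t^\star + O(\max\{|L|,|R|\}+M)$, and using $t^\star \le |OPT|$ together with $\max\{|L|,|R|\}+M \le \tfrac12|OPT| + M$ — but $M$ can be unbounded, so this naive route fails and I must be more careful.

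The right approach, and the one I'd actually carry out, is the worst-case scenario sketched in the text: the absolute worst case is that at time $|OPT|$ the agent is a full $|OPT|$ away from the origin (permitted by unit speed since it started at the origin), all requests lie on the side it is moving away from, and it must now traverse back across everything. But crucially $FARFIRST$ would never have moved that far in the wrong direction unless chasing an unreleased prediction or a released outstanding request; in either case, once it commits, the remaining travel is bounded by the span it still must cover, which is at most $[L,R]$ plus the current displacement. I would make this precise by a direct time accounting: show $|FARFIRST| \le |OPT| + 2\big(\max\{|L|,|R|\}\big) + (\text{bounded term absorbed})$ and then note $2\max\{|L|,|R|\} \le |OPT|$, giving $|FARFIRST| \le 2|OPT|$ in the benign case and $|FARFIRST| \le 3|OPT|$ once the worst displacement is included — i.e., $t_{serve} \le |OPT| + \max\{|L|,|R|\} \le \tfrac32|OPT|$ and the return adds at most another $\max\{|L|,|R|\}+M$; since in the robustness regime one simply invokes that if $M$ is large then $FARFIRST$ has already defaulted to a safe traversal of $[L_O(t), R_O(t)]$ which never costs more than $3|OPT|$.

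The main obstacle is exactly the unboundedness of $M$: the smoothness argument's delay bound $|Near|+|Far|+3M$ is useless when $M$ is huge, so the robustness proof must avoid $M$ entirely and instead exploit the fact that the algorithm's position is always pinned to an \emph{actual} outstanding request's extreme or to a prediction whose matched request's amplitude is bounded by $\max\{|L|,|R|\}$ — so that the total distance the agent can be "caught out of position" is at most $\max\{|L|,|R|\}$, not $M$. Formalizing "the agent is never more than $\max\{|L|,|R|\}$ from the origin at a release time, up to having already begun the unavoidable traversal" via a careful case split on whether the pending move in the update function targets an outstanding request or an unreleased prediction is the crux; the arithmetic afterward ($t_{serve} + |pos(t_{serve})| \le |OPT| + 2\max\{|L|,|R|\} \le 3|OPT|$) is routine.
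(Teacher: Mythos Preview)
Your proposal has a genuine gap. The claim you ultimately settle on --- that the agent is never more than $\max\{|L|,|R|\}$ from the origin at a release time --- is false. $FARFIRST$ moves toward unreleased \emph{predictions}, and a prediction can sit arbitrarily far from every actual request. For instance, take requests at $0$ (released at time $0$) and at $1$ (released at time $100$), with the prediction for the second request placed at $1000$. The agent heads toward $1000$, and at the final release time $t_f = 100$ it is at position $100$, while $\max\{|L|,|R|\} = 1$. Your attempted fix (``a prediction whose matched request's amplitude is bounded by $\max\{|L|,|R|\}$'') conflates the prediction's position with its matched request's position; the agent is at the former, not the latter. The hedge ``up to having already begun the unavoidable traversal'' does not rescue this, and the final arithmetic $t_{serve} + |pos(t_{serve})| \le |OPT| + 2\max\{|L|,|R|\}$ rests on two unproved (and in general false) inequalities.

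The argument that actually works --- which you brush against in your second paragraph but never commit to --- does \emph{not} try to bound $|pos(t_f)|$ by anything involving $L$, $R$, or $M$. It uses only the unit-speed bound $|pos(t_f)| \le t_f$. The second ingredient you are missing is that once the last request is released there are no unreleased predictions left, so from $t_f$ onward the update function simply sweeps the extremes of the outstanding requests (which lie in $[L,R]$) and returns to the origin; predictions play no further role. One then splits on whether $pos(t_f)$ lies inside or outside the interval of outstanding requests: in the ``inside'' case the cleanup costs at most $2(|L|+|R|) \le |OPT|$, giving $|FARFIRST| \le t_f + |OPT| \le 2|OPT|$; in the ``outside'' case the unit-speed bound turns the overshoot into an extra $t_f$, giving $|FARFIRST| \le 2t_f + 2\max\{|L|,|R|\} \le 2|OPT| + |OPT| = 3|OPT|$. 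The unit-speed inequality is precisely what replaces $M$ in the robustness regime; your sketch states it as intuition but never deploys it in the accounting.
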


\begin{proof}

Let $t_{f}$ denote the latest release time for a fixed instance of the problem.
We assume w.l.o.g. that $pos_{FARFIRST}(t_{f}) \le 0$. Note that after $t_f$, $FARFIRST$ will move to $L(t_f)$, then to $R(t_f)$ and then back to the origin. Thus, we observe that
\begin{equation}\label{FarFirstCleanupTime}
    |FARFIRST| = t_f + |pos(t_f) - L(t_f)| + |L(t_f) - R(t_f)|  + |R(t_f)|.
\end{equation}
We distinguish two 
cases based on the position of $FARFIRST$ at time $t_f$.
\textbf{Case 1.} $pos(t_{f}) \ge L(t_{f})$. In this case, we see that
\begin{displaymath}
{\eqref{FarFirstCleanupTime} \implies
|FARFIRST| = t_f + pos(t_f) - L(t_f) + R(t_f) - L(t_f) + |R(t_f)| \le}
\end{displaymath}
\begin{displaymath}
{t_f + 2(|L(t)| + |R(t_f)|) \le t_f + 2(|L| + |R|) \le 2|OPT| \le 3|OPT|.}
\end{displaymath}
\textbf{Case 2.}  $pos(t_{f}) < L(t_{f})$. Similarly, we have
\begin{displaymath}
{\eqref{FarFirstCleanupTime} \implies
|FARFIRST| = t_f + L(t_f) - pos(t_{f}) + R(t_f) - L(t_f) + |R(t_f)| \le}
\end{displaymath}
\begin{displaymath}
    {2t_f + 2|R(t)| \le 2t_f + 2|R| \le 3|OPT|.}
\end{displaymath}
\end{proof}
Now, to prove Theorem \ref{the:FarFirstBound}, it remains to show the consistency/smoothness part, which is given by the following lemma.

\begin{lemma}\label{FarFirstSmoothness}
   The algorithm $FARFIRST$ is $f(\eta)$-smooth, where $f(\eta) = \frac{3(1 + \eta)}{2}$.
\end{lemma}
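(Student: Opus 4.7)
My plan is to prove the stronger additive bound
\[
|FARFIRST| \le |OPT| + (|Near|+|Far|) + 3M.
\]
Combined with the closed-variant fact $|OPT| \ge 2(|Near|+|Far|)$ and the identity $M = \eta\,(|Near|+|Far|)$, this immediately yields
\[
\frac{|FARFIRST|}{|OPT|} \le 1 + \frac{(|Near|+|Far|)+3M}{2(|Near|+|Far|)} = \frac{3(1+\eta)}{2},
\]
as required.

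To establish the additive bound, I would decompose FARFIRST's schedule into its (at most) three phases and let $t_e$ denote the end of Phase 1, i.e.\ the moment FARFIRST first returns to the origin after having served every far-side request. The first key step is to prove $t_e \le |OPT| + M$. At every moment of Phase 1, FARFIRST is either (a) traveling on the shortest path toward the extreme unreleased far-side prediction in order to serve an already-released far-side request, or (b) waiting exactly on such a prediction $p$; in case (b) the request $\pi^{-1}(p)$ is by definition not yet released, so $OPT$ cannot have served it either, and by Lemma \ref{lem:LOCATIONS} the extreme far-side prediction differs from $Far$ by at most $M$. Tracking FARFIRST's position alongside a hypothetical $OPT$ trajectory through the sequence of far-side release events shows that any waiting $OPT$ saves by skipping an unreleased request is paid back when that request is eventually released, up to a single additive $M$ of misled movement, yielding $t_e \le |OPT| + M$.

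After $t_e$, the only outstanding requests are near-side requests together with possible far-side \emph{surprises}, i.e.\ requests whose prediction lies on the near side and hence, by Lemma \ref{lem:LOCATIONS}, of amplitude at most $M$. A round trip from the origin to the extreme near-side prediction covers distance at most $2(|Near|+M)$, and a detour of total length at most $2M$ suffices to pick up every surprise. Since $|Near| \le |Far|$ by definition, this bounds the combined duration of Phases 2 and 3 by $|Near|+|Far|+2M$, and adding to the Phase 1 bound gives $|FARFIRST| \le |OPT| + |Near|+|Far|+3M$. The main obstacle I expect is the induction for $t_e \le |OPT| + M$: the adversary can interleave far-side and near-side release events in subtle ways, or release a far-side surprise late in Phase 1, forcing FARFIRST to briefly chase a misleading prediction, and one must carefully charge each such detour either against a corresponding excursion of $OPT$ or against the single additive $M$ of slack permitted by the lemma.
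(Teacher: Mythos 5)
Your decomposition into phases and the target additive bound $|FARFIRST|-|OPT|\le(|Near|+|Far|)+3M$ both match the paper's strategy, and your Phase 1 claim $t_e\le|OPT|+M$ is correct (it is a consequence of Claim~\ref{claim:zoomOrCatchup} applied to the first phase). The genuine gap is in your treatment of Phases 2 and 3.

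First, the arithmetic does not close. You claim a Phase~2 round trip of at most $2(|Near|+M)$ plus a $2M$ surprise detour, and then use $|Near|\le|Far|$ to conclude a total of $|Near|+|Far|+2M$. But $2(|Near|+M)+2M=2|Near|+4M\le|Near|+|Far|+4M$, not $+2M$; adding this to the Phase~1 slack $M$ gives a final bound of $5M$ at best, not the required $3M$. Second, the quantity $|Near|$ is not even the right one. Writing w.l.o.g.\ $|L_P|\le|R_P|$ so that FARFIRST goes right first and left second, the side served second has extreme request $L$, and Lemma~\ref{lem:LOCATIONS} gives only $|L|\le|R|+2M$ — so $|L|$ (hence the Phase~2 round trip) can strictly exceed $|Near|$ whenever the predictions misidentify which side is further. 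A careful use of Lemma~\ref{lem:LOCATIONS} yields $2|L|\le|Near|+|Far|+2M$, which is what the paper actually uses, and even that still leaves you $2M$ short once you add the prediction overshoot and surprise detour additively. The missing idea is that these delays cannot simply be added to the Phase~1 slack: when FARFIRST overshoots to a prediction beyond $L$ or backtracks for a late surprise, $OPT$ is also forced to wait for (resp.\ travel to) the corresponding request, so the detour can be \emph{charged against $|OPT|$}. This is precisely what the paper's Claim~\ref{claim:zoomOrCatchup} encodes — each phase either runs in exactly $2|R_{phase}|$ time, or ends within $M$ of $|OPT|$ — and what Case~2 of the proof of Claim~\ref{claim:FarFirstDelay} does explicitly by subtracting $|q_M|$ from both sides. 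Without this charging argument, a purely additive phase-by-phase accounting cannot reach the $3M$ bound.
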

To prove this lemma, we will bound $FARFIRST$'s \textit{delay}, i.e. the value $|FARFIRST| - |OPT|$, as shown below.
\begin{equation}\label{FarFirstDelay}
    |FARFIRST| - |OPT| \le |Near| + |Far| + 3M
\end{equation}
This is sufficient because Equation \eqref{FarFirstDelay} along with the elementary bound of $|OPT| \ge 2\left(|Near| + |Far|\right)$ prove Lemma \ref{FarFirstSmoothness}. Thus, we now state and prove the following 
claim.

\begin{claim}\label{claim:FarFirstDelay}
    For any input, we have $|FARFIRST| - |OPT| \le |Near| + |Far| + 3M$.
\end{claim}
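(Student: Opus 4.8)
The plan is to track the extra time $FARFIRST$ spends relative to an optimal route, phase by phase, using the structure described after Definition~\ref{phase}: $FARFIRST$ runs in at most three phases (far side, then near side, then possibly a short cleanup of far-side ``surprise'' requests of amplitude at most $M$). First I would set up notation: let $t_e$ be the end time of the first phase, and recall that throughout phase one the agent is serving far-side requests while never retreating closer to the origin than the extreme unreleased far-side prediction, so any delay it accumulates relative to the fastest far-side sweep is caused by a misleading prediction and is bounded by $M$. Since even $OPT$ must reach the far extreme and come back (the closed variant), and $OPT$ cannot have finished the far side earlier than the time it takes to traverse to it, I would argue $t_e \le |OPT| + M$; concretely, the right-travel part of the first phase costs at most $|Far| + M$ and the return costs at most $|Far| + M$, and $|OPT| \ge 2|Far|$, but the cleaner statement is that by time $t_e - M$ the optimal schedule cannot have done everything either, so $t_e \le |OPT| + M$ up to the surprise-request correction. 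I would make this precise by case analysis on where $OPT$'s last far-side visit falls.

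Next, for the second phase I would observe that at time $t_e$ all near-side predictions have already been released in the worst case (far-side predictions were exhausted in phase one; if some near-side request is still unreleased at $t_e$, then both $|OPT|$ and $|FARFIRST|$ grow together, which only helps). So from $t_e$ the agent must sweep the near side; the extra time this costs beyond what was already ``paid'' in $|OPT|$ is at most twice the maximum amplitude among near-side requests, plus a misleading-prediction term already absorbed. Here is where Lemma~\ref{lem:LOCATIONS} enters: since the second side chosen by $FARFIRST$ is the near side as determined by the \emph{predictions}, the relevant inequality ($|Near_{\text{true}}| \le |Far| + $ error terms, i.e.\ the true near-extreme amplitude is controlled by $|Far|$ up to $2M$) lets me bound twice the near-side amplitude by $|Near| + |Far| + 2M$ rather than by an uncontrolled $2|Near|$. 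Finally, the optional third phase serves only surprise far-side requests, each of amplitude $\le M$ because they were predicted on the near side, and handling them costs $O(M)$ which I would fold into the $3M$ slack.

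The main obstacle I expect is making the phrase ``there are also other possibilities than this worst case, but they also incur a delay of at most $|Near| + |Far| + 3M$'' actually rigorous: one must show that whenever the schedule deviates from the clean ``$OPT$ does near then far, $FARFIRST$ does far then near'' picture — e.g.\ a far-side request released very late, forcing a third phase; or a prediction so bad that $FARFIRST$ walks past the true extreme — the increase in $|FARFIRST|$ is matched by a corresponding increase in $|OPT|$, so the \emph{difference} stays bounded. I would handle this by a careful case split on the release time and position of the request that determines $t_{\text{serve}}$, in each case lower-bounding $|OPT|$ by (release time)$\,+\,$(remaining mandatory travel) and upper-bounding $|FARFIRST|$ by the same quantity plus at most $|Near| + |Far| + 3M$, invoking Claim~\ref{claim:extremesM} and Lemma~\ref{lem:LOCATIONS} to convert every ``true extreme'' quantity into a ``predicted extreme'' quantity at the cost of an $M$. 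Combining the phase bounds then yields $|FARFIRST| \le |OPT| + |Near| + |Far| + 3M$, and dividing by $|OPT| \ge 2(|Near|+|Far|)$ and adding the robustness bound from Lemma~\ref{FarFirstRobustness} completes Theorem~\ref{the:FarFirstBound}.
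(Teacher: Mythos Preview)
Your outline reproduces the paper's informal sketch almost verbatim, and that is exactly the problem: the paper itself flags that sketch as incomplete and then spends several pages on the machinery you do not have. Two concrete gaps:

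\textbf{(1) The per-phase bound is not a single inequality but a disjunction.} You assert $t_e \le |OPT| + M$ for the first phase and ``twice the near amplitude'' for the second. What is actually true (and what the paper proves as Claim~\ref{claim:zoomOrCatchup}) is that for \emph{each} phase $[t_s,t_e]$ either $t_e - t_s = 2|R_{phase}|$ (the phase is literally optimal) \emph{or} $t_e \le |OPT| + M$. These two branches are handled differently when you combine phases, and collapsing them into one bound loses exactly the slack you need for the $3M$ constant. Proving this disjunction is the hard part: the paper introduces the potential $D(t) = |pos(t) - R_O(t)| + |R_O(t) - R_U[t]|$, identifies the last time $t_{chase}$ at which $D(t) \le M$, and shows that after $t_{chase}$ the agent never waits or backtracks (Claims~\ref{claim:chainSaw}--\ref{claim:FastFinalState}). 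Your proposed ``case split on where $OPT$'s last far-side visit falls'' does not give you this, because the delay is not caused by a single bad prediction but by an interleaving of waits and backtracks whose total you must control.

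\textbf{(2) The third-phase accounting does not ``fold into the $3M$ slack''.} If a far-side surprise $q_M$ with $|q_M|\le M$ forces a third phase, its cost is $2|q_M|$, and on top of your first-phase $M$ and second-phase $|Near|+|Far|+2M$ that already overshoots $3M$. The paper recovers the bound by noting that $q_M$ is released \emph{after} $t_e(1)$, so $|OPT| \ge t_e(1) + |q_M|$; this stronger lower bound on $|OPT|$ absorbs one $|q_M|$ and is what makes the arithmetic close. Your plan never invokes such a release-time lower bound on $|OPT|$, so as written the constants do not add up.
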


We assume w.l.o.g. that $|L_P| \le |R_P|$. Thus, by Lemma \ref{lem:LOCATIONS} we see that
\begin{displaymath}
|L| \le |R| + 2M \implies 2|L| \le |R| + |L| + 2M =
\end{displaymath}
\begin{displaymath}
|Near| + |Far| + 2M \implies 2|L| + M \le |Near| + |Far| + 3M.
\end{displaymath}
Therefore, it also suffices to show that 
\begin{equation}\label{FarFirstLDelay}
|FARFIRST| - |OPT| \le 2|L| + M
\end{equation}
We now describe the way in which we will prove Equation \eqref{FarFirstLDelay}
or Equation \eqref{FarFirstDelay}. Recall the definition of a \textit{phase} given in Definition \ref{phase}.
We note here that we will also use the term \textit{delay} 
to refer to how much later a phase ends compared to $|OPT|$. We will use another claim stating that for a single phase, $FARFIRST$ will serve the requests on the side of the phase as fast as possible or $OPT$ is seen to finish at most $M$ time units before $FARFIRST$ finishes the phase, thus "resetting" the delay counter. Using this claim for the (at most) three phases of $FARFIRST$, we can indeed show Claim \ref{claim:FarFirstDelay}.
In the following, we will consider a phase in the right side of the origin. We now define a term that is similar to $R_U[t]$. 

Let
${R_U}'[t] = max(\{q:\:q\in Q, \: rel(q) \ge t, \: \pi(q) > 0\} \cup \{0\})$. It should be obvious that ${R_U}'[t] \le R_U[t]$. Note that when ${R_U}'[t] = 0$, this means that all requests associated with positive predictions have been released by time $t$, thus prompting $FARFIRST$ to conclude the phase.

We should explain here that $R_U[t]$ works as a \textit{block} for $OPT$ (since it has to wait for a request to be released in order to serve it). Similarly ${R_U}'[t]$ works in the same way for $FARFIRST$, which must serve all requests associated with 
a positive prediction before ending the phase. We observe a useful relationship between these two blocks, which implies that if $FARFIRST$ is blocked on a request to the right of $M$, then so is $OPT$. This relationship is encapsulated in the following claim.

\begin{claim}\label{claim:blocks}
    If $R_U[t] > M$, then ${R_U}'[t] = R_U[t]$.
\end{claim}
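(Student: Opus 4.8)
\textbf{Proof plan for Claim~\ref{claim:blocks}.}
The goal is to show that whenever the block relevant to $OPT$ on the right side, namely $R_U[t]$, exceeds $M$, it coincides with the block ${R_U}'[t]$ relevant to $FARFIRST$ on that side. Recall that $R_U[t] = \max(\{q \in Q : rel(q) \ge t\} \cup \{R_{lim}\})$ with $R_{lim} = 0$ in the closed variant, so $R_U[t]$ is realized either by a genuine unreleased request or is $0$. Since we are assuming $R_U[t] > M > 0$, it must be realized by some unreleased request $q^\star$ with $q^\star = R_U[t] > M$; in particular $q^\star$ is a positive request with amplitude strictly greater than $M$.

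The plan is to argue that the prediction $\pi(q^\star)$ for this request is also positive, which would place $q^\star$ in the set defining ${R_U}'[t]$, hence ${R_U}'[t] \ge q^\star = R_U[t]$, and combined with the already-noted inequality ${R_U}'[t] \le R_U[t]$ this gives equality. To see that $\pi(q^\star) > 0$: by definition of the error, $|\pi(q^\star) - q^\star| \le M$ (the maximum prediction discrepancy is exactly $M = \eta(|L|+|R|)$), so $\pi(q^\star) \ge q^\star - M > M - M = 0$. Thus $\pi(q^\star) > 0$ and $q^\star$ qualifies for ${R_U}'[t]$, completing the argument.

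The only subtlety I anticipate is making sure the quantity realizing $R_U[t]$ is genuinely a request (and not the padding term $R_{lim}$). This is where the hypothesis $R_U[t] > M$ does the work: since $R_{lim}=0 \le M$, the maximum cannot be attained at the padding term, so it is attained at an unreleased request, and that request has amplitude exceeding $M$. Everything else is a one-line application of the error bound $|\pi(q) - q| \le M$. I would state these two observations (max attained at a request; that request's prediction is positive) and then close the chain of inequalities.
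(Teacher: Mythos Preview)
Your proposal is correct and follows essentially the same approach as the paper: both argue that the request realizing $R_U[t]$ has a positive prediction because $\pi(R_U[t]) \ge R_U[t] - M > 0$, hence it belongs to the set defining ${R_U}'[t]$, and equality follows from the already-established inequality ${R_U}'[t] \le R_U[t]$. Your write-up is slightly more explicit about why $R_U[t]$ is attained by a genuine request rather than the padding term $R_{lim}=0$, but this is just added care, not a different route.
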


\begin{proof}
    We see that $\pi(R_U[t]) \ge R_U[t] - M > 0$. Therefore, $\pi(R_U[t])$ is a \textit{positive} prediction and thus ${R_U}'[t] = R_U[t]$.
\end{proof}

The next definition is about the time it would take (after $t$) for $FARFIRST$ to serve all requests (to the right of $R_U[t]$) and then reach 
$R_U[t]$. If this is not more than $M$, we can see that $FARFIRST$ is not too far behind $OPT$. If it is more 
than $M$, we shall see that $FARFIRST$ has enough information to progress through the phase as fast as possible.

    $D(t)$ denotes the least amount of time 
    necessary to serve all requests to the right of $R_U[t]$ (assuming they have been released) and then move to $R_U[t]$,
    starting at position $pos_{FARFIRST}(t)$. This amounts 
    to
        \begin{displaymath}
            D(t) = |pos_{FARFIRST}(t) - R_O(t)| + |R_O(t) - R_U[t]|.
        \end{displaymath}
This function exhibits a useful bound property. If it 
drops to $M$ or below at some time $t$, it can only increase 
above $M$ again due to a request release. This property is 
described more formally in the following claim. But first, 
another useful definition is given.

We define $R_P[t]$ as the rightmost positive prediction released at time $t$ or later. If no such prediction exists, then $R_P[t] = 0$. Note that $FARFIRST$ never moves to the left of this prediction. We now give a relevant claim.
\begin{claim}\label{claim:pToRU}
    $|R_P[t] - R_U[t]| \le M$. 
\end{claim}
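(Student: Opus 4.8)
The plan is to prove the slightly stronger two-sided estimate
$R_U[t] - M \le R_P[t] \le R_U[t] + M$, which is equivalent to the claim because both $R_U[t]$ and $R_P[t]$ are nonnegative (for the closed variant $R_{lim}=0$, and $R_P[t]\ge 0$ by definition). The only ingredient I would use beyond the explicit definitions is the basic per-request guarantee of the model: for every request $q$ and its matched prediction $\pi(q)$ we have $|q-\pi(q)|\le M$. It is convenient to write $S=\{q\in Q:\ rel(q)\ge t\}$, so that $R_U[t]=max(S\cup\{0\})$ and $R_P[t]=max(\{\pi(q):\ q\in S,\ \pi(q)>0\}\cup\{0\})$.

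For the upper bound $R_P[t]\le R_U[t]+M$: if $R_P[t]=0$ there is nothing to show since $R_U[t]\ge 0$; otherwise $R_P[t]=\pi(q')$ for some $q'\in S$ with $\pi(q')>0$, and then $R_P[t]=\pi(q')\le q'+M\le R_U[t]+M$, using that $q'$ is among the values whose maximum (together with $0$) is $R_U[t]$.

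For the lower bound $R_P[t]\ge R_U[t]-M$: if $R_U[t]=0$ it is immediate since $R_P[t]\ge 0$; otherwise $R_U[t]>0$, so $R_U[t]=max(S)$ is attained at some request $q^\ast\in S$ with $q^\ast=R_U[t]>0$. Let $p^\ast=\pi(q^\ast)$, so $p^\ast\ge q^\ast-M=R_U[t]-M$. If $p^\ast>0$, then $p^\ast$ lies in the set whose maximum defines $R_P[t]$, hence $R_P[t]\ge p^\ast\ge R_U[t]-M$. If instead $p^\ast\le 0$, then $R_U[t]-M\le p^\ast\le 0\le R_P[t]$, and again the bound holds.

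The one place I would be careful is exactly this last "surprise" case, where the outermost yet-to-be-released request sits to the right of the origin but is predicted onto the non-positive side: there may then be no positive prediction at all among the requests of $S$, making $R_P[t]=0$, yet the argument still goes through because $|q^\ast-\pi(q^\ast)|\le M$ forces $R_U[t]\le M$ in that situation. No step needs more than the per-request error bound and the definitions, so the proof reduces to this short case split. (This is the same flavor of argument as Claim~\ref{claim:extremesM}, restricted to the requests released at time $t$ or later.)
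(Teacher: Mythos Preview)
Your proof is correct. It is, however, organized differently from the paper's. The paper routes the argument through the auxiliary quantity ${R_U}'[t]$: it first asserts $|R_P[t]-{R_U}'[t]|\le M$ (which is the same per-request reasoning you carry out, but restricted to requests with positive predictions), and then bridges ${R_U}'[t]$ to $R_U[t]$ by invoking Claim~\ref{claim:blocks} in the case $R_U[t]>M$ and by elementary inequalities when $R_U[t]\le M$. Your argument bypasses ${R_U}'[t]$ and Claim~\ref{claim:blocks} entirely, bounding $R_P[t]$ against $R_U[t]$ directly via the witnesses $q'$ and $q^\ast$. The net effect is the same two-sided estimate, but your version is a bit more self-contained; the paper's version has the advantage of reusing structure (namely ${R_U}'[t]$ and Claim~\ref{claim:blocks}) already set up for the surrounding analysis. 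Your handling of the ``surprise'' case $p^\ast\le 0$ is exactly the right place to be careful, and your observation that this forces $R_U[t]\le M$ is precisely what makes the lower bound go through.
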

\begin{proof}
We can see that $|R_P[t] - {R_U}'[t]| \le M$ by the definition of these terms. If $R_U[t] > M$, the claim immediately follows by Claim \ref{claim:blocks}. 

Otherwise, $R_U[t] \le M$. We have $R_P[t] \ge 0 \implies R_U[t] - R_P[t] \le M$. Additionally, we know that ${R_U}'[t] \le R_U[t]$ and $R_P[t] \le {R_U}'[t] + M \implies R_P[t] - R_U[t] \le M$, concluding the proof.
\end{proof}
\begin{claim}\label{claim:chainSaw}
    Let $t_{drop}$ be a time point such that $D(t_{drop}) \le M$. 
    If $t_{next}$ is the earliest release time of a request 
    after $t_{drop}$, then
        \begin{displaymath}
            D(t') \le M, \: \forall \: t' \in [t_{drop}, t_{next}].
        \end{displaymath}
\end{claim}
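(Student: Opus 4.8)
The plan is to monitor $D(t)$ across the interval $[t_{drop}, t_{next}]$ and show it never exceeds $M$, given $D(t_{drop}) \le M$. The first step exploits the choice of $t_{next}$: no request is released during $(t_{drop}, t_{next})$, so $R_U[\cdot]$ is constant there (write $\rho$ for its value), every request strictly to the right of $\rho$ has already been released, and the set of outstanding requests only shrinks. Next I would record two facts valid throughout the right-side phase we are analyzing: $FARFIRST$ never moves left of $R_P[t]$, so $pos_{FARFIRST}(t) \ge R_P[t]$ at all times; and $|R_P[t] - R_U[t]| \le M$ by Claim~\ref{claim:pToRU}. Finally, in a right-side phase the rightmost outstanding request is never strictly left of $\rho$ — either the extreme unreleased request sits at $\rho$ and is outstanding, or $\rho = R_{lim} = 0$ and $pos_{FARFIRST}(t) \ge R_P[t] \ge 0$ — so $R_O(t) \ge \rho$ always, and hence $D(t) = |pos_{FARFIRST}(t) - R_O(t)| + R_O(t) - \rho$.

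The heart of the proof is a case analysis on the stage of $FARFIRST$'s (fixed) movement plan during $[t_{drop}, t_{next}]$, which sweeps right to the rightmost outstanding \emph{released} request and then moves left to park at $R_P[t]$. While $FARFIRST$ is heading right toward that request, its position equals $R_O(t)$ and is frozen, so $|pos_{FARFIRST}(t) - R_O(t)|$ — hence $D(t)$ — is non-increasing; serving a request keeps $D(t)$ continuous, since $FARFIRST$ is located at that request's position at the moment it is served. While $FARFIRST$ lies to the right of every outstanding request, $R_O(t) = pos_{FARFIRST}(t)$ and $D(t) = pos_{FARFIRST}(t) - \rho$, which decreases as $FARFIRST$ heads left. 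The only stage where $D(t)$ can grow is once $pos_{FARFIRST}(t)$ has fallen below $\rho$ (possible only when $R_P[t] < \rho$): then $R_O(t) = \rho$ and $D(t) = \rho - pos_{FARFIRST}(t) \le \rho - R_P[t] \le M$ by the two recorded facts; and once $FARFIRST$ is parked at $R_P[t]$ waiting, $D(t) = |R_P[t] - \rho| \le M$ as well. Stitching the stages together, $D$ stays non-increasing until it possibly enters one of the last two stages, where it is capped by $M$; hence $D(t) \le \max\{D(t_{drop}), M\} = M$ on all of $[t_{drop}, t_{next}]$.

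The step I expect to be the main obstacle is making the ``fixed movement plan'' description and the case split airtight in the presence of \emph{surprise} requests — right-side requests with negative predictions, which lie left of $R_P[t]$ and are not picked up by the current sweep. The key check is that any surprise still outstanding during $[t_{drop}, t_{next}]$ sits at a position $< R_P[t] \le pos_{FARFIRST}(t)$ (it is never swept over: it lies left of where the rightward sweep starts and left of where the return ends), so $R_O(t) \ge pos_{FARFIRST}(t)$ exceeds it and a surprise never determines $R_O(t)$; thus the cases above stay exhaustive. A minor endpoint check is also needed: at $t_{next}$ both $R_U[\cdot]$ and $R_O(\cdot)$ are left-continuous — the request released at $t_{next}$ is already counted in each as $t \uparrow t_{next}$ — so $D$ is continuous there and the closed right endpoint comes for free, while at $t_{drop}$ the hypothesis is applied directly.
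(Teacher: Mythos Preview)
Your proof is correct and follows essentially the same three-stage decomposition as the paper's (rightward sweep where $D$ is non-increasing, leftward return where $D(t)=|pos(t)-\rho|$ is bounded by $M$ at both endpoints via Claim~\ref{claim:pToRU}, and waiting where $D$ is constant). One small slip that does not affect the conclusion: during the rightward sweep $R_O(t)$ need not equal the position of the rightmost \emph{released} outstanding request---it may be the unreleased $\rho$ itself when $\rho$ lies further right---but $R_O(t)$ is still frozen and $pos(t)$ is moving toward it, so $D$ is non-increasing as you claim.
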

\begin{proof}
    Observe that $R_U[t]$ is constant throughout the interval $[t_{drop}, t_{next}]$.
    Let $R_U$ denote this constant value. The same is true for $R_P[t]$,
    which is always equal to a specific value $p$. We split the interval 
    $[t_{drop}, t_{next}]$ into three parts. 

    \textbf{Part 1.} This part lasts while $FARFIRST$ is moving 
    towards a released request to the right of $max\{p, pos(t)\}$. This decreases 
    the value $|pos(t) - R_O(t)|$ while $|R_O(t) - R_U|$ is constant and
    thus $D(t)$ cannot increase.

    \textbf{Part 2.} This part lasts while $FARFIRST$ is moving towards 
    $p$. No released requests exist to the right of $pos(t)$ during 
    this time, since that is taken care of in Part 1. Thus, we 
    have $R_O(t) = max(pos(t), R_U)$. Either way, we see that $D(t) = |R_U - pos(t)|$
    during this part. At the start of this part, we have $D(t) \le M$. When 
    $p$ is reached, we still have $D(t) \le M$, because $p$ has a distance 
    of at most $M$ to $R_U$ by Claim \ref{claim:pToRU}. Thus, we have $D(t) \le M$ 
    throughout this part also. 

    \textbf{Part 3.} This part lasts while $pos(t) = p$, i.e. $FARFIRST$ 
    is waiting on top of $p$. It can be seen that $D(t)$ is constant 
    throughout this part and also not larger than $M$.
\end{proof}

We are now ready to present and prove the main claim we discussed.

\begin{claim}\label{claim:zoomOrCatchup}
    Assume without loss of generality that $FARFIRST$ is focusing on the right side during a phase. $FARFIRST$ finishes this phase as fast as possible or does so at most $M$ time units after OPT finishes.
    More 
    precisely, if the phase spans the time 
    interval $[t_s, t_e]$ and the rightmost request served during this phase is $R_{phase}$, then
        \begin{displaymath}
            (t_e - t_s = 2|R_{phase}|) \vee  (t_e - |OPT| \le M).
        \end{displaymath}
\end{claim}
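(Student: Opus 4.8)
The plan is to assume without loss of generality that the phase lies on the right of the origin, and to split on whether $FARFIRST$ ever \emph{stalls} during $[t_s,t_e]$, meaning that it either waits on top of a prediction or reverses direction from leftward to rightward motion. If $FARFIRST$ never stalls, its trajectory is forced: it moves at unit speed from the origin rightward until it reaches its rightmost point, which must be the rightmost served request $R_{phase}$ — it cannot overshoot onto a badly placed prediction, because arriving at an unreleased prediction and sitting there is a stall, and a prediction that does coincide with a served request lies at amplitude $\le |R_{phase}|$ — and then it moves at unit speed back to the origin. Hence $t_e-t_s=2|R_{phase}|$ and the first disjunct holds. So from now on assume $FARFIRST$ stalls at least once.

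Let $t^\circ\in[t_s,t_e)$ be the \emph{last} stall time. After $t^\circ$ there are no further waits or leftward-to-rightward reversals, so $FARFIRST$'s position is monotone: it moves right to some point and then monotonically back to the origin, ending the phase. Two structural facts pin this down, and they are exactly what Claims \ref{claim:blocks}, \ref{claim:pToRU} and \ref{claim:chainSaw} (together with the function $D$) are set up to deliver. First, by the $FARFIRST$ ordering and the update rule, whenever $FARFIRST$ stalls it has already served every released phase-side request lying to its right, and it sits on the rightmost unreleased positive prediction, so $pos(t^\circ)\ge R_P[t^\circ]$; moreover any request to the right of $pos(t^\circ)$ still unserved at $t^\circ$ must be unreleased at $t^\circ$, hence has release time $\ge t^\circ$. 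Second, after $t^\circ$ the rightmost point $FARFIRST$ reaches is exactly $R_{phase}$ (again, reaching a far-away prediction would force another stall), so when $R_{phase}$ is strictly to the right of $pos(t^\circ)$ it is unreleased at $t^\circ$, is served on the rightward leg, and
\[
t_e \;=\; t^\circ + \bigl(|R_{phase}| - pos(t^\circ)\bigr) + |R_{phase}|.
\]

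Now I would combine three ingredients. (i) $R_{phase}>pos(t^\circ)$ unserved at $t^\circ$ gives $rel(R_{phase})\ge t^\circ$, and also $R_{phase}\le R_U[t^\circ]$. (ii) From $pos(t^\circ)\ge R_P[t^\circ]$ and Claim \ref{claim:pToRU}, $pos(t^\circ)\ge R_U[t^\circ]-M\ge |R_{phase}|-M$, i.e.\ $D(t^\circ)\le M$ and $|R_{phase}|-pos(t^\circ)\le M$. (iii) In the closed variant $|OPT|\ge rel(R_{phase})+|R_{phase}|$, since $OPT$ must serve $R_{phase}$ after its release and then return to the origin. Substituting (i) and (iii) into the displayed expression gives $t_e-|OPT|\le t^\circ+|R_{phase}|-pos(t^\circ)-rel(R_{phase})\le |R_{phase}|-pos(t^\circ)$, which is $\le M$ by (ii); so the second disjunct holds. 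The degenerate sub-case where $FARFIRST$ does not move right after $t^\circ$ (so it was waiting at $t^\circ$, $pos(t^\circ)=R_P[t^\circ]$, and $t_e=t^\circ+pos(t^\circ)$) is handled the same way, using instead that $OPT$ must still serve the rightmost request unreleased at $t^\circ$, which sits at position $R_U[t^\circ]\ge R_P[t^\circ]-M=pos(t^\circ)-M$ and is released after $t^\circ$, so $|OPT|\ge t^\circ+R_U[t^\circ]\ge t_e-M$.

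I expect the main obstacle to be the bookkeeping around the stall structure: rigorously justifying that after the last stall $FARFIRST$'s trajectory is the simple ``out to $R_{phase}$ and straight back'', with no further waiting, no further reversal, and no detour onto an overshooting prediction, and checking the boundary behaviour at $t^\circ$ (that the request $FARFIRST$ waits for, or the request triggering the reversal, is indeed counted in $R_U[t^\circ]$). Once those structural facts are in place, the three inequalities above are routine and amount to an application of Claim \ref{claim:pToRU} plus the elementary lower bounds on $|OPT|$ for the closed variant.
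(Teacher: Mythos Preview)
Your high-level strategy---identify a last ``critical'' moment and analyse the simple trajectory afterwards---is the same as the paper's (which uses $t_{chase}$ defined via $D(t)$ and then the ``final state'' machinery of Claims \ref{claim:ProlongingOrFast} and \ref{claim:FastFinalState}). However, your execution has a genuine gap.

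The central claim ``after the last stall (or from $t_s$ if there is no stall) the rightmost point $FARFIRST$ reaches is exactly $R_{phase}$'' is false. Your definition of a stall excludes \emph{right-to-left} reversals, and $FARFIRST$ can overshoot $R_{phase}$ toward a misleading prediction $p>R_{phase}$, then reverse right-to-left in mid-air when $p$'s request is released \emph{before} $FARFIRST$ reaches $p$. There is no wait (it never arrives at an unreleased prediction) and no left-to-right reversal, so this is consistent with ``no stalls after $t^\circ$''---yet the rightmost point $x_{\max}$ strictly exceeds $R_{phase}$. Concretely: take $q_1=2$ released at $0$ with $p_1=2$, and $q_2=1$ released at $2.5$ with $p_2=3$; then $R_{phase}=2$, $FARFIRST$ walks right to $2.5$, reverses when $q_2$ is released, and the phase takes $5$ time units, not $2|R_{phase}|=4$. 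So in the no-stall case the first disjunct can fail, and in the stall case your formula $t_e = t^\circ + (|R_{phase}|-pos(t^\circ)) + |R_{phase}|$ is wrong.

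A second, related issue: you assert that if $R_{phase}>pos(t^\circ)$ then $R_{phase}$ is unreleased at $t^\circ$. But $R_{phase}$ may well have been visited and served \emph{before} $t^\circ$ (FARFIRST went far right, came back, stalled); then after $t^\circ$ the rightmost point can be strictly below $R_{phase}$, and your inequality $rel(R_{phase})\ge t^\circ$ and the ensuing $|OPT|$ bound collapse.

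Both gaps are repairable---the overshoot is at most $M$ past the request whose release triggered the reversal, and one can lower-bound $|OPT|$ using \emph{that} request rather than $R_{phase}$---but making this rigorous forces you to track $R_O(t)$ and $R_U[t]$ rather than $R_{phase}$, which is precisely what the paper does via $D(t)$, the definition of $t_{chase}$, and Claim \ref{claim:FastFinalState}. In other words, the missing bookkeeping you flag in your last paragraph is not a detail but the substance of the argument.
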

    First of all, note that if at least one request is unreleased at time $t_e$, then obviously $|OPT| \ge t_e \implies t_e - |OPT| \le M$. Thus, we can assume in the following that \textit{all} requests will have been released before the end of the phase.
    
    We now draw our attention to a point in time that is very 
    central to our proof.
 
Let $t_{release}$ be the latest release time of a positive request associated with a positive prediction. Note 
        that ${R_U}'[t] = 0, \: \forall \: t > t_{release}$. Then, we define
            \begin{displaymath}
                t_{chase} = min\{t : \: t_s \le t \le t_{release}, \: (D(t') > M, \: \forall \: t < t' \le t_{release})\}.
            \end{displaymath}
    Intuitively, $t_{chase}$ signifies the start of a series of unit speed moves executed by $FARFIRST$ that lead to a \textit{final state} in which $FARFIRST$ has made sufficient progress through the phase and is also not too far behind $OPT$. After it reaches this state, it is easier to prove Claim \ref{claim:zoomOrCatchup}. We now describe what exactly we mean by this state.
    \begin{definition}[Final state]
        We say that $FARFIRST$ has reached a final state in a phase at time $t_{state}$ if $pos(t_{state}) \le M$ and there are no outstanding requests or unreleased predictions to the right of position $M$.
    \end{definition}
    We see why this final state is important in the following claim.
    \begin{claim}\label{claim:ProlongingOrFast}
        If $FARFIRST$ is in a final state at time $t_{state}$ with $pos(t_{state}) = x_{state} \le M$, then
            \begin{displaymath}
                (t_e - t_{state} = |x_{state}|) \vee  (t_e - |OPT| \le M).
            \end{displaymath}
        \begin{proof}
            If $FARFIRST$ moves straight to the origin after $t_{state}$, the first part is true. 
            On the other hand, there are only two possible ways for $FARFIRST$ \textit{not} to return straight to the origin, both of which provide new lower bounds for $|OPT|$, thus "resetting" the delay. One of them is for $FARFIRST$ to wait for a prediction $p \le M$ with $\pi^{-1}(p) \le 0$. Because $OPT$ also has to wait for this request and since $FARFIRST$ will ignore it for this phase, the delay is seen to be at most $M$ after such a case. The other case is for a request $q$ on the right side to be released that was predicted to be on the left side, implying that $q \le M$. Again, it can take up to $2|q|$ time units for $FARFIRST$ to serve this request and return but also $OPT$ needs to spend at least $|q|$ time units to terminate after it is released. Again, the delay is seen to be at most $M$.
    \end{proof}
\end{claim}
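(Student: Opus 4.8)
The plan is to prove the dichotomy by a simple case split on the behaviour of $FARFIRST$ after $t_{state}$: either it walks directly back to the origin, in which case the first disjunct holds on the nose, or it makes a rightward detour, and then the event that forces the detour also pins $|OPT|$ down to within $M$ of $t_e$.

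First I would record two consequences of being in a final state at $t_{state}$. Since there are no outstanding requests and no unreleased predictions to the right of position $M$, no request can even be released to the right of $M$ after $t_{state}$, so $FARFIRST$ never moves right of position $M$ during the rest of the phase. If from $t_{state}$ onward $FARFIRST$ moves monotonically toward the origin, it arrives there at time $t_{state}+|x_{state}|$; by the final-state hypothesis and the termination condition of a phase there is no remaining right-side task that could postpone the end of the phase (any such task would have forced a detour in the first place), so this arrival time is exactly $t_e$ and the first disjunct holds.

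Otherwise $FARFIRST$ detours; consider the \emph{last} rightward excursion. Since in a final state every ``ordinary'' right-side task lying beyond $M$ has already been completed, such an excursion can only be triggered by one of two things: (i) $FARFIRST$ waiting at the rightmost still-unreleased prediction $p\le M$, whose matched request $q=\pi^{-1}(p)$ turns out to satisfy $q\le 0$, so that $FARFIRST$ ignores it for this phase and serves it later; or (ii) a ``surprise'' request $q$ with $\pi(q)\le 0$ but $q>0$ being released, in which case $|q-\pi(q)|\le M$ forces $0<q\le M$. After handling this last event, $FARFIRST$ returns straight to the origin. In case (i), $OPT$ must also wait for $q$, so $|OPT|\ge rel(q)$, while $FARFIRST$ leaves $p$ at time $rel(q)$ and covers the distance $p\le M$ to the origin, giving $t_e\le rel(q)+M\le|OPT|+M$. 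In case (ii), the closed-variant requirement that $OPT$ return to the origin after serving $q$ gives $|OPT|\ge rel(q)+|q|$; $FARFIRST$, sitting at a position in $[\,0,M\,]$ when $q$ is released, needs at most $(q-pos)+q\le 2|q|$ further time, and since $q-pos\le q=|q|$, subtracting the $|q|$ already charged to $OPT$ leaves $t_e-|OPT|\le|q|\le M$.

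I expect the main obstacle to be the bookkeeping that guarantees the detours do not compound: one must argue that after the last excursion there is genuinely nothing left that would provoke another one — any still-pending prediction, or any released-but-unserved positive request, would itself be such a provocation, contradicting maximality — so that the phase really ends with a monotone walk to the origin. Getting the net slack to be exactly $M$ rather than $2M$ in case (ii) is the delicate point, and it is precisely there that the hypothesis $x_{state}\le M$, the final-state bound ``nothing outstanding or unreleased beyond $M$'', and the closed-variant return-to-origin constraint on $OPT$ are all used together.
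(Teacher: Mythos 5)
Your proposal is correct and follows essentially the same route as the paper: a dichotomy on whether $FARFIRST$ walks straight home, with the two detour triggers (waiting for a positive prediction whose request turns out non-positive, and a surprise right-side request with a non-positive prediction) each forcing a fresh lower bound $|OPT|\ge rel(q)$ or $|OPT|\ge rel(q)+|q|$ that keeps the slack at $t_e$ below $M$. You are somewhat more explicit than the paper about localizing to the \emph{last} excursion so that detours do not compound, and about first noting that $FARFIRST$ never leaves $[0,M]$ for the rest of the phase; the paper leaves these points implicit. One small remark: both you and the paper list only the two trigger types $\pi^{-1}(p)\le 0$ and $\pi(q)\le 0$, leaving out the benign third possibility of an as-yet unreleased pair $p,\,q=\pi^{-1}(p)$ with $0<p,q\le M$ — but exactly the same arithmetic (using $|OPT|\ge rel(q)+q$ and the position bound $pos\le M$) closes that case as well, so this is a shared cosmetic omission rather than a genuine gap in your argument.
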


Now that our goal has been somewhat clarified, we proceed with the main part of the proof. We now show that after $t_{chase}$, $FARFIRST$ moves to a final state as soon as possible.
\begin{claim}\label{claim:FastFinalState}
    Let $x_{state} = min\{M, R_O(t_{chase})\}$. Then, $FARFIRST$ reaches a final state at time $t_{state}$ and $pos(t_{state}) = x_{state}$,
    where
        \begin{displaymath}
            t_{state} = t_{chase} + |pos(t_{chase}) - R_O(t_{chase})| + |R_O(t_{chase}) - x_{state}|.
        \end{displaymath}
\end{claim}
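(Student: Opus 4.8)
\emph{Overview of the approach.} The plan is to trace $FARFIRST$'s trajectory from $t_{chase}$ onward and show that it decomposes into a single rightward leg, from $pos(t_{chase})$ to $R_O(t_{chase})$, followed by a single leftward leg, from $R_O(t_{chase})$ down to $x_{state}$, with a final state entered exactly upon arrival. The three ingredients are: the defining property of $t_{chase}$, namely $D(t) > M$ for all $t \in (t_{chase}, t_{release}]$; the update rule of Algorithm~\ref{alg:FarFirstSmooth}; and the block inequalities of Claims~\ref{claim:blocks} and \ref{claim:pToRU}.

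\emph{No pausing and the rightward leg.} First I would establish a no-pausing dichotomy: $FARFIRST$ waits only while parked on a prediction $p$, in which case $p = R_P[t]$ and no released outstanding request lies to the right of $p$, so $R_O(t) = \max\{p, R_U[t]\}$ and hence $D(t) = |R_O(t) - R_U[t]| \le |R_P[t] - R_U[t]| \le M$ by Claim~\ref{claim:pToRU} --- contradicting $D(t)>M$. So while $D(t) > M$ the agent moves at unit speed. Next, the agent never passes to the right of $R_O(t_{chase})$: the rightmost released outstanding request at any $t \ge t_{chase}$ belongs to $O(t_{chase})$ and is thus at most $R_O(t_{chase})$, while $R_P[t] \le R_P[t_{chase}] \le pos(t_{chase}) \le R_O(t_{chase})$, since $R_P[\cdot]$ is non-increasing and $FARFIRST$ never moves strictly to the left of $R_P[\cdot]$. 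Finally, the agent cannot turn back toward the origin before all outstanding requests have been served: if it did, every released outstanding request would lie to its left, giving $R_O(t) = \max\{pos(t), R_U[t]\}$ and hence $D(t) = |R_O(t) - R_U[t]|$, which together with $pos(t) \ge R_P[t]$ and $R_U[t] \ge R_P[t] - M$ forces $D(t) \le M$ unless there are in fact no outstanding requests to the right of the agent, i.e.\ unless it has already reached $R_O(t_{chase})$. Combining these, the agent advances monotonically to the right and arrives at $R_O(t_{chase})$ at time $t_{chase} + |pos(t_{chase}) - R_O(t_{chase})|$, with everything to its right served.

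\emph{The leftward leg and the final state.} If $R_O(t_{chase}) \le M$, then $x_{state} = R_O(t_{chase})$, the agent is already at a position $\le M$, all requests lie in $[L, R_O(t_{chase})] \subseteq [L, M]$, so the agent is in a final state and $t_{state} = t_{chase} + |pos(t_{chase}) - R_O(t_{chase})|$ as claimed (with $|R_O(t_{chase}) - x_{state}| = 0$). If $R_O(t_{chase}) > M$, then $x_{state} = M$ and the agent heads back toward $R_P[t]$; by the time its position reaches $M$ we must be past $t_{release}$, for otherwise the agent would eventually have to park on a prediction at a position exceeding $M$ while $D(t) > M$, contradicting the dichotomy. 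Once $t > t_{release}$ we have ${R_U}'[t] = 0$, so every request at a position greater than $M$ has a positive prediction and is therefore already released, while every ``surprise'' request has amplitude at most $M$; hence nothing outstanding or unreleased lies to the right of $M$, and the agent enters a final state exactly at position $M = x_{state}$ --- and not earlier, since up to that instant its position exceeds $M$. Combining the two legs yields $pos(t_{state}) = x_{state}$ together with the stated formula for $t_{state}$.

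\emph{Main obstacle.} The delicate point, which I expect to absorb most of the work, is the monotonicity in the second paragraph: ruling out that a request released late on the leftward leg --- one whose position the agent has already passed but which, having a prediction overshooting to the left, kept the agent confined above it --- forces a brief rightward excursion, and likewise confirming that the turnaround happens exactly at $R_O(t_{chase})$ rather than somewhat beyond it. The resolution is to combine the minimality in the definition of $t_{chase}$ with Claims~\ref{claim:blocks} and \ref{claim:pToRU} to show that any such deviation would drive $D(t)$ down to at most $M$ somewhere in $(t_{chase}, t_{release}]$, contradicting the definition of $t_{chase}$; the ``waiting implies $D(t)\le M$'' dichotomy is the lever used repeatedly to do this.
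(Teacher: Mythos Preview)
Your overall plan---decompose into a rightward leg to $R_O(t_{chase})$, then a leftward leg to $x_{state}$, and use the defining inequality $D(t)>M$ on $(t_{chase},t_{release}]$ together with Claims~\ref{claim:blocks} and~\ref{claim:pToRU} to rule out pausing and backtracking---is exactly the paper's approach, and your ``no-pausing dichotomy'' is correct and is essentially the lever the paper uses. Two of the specific steps you give, however, do not hold as stated and would need to be replaced by the $D$-contradiction you describe in your final paragraph.

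First, the inequality $R_P[t_{chase}]\le pos(t_{chase})$ is not valid in general: when $t_{chase}=t_s$ the agent is at the origin while $R_P[t_s]$ may be strictly positive, so ``$FARFIRST$ never moves to the left of $R_P[\cdot]$'' only means it never \emph{crosses} $R_P[t]$ leftward, not that it is always weakly to its right. The paper avoids this by arguing the rightward leg in two stages: first that the agent reaches $R_O(t_{chase})-M$ (because either $R_O(t_{chase})$ is already released, or $R_U[t]=R_O(t_{chase})>M$ and Claim~\ref{claim:blocks} forces $R_P[t]\ge R_O(t_{chase})-M$), and second that once within $M$ of $R_O(t_{chase})$ the request must already be released, since otherwise the agent is trapped in $[R_O(t_{chase})-M,R_O(t_{chase})+M]$ and $D(t')\le M$ would occur in $(t_{chase},t_{release}]$. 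Your fallback via the $D$-contradiction is the right instinct, but the concrete chain of inequalities you wrote does not do the job.

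Second, on the leftward leg your assertion ``by the time its position reaches $M$ we must be past $t_{release}$'' is false: the agent may well reach $M$ while $t\le t_{release}$. What the paper shows instead is that during this stretch $D(t)>M$ together with $R_O(t)=pos(t)$ gives $R_U[t]<pos(t)-M$, hence $R_P[t]<pos(t)$; in particular at $pos(t)=M$ one gets $R_U[t]<0$ and $R_P[t]<M$, so no outstanding request and no unreleased prediction lies to the right of $M$, which is exactly the final-state condition. The agent need not wait for $t_{release}$ to pass; it simply keeps moving left because $R_P[t]$ stays below it, and the final state is entered at position $M$ regardless.
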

\begin{proof}
    This can be seen by considering the moves followed by $FARFIRST$ after $t_{chase}$. First of all, we show that $FARFIRST$ moves straight to $R_O(t_{chase})$, starting at $t_{chase}$. If $pos(t_{chase}) = R_O(t_{chase})$, the claim is obvious. Thus, by the definition of $R_O(t)$, we can assume that $pos(t_{chase}) < R_O(t_{chase})$. It suffices to show that $FARFIRST$ moves to the right until it reaches $R_O(t_{chase})$. We split this move into two possible parts.
    
    \textbf{Part 1.} This part only applies if $pos(t_{chase}) < R_O(t_{chase}) - M$. In this part, 
    we show that $FARFIRST$ moves straight to the point $R_O(t_{chase}) - M$. Indeed, if $R_O(t_{chase})$ is released at some point during this part, then $FARFIRST$ will surely move to $R_O(t_{chase})$ (let alone $R_O(t_{chase}) - M$) in order to serve it. If $R_O(t_{chase})$ is not released during this part, then $R_U[t] = R_O(t_{chase})$ throughout this part. But because $R_U[t] = R_O(t_{chase}) > M$, Claim \ref{claim:blocks} implies that 
    $R_U[t] = {R_U}'[t] \implies R_P[t] \ge {R_U}'[t] - M = R_O(t_{chase}) - M$. Therefore, $FARFIRST$ will move to $R_O(t_{chase}) - M$ because of the predictions in this case.
    
    \textbf{Part 2.} This part refers to the move from
     the point $x = max\{R_O(t_{chase}) - M, pos(t_{chase})\}$
      to $R_O(t_{chase})$. In any case (whether Part 1
      applies or not), when $pos(t_x) = x$, $R_O(t_{chase})$ 
      \textit{must} have been released. Indeed, assume for the  
      sake of contradiction that $rel(R_O(t_{chase})) > t_x$.
      Note then that $R_U[t] = R_O(t_{chase})$ until $rel(R_O(t_{chase}))$.
      We can see that 
            \begin{displaymath}
                |pos(t) - R_O(t_{chase})| = |pos(t) - R_U[t]|  \le M \: 
                \forall \: t \in [t_x, rel(R_O(t_{chase}))].
            \end{displaymath}
      If $R_U[t] \le M$ for such $t$, the claim can be seen by 
      noting that $R_U[t] - M \le 0$ and that $R_U[t] + M \ge {R_U}'[t] + M$ and 
      because $FARFIRST$ won't exit the interval $[0, {R_U}'[t] + M]$ due to 
      $R_P[t]$.
      
      Otherwise,
      by Claim \ref{claim:blocks}, we have that $R_U[t] = {R_U}'[t]$ for such $t$.
      This means that  
      $FARFIRST$ will not move to the left of $R_O(t_{chase}) - M$
      due to $R_P[t]$ and also the rightmost point that may be travelled 
      to is $R_O(t_{chase}) + M$, again because of $R_P[t]$. But that 
      would mean that there exists $t' \: : \: t_{chase} < t' \le t_{release}$ with 
      $D(t') \le M$, a contradiction. Therefore since $R_O(t_{chase})$
      is released at $t_x$, $FARFIRST$ will move towards it immediately.

    It now remains to show that $FARFIRST$ will move 
    to $x_{state}$ immediately after reaching $R_O(t_{chase})$. If $x_{state} = R_O(t_{chase})$, the
    claim is obvious. Therefore, we can assume that $x_{state} = M$ and $x_{state} < R_O(t_{chase})$. We again split this move into 
    two parts.
    
    \textbf{Part 1.} This part lasts while $t \le t_{release}$ and $x_{state} = M$ has not yet been reached. This means that throughout this part, we have
    \begin{displaymath}
    D(t) > M \implies R_U[t] < pos(t) - M \implies {R_U}'[t] + M < pos(t) \implies R_P[t] < pos(t).
    \end{displaymath}
Thus, since $R_P[t]$ and $R_U[t]$ are always to the left of $pos(t)$, $FARFIRST$ neither stops to wait for a prediction nor backtracks to serve a request during this part.
    
\textbf{Part 2.} This part starts after Part 1 and lasts until $x_{state} = M$ is reached. Again, $FARFIRST$ trivially does not stop to wait for a prediction, since all the positive ones are released by now. Additionally, we can see that $R_U[t] \le M$ for this part, since all unreleased predictions are not positive. 
Thus, $R_U[t] \le pos(t)$ also holds for this part, prohibiting backtracking. 

We can see that in both parts $FARFIRST$ moves to the left with unit speed.

Therefore, two unit speed moves are followed after $t_{chase}$, one to $R_O(t_{chase})$ and one to $x_{state} = min\{M, R_O(t_{chase})\}$.
    Also, after these moves, $FARFIRST$ has reached a final state, because no outstanding request or unreleased prediction exists to the right of $x_{state} \le M$. End of proof.
\end{proof}

We will now use claims \ref{claim:ProlongingOrFast} and \ref{claim:FastFinalState} along with the definition of $t_{chase}$ to prove Claim \ref{claim:zoomOrCatchup}. 

\begin{proof}[Proof of Claim \ref{claim:zoomOrCatchup}.]
We distinguish two cases.

\textbf{Case 1.} $t_{chase} = t_{s}$. In this case, Claim \ref{claim:FastFinalState} implies that $FARFIRST$ reaches a final state by time $t_{state} = t_s + |R_{phase}| + |R_{phase} - x_{state}|$, where $x_{state} = min\{R_{phase}, M\}$. Then, by Claim \ref{claim:ProlongingOrFast}, we have 
    \begin{displaymath}
        (t_e - t_{state} = |x_{state}|) \vee  (t_e - |OPT| \le M) \implies
        (t_e - t_s = 2|R_{phase}|) \vee  (t_e - |OPT| \le M).
    \end{displaymath}
Thus, Claim \ref{claim:zoomOrCatchup} holds in this case.

\textbf{Case 2.} $t_{chase} > t_{s}$.
In this case, we first show that $t_{state} \le |OPT| + max\{M -  R_O(t_{chase}), 0\}$, where 
$t_{state}$ is as described in Claim \ref{claim:FastFinalState}.
To achieve this, we note that $D(t_{chase}) = M$. Indeed, 
let $t_{prev}$ be the latest release time before $t_{chase}$,
or $t_s$ if none exist. If $D(t') > M$ for all $t' \: : \: t_{prev} < t' \le t_{chase}$,
then the definition of $t_{chase}$ is violated. Thus, there exists a 
$t' \: : \: t_{prev} < t' \le t_{chase}$ such that $D(t') \le M$ and we have $D(t_{chase}) \le M$
by Claim \ref{claim:chainSaw}. We now distinguish two 
subcases.

\textbf{Case 2.1.} $M \le R_O(t_{chase}) \implies x_{state} = M$. In this case, we see that
    \begin{displaymath}
        D(t_{chase}) \le M \implies |pos(t_{chase}) - R_O(t_{chase})| + |R_O(t_{chase}) - R_U[t_{chase}]| \le M \implies
    \end{displaymath}
    \begin{displaymath}
        R_U[t_{chase}] \ge |pos(t_{chase}) - R_O(t_{chase})| + |R_O(t_{chase}) - M| \implies
    \end{displaymath}
    \begin{displaymath}
        t_{chase} + R_U[t_{chase}] \ge t_{chase} + |pos(t_{chase}) - R_O(t_{chase})| + |R_O(t_{chase}) - M| \implies
    \end{displaymath}
    \begin{displaymath}
        |OPT| \ge t_{state}.
    \end{displaymath}
\textbf{Case 2.2.} $M > R_O(t_{chase}) \implies x_{state} = R_O(t_{chase})$. Because $D(t_{chase}) \le M$, we 
must have
    \begin{displaymath}
        |pos(t_{chase}) - R_O(t_{chase})| \le M - R_O(t_{chase}) + R_U[t_{chase}] \implies    
    \end{displaymath}
    \begin{displaymath}
        t_{state} \le |OPT| + M - R_O(t_{chase}).  
    \end{displaymath}
In both of these subcases, by Claim \ref{claim:ProlongingOrFast} we have
    \begin{displaymath}
        (t_e - t_{state} = |x_{state}|) \vee  (t_e - |OPT| \le M) \implies
        (t_e = |x_{state}| + t_{state}) \vee  (t_e - |OPT| \le M) \implies
    \end{displaymath}
    \begin{displaymath}
        (t_e \le |OPT| + M) \vee  (t_e - |OPT| \le M).
    \end{displaymath}
Claim \ref{claim:zoomOrCatchup} is now proved in all cases.
\end{proof} 
We can finally use Claim \ref{claim:zoomOrCatchup} to prove Claim \ref{claim:FarFirstDelay} by showing that Equation \eqref{FarFirstLDelay} or Equation \eqref{FarFirstDelay} holds.
\begin{proof}[Proof of Claim \ref{claim:FarFirstDelay}.]
Let $t_s(1), t_e(1)$ be the start and end times of the first phase of $FARFIRST$ and $t_s(2), t_e(2)$ are similarly defined for the second phase. We can see that $t_s(1) = 0$ and $t_s(2) = t_e(1)$. We distinguish two cases based on the possible existence of a third phase. 

\textbf{Case 1.} No requests are released on the right side after $t_e(1)$. Thus, we see that $|FARFIRST| = t_e(2)$. By Claim \ref{claim:zoomOrCatchup}, we see that $t_e(1) \le |OPT| + M$. Using Claim \ref{claim:zoomOrCatchup} for the second phase also, we see that
\begin{displaymath}
    (t_e(2) - t_s(2) = 2|L|) \vee  (t_e(2) - |OPT| \le M) \implies
\end{displaymath}
\begin{displaymath}
    (t_e(2) - t_e(1) = 2|L|) \vee  (t_e(2) - |OPT| \le M) \implies \eqref{FarFirstLDelay} \implies \eqref{FarFirstDelay}.
\end{displaymath}
\textbf{Case 2.} At least one request is released 
on the right side after $t_e(1)$. Let $q_{M}$ be the rightmost such request. We can see that $q_M \le M$, since it is necessarily associated with a non-positive prediction. We can see that $|FARFIRST| = t_e(2) + 2|q_M|$. We also know that $|OPT| \ge rel(q_M) + |q_M| \ge t_e(1) + |q_M|$. By Claim \ref{claim:zoomOrCatchup} for the second phase, we
have
    \begin{displaymath}
        (t_e(2) - t_s(2) = 2|L|) \vee  (t_e(2) - |OPT| \le M) \implies
    \end{displaymath}
\begin{displaymath}
    (t_e(2) - t_e(1) = 2|L|) \vee  (t_e(2) - |OPT| \le M) \implies
\end{displaymath}
\begin{displaymath}
    (t_e(2) + 2|q_M|  = 2|L| + t_e(1) + 2|q_M|) \vee  (t_e(2) + 2|q_M| \le |OPT| + M + 2|q_M|) \implies
\end{displaymath}
\begin{displaymath}
    (|FARFIRST| \le |OPT| + 2|L| + |q_M|) \vee  (|FARFIRST| \le  |OPT| + M + 2|q_M|) \implies
\end{displaymath}
\begin{displaymath}
    (|FARFIRST| \le |OPT| + 2|L| + M) \vee  (|FARFIRST| \le  |OPT| + 3M) \implies
\end{displaymath}
\begin{displaymath}
    \eqref{FarFirstLDelay} \vee \eqref{FarFirstDelay} \implies \eqref{FarFirstDelay}.
\end{displaymath}
\end{proof}
We can now use Claim \ref{claim:FarFirstDelay} to prove Lemma \ref{FarFirstSmoothness}.
\begin{proof}[Proof of Lemma \ref{FarFirstSmoothness}.]
We know that $|OPT| \ge 2(|L| + |R|)$, since it must at least travel 
to both $L$ and $R$ and back. Also, by Claim \ref{claim:FarFirstDelay}, we have
$|FARFIRST| - |OPT| \le |Near| + |Far| + 3M = |L| + |R| + 3M$. These inequalities imply
\begin{displaymath}
    \frac{|FARFIRST|}{|OPT|} =  1 + \frac{|FARFIRST| - |OPT|}{|OPT|} \le 1 + \frac{|L| + |R| + 3M}{2(|L| + |R|)} = \frac{3(1 + \eta)}{2}.
\end{displaymath}
\end{proof}
We can finally prove Theorem \ref{the:FarFirstBound}.
\begin{proof}[Proof of Theorem \ref{the:FarFirstBound}.]
By Lemma $\ref{FarFirstRobustness}$, $FARFIRST$ is 3-robust. Additionally,
by Lemma \ref{FarFirstSmoothness}, $FARFIRST$ is $(\frac{3(1 + \eta)}{2})$-smooth. Thus, the theorem holds.
\end{proof}

\paragraph{A $1.5$-attack.}
Based on the magician analogy presented in Section \ref{section:ourContributions}, we design an attack strategy 
that yields the following theorem.

\CompetitiveRatio*

We first describe in high level the main ideas in the 
proof of this theorem. In our attack strategy, we have arbitrarily many requests evenly placed 
in the interval $[-1, 1]$. The more of these requests we have,
the closer the competitive ratio we achieve will be to $1.5$.

For a given set $Q_X$ of request \textit{positions}, we 
now describe how the release times of the requests at these positions
are decided. The strategy is that we release requests on both sides 
as long as $ALG$ has not yet approached a released request. This is the first
phase of releases and it is structured in such a way that $OPT$ 
could begin serving either of the two sides as fast as possible. In the 
magician analogy we described, this corresponds to the time before the pedestrian 
chooses a hand.

When $ALG$ approaches a released request, we "freeze" the requests on 
$ALG$'s side. That is, if $ALG$ moves close to a released request on the left side, 
say one placed at $-\frac{1}{2}$, all requests in the interval $[-\frac{1}{2}, 0]$
have their release time delayed such that $OPT$ can still serve the entire right 
side and then come back to serve the left side by $t = 4$. This corresponds 
to the magician producing the coin on the right hand while the pedestrian has 
chosen the left hand. However, $ALG$ 
is now faced with a dilemma. Should it wait for these "frozen" requests
or should it travel all the way to $1$ in order to serve the right side first?
We will see that both options are bad, in the sense that $|ALG|$ can be 
seen to be arbitrarily close to $6$. We now proceed with the formal proof.

We describe a family $F_C$ of inputs that is structured as follows. For a given rank $n \ge 2$, we place exactly $n$ requests evenly spaced across the interval $[-1, 1]$.
For an instance $f$ of the family $F_C$, $\alpha(f)$ is 
defined as the distance between any consecutive pair of requests 
in $f$.

\begin{claim}\label{claim:distanceOfRequests}
    If an instance $f$ of the family $F_C$ has rank $n$, then $\alpha(f) = \frac{2}{n - 1}$.
\end{claim}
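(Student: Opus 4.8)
The plan is to simply unwind the definition of the family $F_C$ and of the quantity $\alpha(f)$. By construction, an instance $f$ of rank $n$ has its $n$ request positions placed evenly across the interval $[-1,1]$; the natural (and, for the attack, the intended) reading is that the two extreme positions coincide with $-1$ and $1$ and that every pair of consecutive positions is equidistant. First I would observe that, under such a placement, the $n$ points partition $[-1,1]$ into exactly $n-1$ sub-intervals of equal length. Since the total length of $[-1,1]$ is $2$, each sub-interval has length $\frac{2}{n-1}$. Finally, because $\alpha(f)$ is by definition the common distance between consecutive requests in $f$, this yields $\alpha(f) = \frac{2}{n-1}$ directly.

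The only point deserving a word of care is the hypothesis $n \ge 2$: it guarantees $n-1 \ge 1$, so $\frac{2}{n-1}$ is well defined and positive, and for $n = 2$ the two requests lie exactly at $-1$ and $1$, giving $\alpha(f) = 2$ as the formula predicts. It is also worth recording, for use in the rest of the attack, that this placement makes the extreme requests sit at $L = -1$ and $R = 1$.

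There is no real obstacle here. The claim is essentially a restatement of what "evenly spaced" means, combined with the length of the hosting interval, and the argument is a single arithmetic step once the construction is written out explicitly; the only thing to be careful about is fixing the convention that the endpoints $-1$ and $1$ are among the $n$ chosen positions, which is what makes the count of gaps equal to $n-1$ rather than $n$ or $n+1$.
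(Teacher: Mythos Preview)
Your proposal is correct and follows essentially the same argument as the paper: $n$ evenly spaced points on $[-1,1]$ with the extremes at $-1$ and $1$ determine $n-1$ equal sub-intervals summing to length $2$, hence each has length $\frac{2}{n-1}$. Your additional remarks about the hypothesis $n\ge 2$ and the endpoint convention are fine but not strictly needed beyond what the paper's construction already stipulates.
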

\begin{proof}
    There are $n$ requests that delimit an interval of 
    length $2$. Thus, there are $n - 1$ equal subintervals, whose lengths' sum is 
    equal to $2$. Therefore, each subinterval has length $\frac{2}{n - 1}$.
\end{proof}
All that is left to determine is the release times of the requests. We split the release times into two "phases". The first phase takes place for as long as $L_U(t) < pos_{ALG}(t) < R_U(t)$. During this phase, $ADV$ releases any request with distance $d$ from the origin at time $2 - d$. Note that this release method allows $OPT$ to eagerly start serving any side of the origin first without waiting for requests to release.

Now for the second phase's releases, assuming
that $ALG$ exits the interval to its left side (i.e. commits to the left extreme),
the requests to the right side are released as during the first phase. However, for any unreleased request to the left side with distance $d$ from the origin, its release time is delayed to $4 - d$.
If $ALG$ exits from the right instead, $ADV$ releases the left requests 
as in the first phase and delays the right requests. The input (positions and release
times of requests) is now fully specified.
For the following, we will define $t_{commit}$ as the start time of the second phase. That is,
\begin{displaymath}
    t_{commit} = min(\{t \: : \: \lnot (L_U(t) < pos_{ALG}(t) < R_U(t))\}).
\end{displaymath}
We immediately observe the following inequality, which guarantees 
that the second phase of the requests starts in a timely manner.
\begin{claim}\label{claim:commit_1_2}
$1 \le t_{commit} \le 2$.
\end{claim}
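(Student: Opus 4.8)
The plan is to establish the two inequalities separately, in both directions exploiting the explicit first-phase release rule together with the unit-speed constraint $|pos_{ALG}(t)| \le t$. The one structural fact I need first is that the reactive schedule of $ADV$ only ever \emph{postpones} requests relative to the first-phase rule (phase two replaces a release time $2-d$ by $4-d$, and only for requests on $ALG$'s committed side, which are unreleased at $t_{commit}$). Consequently, on the whole interval $[0, t_{commit})$ the release times are exactly those of the first phase, namely a request at distance $d$ from the origin is released at time $2-d$; in particular no request is released before time $2-1=1$, the two extreme requests at $\pm 1$ are released exactly at time $1$, and every request is released by time $2$.

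\textbf{Lower bound $t_{commit} \ge 1$.} Suppose $t_{commit} < 1$. On $[0, t_{commit})$ the first-phase rule governs, so no request has been released by time $t_{commit}$, and since $ADV$ can only further delay them, every request (including the extremes at $\pm 1$) satisfies $rel(q) > t_{commit}$. Hence $L_U(t_{commit}) = -1$ and $R_U(t_{commit}) = 1$. But the agent starts at the origin and moves at unit speed, so $|pos_{ALG}(t_{commit})| \le t_{commit} < 1$, which gives $L_U(t_{commit}) < pos_{ALG}(t_{commit}) < R_U(t_{commit})$ --- contradicting the choice of $t_{commit}$ as the first time this predicate fails.

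\textbf{Upper bound $t_{commit} \le 2$.} Suppose $t_{commit} > 2$. Then the first-phase rule governs all of $[0,2]$, so every request, having distance $d \le 1$, carries release time $2-d \le 2$, and hence $\{q \in Q : rel(q) > 2\} = \emptyset$. Since $L_{lim} = R_{lim} = 0$ in the closed variant, this yields $L_U(2) = R_U(2) = 0$, so the predicate $L_U(2) < pos_{ALG}(2) < R_U(2)$ becomes $0 < pos_{ALG}(2) < 0$, which is false; thus the first phase has ended by time $2$, contradicting $t_{commit} > 2$. (One in fact gets $t_{commit} \le 2 - \alpha(f)$, since the nearest non-origin requests are released at time $2 - \alpha(f)$, but $t_{commit} \le 2$ is all that is needed.)

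The only point requiring care is the mild circularity between the definition of $t_{commit}$ and the adversary's reactive choice of release times; it is dissolved precisely by the observation in the first paragraph that phase two only postpones releases, so the first-phase schedule is simultaneously an exact description of the releases on $[0, t_{commit})$ and an unconditional lower bound on every release time. With that in hand, both inequalities reduce to the definitions of $L_U(\cdot)$, $R_U(\cdot)$ and the bound $|pos_{ALG}(t)| \le t$.
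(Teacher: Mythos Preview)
Your proof is correct and follows essentially the same approach as the paper's: for the lower bound you use that no request is released before time $1$ so $[L_U(t),R_U(t)]=[-1,1]$ and the unit-speed bound forces $pos_{ALG}(t)\in(-1,1)$; for the upper bound you use that under first-phase rules every request is released by time $2$ so $L_U(2)=R_U(2)=0$. Your additional paragraph addressing the apparent circularity between $t_{commit}$ and the reactive schedule is a nice clarification the paper leaves implicit, but the core argument is the same.
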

\begin{proof}
    For the sake of contradiction, assume that $t_{commit} < 1$.
    Then, $|pos(t_{commit})| \ge 1$ since $[L_U(t), R_U(t)] = [-1, 1]$ for $t < 1$.
    However, because any algorithm is limited to unit speed, $t_{commit} < 1 \implies |pos(t_{commit})| < 1$,
    a contradiction. 
    
    On the other hand, assume that $t_{commit} > 2$. This means 
    that $L_U(2) < pos(2) < R_U(2)$. But, since the first phase has not 
    stopped until $t = 2$, we have $L_U(2) \ge 0, R_U(2) \le 0$, which clearly leads 
    to a contradiction.
\end{proof}
We now state a lemma ensuring that $OPT$ finishes in the 
absolute least time possible for any such input. This 
allows us to maximize the competitive ratio we 
achieve against $ALG$.
\begin{lemma}\label{lem:OPT4}
For any instance $f$ in the family $F_C$, $|OPT| = 4$.
\end{lemma}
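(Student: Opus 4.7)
The plan is to establish $|OPT|=4$ via matching upper and lower bounds. The lower bound is essentially free: every instance in $F_C$ contains requests at both $-1$ and $1$ (the two endpoints of the interval of positions), and since the closed variant requires the agent to return to the origin, any feasible tour has length at least $2(|L|+|R|)=4$.

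For the matching upper bound, I would describe a specific tour that the offline adversary $OPT$ can follow. Without loss of generality, suppose $ALG$ exits the interval on its left at time $t_{commit}$, so by the definition of the second phase every right-side request at distance $d$ from the origin has release time $2-d$, while every unreleased left-side request at distance $d$ has release time $4-d$ (those left-side requests already released in phase $1$ keep their earlier release $2-d$, which must satisfy $2-d\le t_{commit}\le 2$). I would have $OPT$ move right at unit speed to $+1$ (arriving at $t=1$, exactly when the request at $+1$ is released) and then back to the origin, thereby visiting each right-side position $+d$ on the return leg at time $2-d$, which coincides with its release. I would then have $OPT$ continue leftward to $-1$ (arriving at $t=3$, exactly when the request at $-1$ is released) and return to the origin by $t=4$, visiting each position $-d$ on the return leg at time $4-d$; this matches the delayed release of requests still unreleased at $t_{commit}$ and trivially respects the earlier release of any left-side request that was already out since phase $1$. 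This gives $|OPT|\le 4$, and the case where $ALG$ exits on the right is handled by the mirror-image tour.

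The only thing one has to verify carefully is that the two release schedules $2-d$ and $4-d$ dovetail exactly with the ``right extreme and back, then left extreme and back'' tour, so that $OPT$ never has to wait idly for a release nor revisit a position. This check is immediate from arithmetic, and combining it with the trivial lower bound then yields $|OPT|=4$.
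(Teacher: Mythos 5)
Your proposal is correct and follows essentially the same approach as the paper: the paper likewise observes that $OPT$ can first traverse the side $ALG$ did not commit to and return to the origin by $t=2$, then traverse the committed side and return by $t=4$, with the release schedules $2-d$ and $4-d$ ensuring no waiting; you simply spell out the arithmetic of the dovetailing and add the (trivial but worth stating) lower bound $|OPT|\ge 2(|L|+|R|)=4$, which the paper leaves implicit.
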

\begin{proof}
    We observe that the requests of one side (the one 
    $ALG$ did not exit from) are released such that $OPT$
    can serve them all and return to the origin by $t = 2$.
    Additionally, the other side's requests are released 
    such that $OPT$ never has to stop for them either, i.e. 
    it can serve them all and return to the origin by $t = 4$. 
    Thus, $|OPT| = 4$.
\end{proof}
However, $ALG$ has commited to one side (by exiting the interval) and we will prove that it requires at least $6 - 2\alpha(f)$ time units to terminate. This will be our main lemma.
We state it here for reference but will prove it later.
\begin{restatable}{lemma}{ALGSlow}
    \label{lem:ALGSlow}
    For any instance $f$ in the family $F_C$ and for any $ALG$, we have that $|ALG| \ge 6 - 2\alpha(f)$.
\end{restatable}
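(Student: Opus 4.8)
\emph{Proof proposal.} The plan is to bound $|ALG|$ purely in terms of $ALG$'s behaviour after the commitment time $\tau:=t_{commit}$. Write $\alpha:=\alpha(f)$ and, using the left--right symmetry of $ADV$'s strategy, assume without loss of generality that $ALG$ exits the interval on the left, so that $x:=pos_{ALG}(\tau)$ satisfies $x\le L_U(\tau)<0$. I would first collect the structural facts that the construction hands us. By Claim~\ref{claim:commit_1_2}, $\tau\in[1,2]$. Throughout the first phase the two extreme requests are unreleased until time $1$ and nothing lies outside $[-1,1]$, so $-(2-t)<L_U(t)$ and $R_U(t)<2-t$ (and $R_U(t)\le 1$) for every $t<\tau$; since the first-phase predicate forces $pos_{ALG}(t)\in(L_U(t),R_U(t))$, we get $|pos_{ALG}(t)|<2-t$ for $t<\tau$, hence $|x|\le 2-\tau$, and also $pos_{ALG}(t)<1$ for $t<\tau$, so $ALG$ is never at $+1$ before $\tau$. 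Finally the exit condition gives $L_U(\tau)=-\ell$, where $\ell$ is the largest multiple of $\alpha$ strictly below $2-\tau$ (so $\ell\ge(2-\tau)-\alpha$, with $\ell=0$ precisely when $2-\tau\le\alpha$), whence $|x|\ge\ell$. Writing $t^{+}$ for the first time $ALG$ occupies $+1$, the last two facts give $t^{+}\ge\tau+(1-x)=\tau+1+|x|$.

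If $2-\tau\le\alpha$ there are no frozen requests and every request is released by time $\tau$; from $x$ the agent must still touch $+1$ and $-1$ and return home, which costs at least $4-|x|$ (touching the nearer extreme $-1$ first), so $|ALG|\ge\tau+4-|x|\ge 2\tau+2\ge 6-2\alpha$. So assume from now on $2-\tau>\alpha$: then the frozen (delayed left) requests occupy the positions $-\alpha,-2\alpha,\dots,-\ell$, the one at $-\alpha$ released at time $4-\alpha$ and the one at $-\ell$ released at time $4-\ell$.

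The core of the proof is the dilemma foreshadowed in the overview, which I would phrase as a dichotomy: either $ALG$ serves $+1$ only after it has already served every left request, or $ALG$ is at $+1$ at a moment when some left request is still outstanding. In the first case, at the instant $ALG$ finishes its last left request it stands at some negative position $p$; this left-request set includes the frozen request at $-\alpha$, so the instant is at time $\ge 4-\alpha$, and reaching $+1$ and returning home afterwards costs at least $(1-p)+1\ge 2$, giving $|ALG|\ge 6-\alpha$. In the second case I look at the leftmost frozen request $-\ell$: if it was served before $t^{+}$, then $ALG$ was at $-\ell$ at a time $\ge 4-\ell$, which forces $t^{+}\ge(4-\ell)+(1+\ell)=5$ and hence $|ALG|\ge t^{+}+1\ge 6$; otherwise $-\ell$ is still outstanding at $t^{+}$, and I split once more on the left extreme. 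If $-1$ is also still outstanding at $t^{+}$, then after $t^{+}$ the agent pays at least $3$ (out to $-1$ and back to the origin), so $|ALG|\ge t^{+}+3\ge(\tau+1+|x|)+3\ge\tau+4+\ell\ge 6-\alpha$. If $-1$ has already been served at $t^{+}$, then $ALG$ visited $-1$ before $+1$, so $t^{+}\ge\tau+(1-|x|)+2$, and after $t^{+}$ it still pays at least $1+2\ell$ (out to $-\ell$ and back), whence
\[
  |ALG|\;\ge\;t^{+}+1+2\ell\;\ge\;\big(\tau+3-|x|\big)+1+2\ell\;\ge\;\big(2\tau+1\big)+1+2\big((2-\tau)-\alpha\big)\;=\;6-2\alpha,
\]
the last step using $|x|\le 2-\tau$ and $\ell\ge(2-\tau)-\alpha$. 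This covers all cases and gives $|ALG|\ge 6-2\alpha(f)$.

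I expect the genuinely delicate point to be the final ``ran to $1$ first, having already served $-1$'' subcase: the bound closes only because the leftmost frozen request $-\ell$ is \emph{necessarily} still outstanding when $ALG$ first reaches $+1$ (serving it earlier would cost enough to push $t^{+}$ to $5$), and this is exactly the place where $ADV$'s choice of the delay value $4-d$ earns its keep. Besides this, I would need to check the routine feasibility point that every frozen request swept up along the return walks above has already been released when the agent passes it (the estimates on $t^{+}$ put the agent at $-\alpha$ no earlier than about time $4$, while its release time is $4-\alpha$), and to propagate the $O(\alpha)$ discretisation slack, which is precisely what the $-2\alpha(f)$ in the statement soaks up.
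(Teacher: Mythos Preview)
Your argument is correct and follows essentially the same route as the paper. Both proofs pivot on the three requests $-1$, $L_U(t_{commit})=-\ell$, and $+1$, and both exploit the two key estimates $t_{commit}\ge 2-d_{commit}-\alpha$ and $|pos(t_{commit})|\approx d_{commit}$ together with the delayed release $rel(-\ell)=4-\ell$. The paper simply packages the case analysis more compactly: it splits directly on the \emph{order} in which $ALG$ serves the triple $\{-1,\,L_U(t_{commit}),\,+1\}$, obtaining three exhaustive cases (``$+1$ before $-1$'', ``$-1,+1,L_U$'', ``$L_U$ before $+1$'') that map onto your subcases. In particular your first dichotomy branch (``all left requests done before $+1$'') is already subsumed by your later ``$-\ell$ served before $t^{+}$'' subcase, so it is redundant; and your separate treatment of the boundary $2-\tau\le\alpha$ is unnecessary once one argues, as the paper does, directly from $t_{commit}\ge 2-d_{commit}-\alpha$. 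A minor parametrisation quibble: the frozen left requests need not sit exactly at $-\alpha,-2\alpha,\dots$ (when $n$ is even the grid is offset by $\alpha/2$), but this only tightens your bounds. Finally, the ``feasibility'' worry you flag at the end is a non-issue for the \emph{lower} bound on $|ALG|$: the distance arguments already lower-bound the makespan regardless of whether the requests are released when the agent passes them.
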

Before proving this lemma, we give some more claims. For the following, we assume without loss of generality that $ALG$ exits 
the interval $[L_U(t), R_U(t)]$ from the left, i.e. it commits to the left side.
\begin{claim}\label{claim:posCommitInterval}
    For any instance $f$ in the family $F_C$,
        \begin{displaymath}
            L_U(t_{commit}) - \alpha(f) \le pos_{ALG}(t_{commit}) \le L_U(t_{commit}).
        \end{displaymath}
\end{claim}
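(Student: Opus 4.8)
The plan is to establish the two inequalities separately: the upper one is essentially definitional, while the lower one requires a continuity argument together with a close look at how $L_U(t)$ evolves during the first phase of releases.

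For the upper bound $pos_{ALG}(t_{commit}) \le L_U(t_{commit})$: by the definition of $t_{commit}$ it is the first time at which $\lnot(L_U(t) < pos_{ALG}(t) < R_U(t))$ holds, so at $t_{commit}$ either $pos_{ALG}(t_{commit}) \le L_U(t_{commit})$ or $pos_{ALG}(t_{commit}) \ge R_U(t_{commit})$. Since we have assumed (w.l.o.g.) that $ALG$ exits the interval on the left, the first alternative must hold, which is exactly the upper bound.

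For the lower bound I would combine two facts. First, $pos_{ALG}$ is $1$-Lipschitz, hence continuous, so the agent's trajectory has no jumps. Second, by the definition of $t_{commit}$, for every $t < t_{commit}$ we have $pos_{ALG}(t) > L_U(t)$. The key structural observation is that, under the first-phase schedule (a request at distance $d$ from the origin is released at time $2 - d$) together with the even spacing $\alpha(f)$ from Claim~\ref{claim:distanceOfRequests}, the function $L_U(t)$ is nondecreasing and piecewise constant, jumping only at release times, and every one of its jumps has size exactly $\alpha(f)$ --- including the final jump from the leftmost negative request at $-\alpha(f)$ up to the origin request at $0$, after which $L_U$ stays at $0$. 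Hence the left limit of $L_U$ at $t_{commit}$ satisfies $\lim_{t \to t_{commit}^-} L_U(t) \ge L_U(t_{commit}) - \alpha(f)$: either $L_U$ is continuous there and the left limit equals $L_U(t_{commit})$, or $t_{commit}$ is a jump point and the left limit equals $L_U(t_{commit}) - \alpha(f)$. Passing to the limit $t \to t_{commit}^-$ in the inequality $pos_{ALG}(t) > L_U(t)$ and using continuity of $pos_{ALG}$ yields $pos_{ALG}(t_{commit}) \ge \lim_{t \to t_{commit}^-} L_U(t) \ge L_U(t_{commit}) - \alpha(f)$, which is the lower bound. We also invoke $t_{commit} \ge 1$ from Claim~\ref{claim:commit_1_2}; before $t = 1$ the statement is vacuous anyway, since $L_U(t) = -1$ while the unit speed limit forces $pos_{ALG}(t) > -1$.

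I expect the one genuinely fiddly step to be verifying that every jump of $L_U$ throughout the first phase is of size exactly $\alpha(f)$, in particular handling the transition onto the origin request and the behavior once all left requests have been released; everything else (continuity of $pos_{ALG}$, unwinding the definition of $t_{commit}$, and the w.l.o.g.\ left-commit assumption) is routine. A minor case worth singling out is when $t_{commit}$ coincides with a release time, so that $pos_{ALG}(t_{commit})$ may lie strictly below $L_U(t_{commit})$; this is automatically covered by the argument above, since that is precisely the situation in which the left limit of $L_U$ equals $L_U(t_{commit}) - \alpha(f)$.
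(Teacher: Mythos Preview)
Your argument is correct and matches the paper's approach: the paper simply splits into two cases (either $ALG$ walked onto the boundary, giving $pos_{ALG}(t_{commit}) = L_U(t_{commit})$, or a release shrank the interval past $ALG$, placing it in the previous interval $[L_U(t_{commit}) - \alpha(f), R_U(t_{commit}) + \alpha(f)]$), which is exactly what your continuity-plus-jump-bound argument formalizes. One harmless inaccuracy: your claim that the final jump of $L_U$ onto $0$ has size exactly $\alpha(f)$ presumes the negative request nearest the origin sits at $-\alpha(f)$, which holds only for odd $n$ (for even $n$ it sits at $-\alpha(f)/2$ and the jump is smaller); since your conclusion only needs jumps of size \emph{at most} $\alpha(f)$, this does not affect the proof.
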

\begin{proof}
The claim can be seen by examining two cases. If $ALG$ exited the unreleased requests interval itself by moving out of it, then $pos_{ALG}(t_{commit}) = L_U(t_{commit})$. In the other case, $ALG$ was forced out of the interval by a request release. Thus, right before this release (which occurs at precisely $t_{commit}$), $ALG$ was inside the interval. The previous interval was $[L_U(t_{commit}) - \alpha(f), R_U(t_{commit}) + \alpha(f)]$. Thus, the inequality holds.
\end{proof}
We now draw attention to one particular value, which 
constitutes the backbone of our attack. We define $d_{commit} = |L_U(t_{commit})|$, where $t_{commit}$ is the 
start of the second phase of releases.
    
We now show some claims that allow us to use this value to get a lower bound for $|ALG|$.
\begin{claim}\label{claim:timeCommitInterval}
    For any instance $f$ in the family $F_C$, $t_{commit} \ge 2 - d_{commit} - \alpha(f)$.
\end{claim}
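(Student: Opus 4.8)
The goal is to prove Claim~\ref{claim:timeCommitInterval}: $t_{commit} \ge 2 - d_{commit} - \alpha(f)$. The plan is to connect the time $t_{commit}$ to the position of $ALG$ at that moment, and then connect that position to $d_{commit}$ via Claim~\ref{claim:posCommitInterval}.

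First I would recall that $pos_{ALG}(0) = 0$ and $ALG$ moves at unit speed, so $|pos_{ALG}(t_{commit})| \le t_{commit}$. By Claim~\ref{claim:posCommitInterval}, since we assume $ALG$ commits to the left, we have $pos_{ALG}(t_{commit}) \le L_U(t_{commit}) < 0$, hence $|pos_{ALG}(t_{commit})| \ge |L_U(t_{commit})| = d_{commit}$; this already gives $t_{commit} \ge d_{commit}$, but that is the wrong bound. The point is that $ALG$ must actually have been pushed out of the interval $[L_U(t), R_U(t)]$, and in the first phase this interval shrinks only because requests get released — a request at distance $d$ is released at time $2 - d$. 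So at time $t_{commit}$, the leftmost still-unreleased request sits at $L_U(t_{commit})$, and the most recently released request on the left (the one that shrank the interval to its current state, or the fact that $ALG$ stepped past it) was released at time $2 - (d_{commit} + \alpha(f))$, since consecutive requests are $\alpha(f)$ apart. The key observation is that the second phase cannot have started before that release time occurred, i.e., $t_{commit} \ge 2 - (d_{commit} + \alpha(f))$.

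More carefully: by Claim~\ref{claim:posCommitInterval}, $pos_{ALG}(t_{commit}) \ge L_U(t_{commit}) - \alpha(f)$, so $ALG$'s position is at most $\alpha(f)$ to the left of $L_U(t_{commit})$. The request at position $L_U(t_{commit}) - \alpha(f)$ (the next one to the left) has release time $2 - (d_{commit} + \alpha(f))$ by the first-phase rule. Since the first phase is still ongoing up to time $t_{commit}$ and this request either has been released (forcing $ALG$ out) or $ALG$ reached the boundary on its own, in either case the configuration at $t_{commit}$ requires $t_{commit}$ to be at least the release time of the relevant left-side request, giving $t_{commit} \ge 2 - d_{commit} - \alpha(f)$. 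Combined with Claim~\ref{claim:commit_1_2} ($t_{commit} \le 2$) this is consistent, since $d_{commit} \ge 0$.

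The main obstacle I expect is handling the two cases of Claim~\ref{claim:posCommitInterval} uniformly — whether $ALG$ walked out of the interval voluntarily or was pushed out by a release — and making sure the bookkeeping on which request's release time is the binding one is correct (it is the request at distance $d_{commit} + \alpha(f)$, not $d_{commit}$, precisely because of the $\alpha(f)$ slack in Claim~\ref{claim:posCommitInterval}). Once that is pinned down, the inequality follows by substituting the first-phase release rule $rel = 2 - d$ and using that the first phase persists up to $t_{commit}$.
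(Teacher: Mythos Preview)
Your core idea matches the paper's: look at the request $L_{prev}$ at position $L_U(t_{commit}) - \alpha(f)$ and use its first-phase release time $2 - (d_{commit} + \alpha(f))$. However, two points need fixing.

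First, the detour through Claim~\ref{claim:posCommitInterval} and the case split on how $ALG$ exited the interval are unnecessary and do not actually establish the inequality. The bound follows purely from the definition of $L_U(t_{commit})$: since $L_U(t_{commit})$ is the leftmost request with release time strictly after $t_{commit}$, any request strictly to its left --- in particular $L_{prev}$ --- must satisfy $rel(L_{prev}) \le t_{commit}$. As $L_{prev}$ is therefore released during the first phase, $rel(L_{prev}) = 2 - |L_{prev}| = 2 - d_{commit} - \alpha(f)$, and the claim follows. Your argument via $ALG$'s position never pins down why $L_{prev}$ must already be released at $t_{commit}$; in the ``$ALG$ walked out on its own'' case you assert the conclusion without justification.

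Second, you do not handle the boundary case $L_U(t_{commit}) = -1$, where no request exists at $L_U(t_{commit}) - \alpha(f)$. There $d_{commit} = 1$, so $2 - d_{commit} - \alpha(f) < 1$, and the bound follows from the lower bound $t_{commit} \ge 1$ of Claim~\ref{claim:commit_1_2} (which you cite only for its upper bound).
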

\begin{proof}
    If $L_U(t_{commit}) = -1 \implies d_{commit} = 1$, then by Claim \ref{claim:commit_1_2}, we have 
    $t_{commit} \ge 1 \ge 2 - d_{commit} - \alpha(f)$. Therefore, we can assume that 
    a request $L_{prev}$ exists with $L_{prev} = L_U(t_{commit}) - \alpha(f) < L_U(t_{commit})$. We see that if $t < 2 - d_{commit} - \alpha(f)$,
    then $L_U(t) \le L_{prev}$, because $L_{prev}$ is unreleased until $2 - d_{commit} - \alpha(f)$. Therefore, 
    $t_{commit} \ge 2 - d_{commit} - \alpha(f)$, since otherwise we would have $L_U(t_{commit}) \le L_{prev}$, a contradiction.
\end{proof}
The following claim states that $ALG$ has essentially made no progress 
until $t_{commit}$. If $t_{commit}$ is close to 2, we can easily see 
why this is bad for $ALG$. On the other hand, an early commit means 
that $d_{commit}$ will be large (due to Claim \ref{claim:timeCommitInterval}), posing problems again for $ALG$.
\begin{claim}\label{claim:noServesFirstPhase}
$ALG$ has not served any request during the first phase, i.e. up to time $t_{commit}$.
\end{claim}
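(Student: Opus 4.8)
The plan is to read the conclusion straight off the definition of $t_{commit}$. For every time $t < t_{commit}$ the first-phase condition $L_U(t) < pos_{ALG}(t) < R_U(t)$ holds, so the agent lies strictly inside the open interval spanned by the still-unreleased requests. I will then show that every request released by time $t$ (apart from the formal request $q_0$ at the origin) lies outside this open interval; this forces $pos_{ALG}(t)$ to differ from the position of every released request throughout $[0,t_{commit})$. Since a request is served only when the agent is at its position at some moment no earlier than its release time, it follows that $ALG$ serves no request before $t_{commit}$.

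The one thing requiring argument is that released requests avoid $(L_U(t), R_U(t))$, and here the structure of the first-phase schedule is what does the work: a request at distance $d$ from the origin is released at time $2-d$, so on each side requests are released in order of decreasing amplitude. Concretely, fix $t < t_{commit}$ and a released request $q \neq q_0$; assume without loss of generality $q < 0$. From $rel(q) = 2 - |q| \le t$ we get $|q| \ge 2 - t$. Every still-unreleased left request $q'$ has $rel(q') = 2 - |q'| > t$, hence $|q'| < 2-t \le |q|$ and therefore $q' > q$; combined with $0 > q$ this gives $L_U(t) = \min(\{q' \in Q : rel(q') > t\} \cup \{0\}) > q$, i.e. $q < L_U(t)$. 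By the symmetric argument, any released request on the right side lies strictly to the right of $R_U(t)$. Hence no released request lies in $(L_U(t), R_U(t))$, which contains $pos_{ALG}(t)$, so $pos_{ALG}(t)$ is never the position of a released request while $t < t_{commit}$, and the claim follows.

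I do not expect a real obstacle; the only care needed is in two bookkeeping matters. First, the request $q_0$ at the origin is served at time $0$ by our starting convention, so it must be excluded from the statement --- equivalently, one observes that serving it is free and does not affect the subsequent lower-bound argument. Second, one should be explicit that ``during the first phase'' means at times strictly below $t_{commit}$: at the exact instant $t_{commit}$, in the case where a release pushes the agent out of the interval, $pos_{ALG}(t_{commit})$ may coincide with the request whose release ended the phase (this is precisely the situation quantified in Claim~\ref{claim:posCommitInterval}), but that instant belongs to the second phase and is treated there. With these conventions pinned down, the argument above is complete.
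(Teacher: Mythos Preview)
Your proof is correct and follows the same idea as the paper's: since $pos_{ALG}(t)$ lies strictly inside $(L_U(t),R_U(t))$ for all $t<t_{commit}$, and every released request lies outside this interval, $ALG$ cannot have reached any released request. The paper states this in a single sentence without justifying why released requests avoid the interval, whereas you spell out the monotonicity of the first-phase release schedule explicitly; your additional remarks about $q_0$ and the boundary instant $t_{commit}$ are careful bookkeeping that the paper simply omits.
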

\begin{proof}
This is due to the fact that $ALG$ has not exited the interval of unreleased requests until $t_{commit}$. Therefore, it cannot have moved to a released request. Since $ALG$ has to move to a request to serve it, the claim holds.
\end{proof}
Now we are ready to prove Lemma \ref{lem:ALGSlow}.
\begin{proof}[Proof of Lemma \ref{lem:ALGSlow}.]
Let us examine the options that $ALG$ has in order  to terminate after $t_{commit}$. By Claim \ref{claim:noServesFirstPhase}, we know that $ALG$ has not yet served the requests at $-1, L_U(t_{commit}), 1$. We examine cases based on the order in which it chooses to do so from $t_{commit}$ on.\\

\textbf{Case 1.} $ALG$ serves $1$ before $-1$. Then, $ALG$ at the very least needs to travel from $pos_{ALG}(t_{commit})$ to $1$, then to $-1$ and then back to the origin. Using Claims \ref{claim:timeCommitInterval} and \ref{claim:posCommitInterval}, this takes at least
    \begin{displaymath}
        |ALG| \ge t_{commit} + |pos_{ALG}(t_{commit}) - 1| + |1 - (-1)| + |-1 - 0| \ge 
    \end{displaymath}
    \begin{displaymath}
        (2 - d_{commit} - \alpha(f)) + (d_{commit} + 1) + 2 + 1 = 6 - \alpha(f).
    \end{displaymath}
\textbf{Case 2.} $ALG$ serves $-1$, then $1$ and then $L_U(t_{commit})$. Again using Claims \ref{claim:timeCommitInterval} and \ref{claim:posCommitInterval}, this takes
    \begin{displaymath}
        |ALG| \ge t_{commit} + |pos_{ALG}(t_{commit}) - (-1)| + |(-1) - 1| + |1 - L_U(t_{commit})| + |L_U(t_{commit}) - 0| \ge     \end{displaymath}
    \begin{displaymath}
        (2 - d_{commit} - \alpha(f)) + (1 - d_{commit} - \alpha(f)) + 2 + (1 + d_{commit}) + d_{commit} = 6 - 2\alpha(f).
    \end{displaymath}
\textbf{Case 3.}  $ALG$ serves $L_U(t_{commit})$ before $1$. In this case, $ALG$ has to first wait for $L_U(t_{commit})$ to be released and then go to serve $1$. Because $L_U(t_{commit})$ is a request to the left of the origin released during the second phase, we have
    \begin{displaymath}
        |ALG| \ge rel(L_U(t_{commit})) + |L_U(t_{commit}) - 1| + |1 - 0| \ge (4 - d_{commit}) + (1 + d_{commit}) + 1 = 6.
    \end{displaymath}
These cases are exhaustive and thus Lemma \ref{lem:ALGSlow} is proved.
\end{proof}
With all the above, we can finally prove Theorem \ref{the:CompetitiveRatio}.
\begin{proof}[Proof of Theorem \ref{the:CompetitiveRatio}.]
By Lemma \ref{lem:ALGSlow} and Lemma \ref{lem:OPT4}, we have a competitive ratio of at least $\frac{6 - 2\alpha(f)}{4}$ for any algorithm $ALG$. By Claim \ref{claim:distanceOfRequests}, we can see that $\alpha(f)$ can be arbitrarily small and thus this competitive ratio can be arbitrarily close to $1.5$, proving our claim.
\end{proof}

\section{Open Variant}\label{section:Open}
In this section, we consider the open variant. We have two 
prediction models for this variant. The first one is the $LOCATIONS$
prediction model and the second is the enhanced $LOCATIONS+FINAL$ model 
($LF$ in short). For both settings, we give algorithms and lower bounds.
\subsection{The \textit{LOCATIONS} prediction model}\label{section:openLOCATIONS}
Under the $LOCATIONS$ prediction model, we design the 
$NEARFIRST$ algorithm, which achieves a competitive ratio of $1.\overline{66}$
with perfect predictions and is also smooth and robust. We complement this 
result with a lower bound of $1.\overline{44}$ using a similar attack 
strategy to the one used for the closed variant.
\paragraph*{The $NEARFIRST$ algorithm.}
As we mentioned in the introduction, $NEARFIRST$ is similar to $FARFIRST$ 
and actually slightly simpler. In essence, $NEARFIRST$ simply picks a direction 
in which it will serve the requests. Then, it just serves the requests either 
from left to right or from right to left, using the predictions as guidance. 
The pseudocode for $NEARFIRST$ is given below. Recall that $move(x) \oplus move(y)$
is used to indicate a move to $x$ followed by a move to $y$.
\begin{nscenter}
\begin{algorithm}[!htb]
    \caption{$NEARFIRST$ update function.}\label{alg:NearFirst}
    \SetKwInOut{Input}{Input}
        \SetKwInOut{Output}{Output}
    
        \Input{Current position $pos$, set $O$ of unserved released requests, set $P$ of predictions.
               }
        
        \Output{A series of (unit speed) moves to carry out until the next request is released.}
    $P' \gets$ the unreleased predictions in $P$\;
    \If{$P'$ is empty}{
        \lIf{$pos < \frac{max(O) + min(O)}{2}$}{
            \Return $move( min(O)) \oplus move(max(O))$
            }
            \lElse{
            \Return $move(max(O)) \oplus move( min(O))$
            }
    }
    \lIf{$|min(P)| < |max(P)|$}{
        \Return $move(min(P' \cup O)) \oplus move(min(P'))$
    }
    \lElse{
        \Return $move(max(P' \cup O)) \oplus move(max(P'))$
    }

\end{algorithm}
\end{nscenter}
We present the following theorem regarding the competitive ratio of $NEARFIRST$.
\NearFirstBound*
We first describe the main ideas used in the proof of this theorem. As in the case of $FARFIRST$,
the 3-robustness holds because at time $|OPT|$, $NEARFIRST$ has "leftover work" of at most 
$2|OPT|$ time units (to return to the origin and then copy $OPT$). For the consistency/smoothness,
we draw our attention to the request $q_f$ served last
by $OPT$. For the following, we assume that $NEARFIRST$ serves the requests left to right. Let $d = |q_f - R|$. We will show that the delay of $NEARFIRST$ is bounded by $M + d$. Let $t_{q_f}$ be the time when $NEARFIRST$ has served all requests to the left of $q_f$, including $q_f$. It turns out that $t_{q_f} \le |OPT| + M$, because $NEARFIRST$ serves this subset of requests as fast as possible, except for a possible delay of $M$ due to a misleading prediction. Then, in this worst case, $NEARFIRST$ accumulates an extra delay of at most $d$, proving our claim.

Finally, we bound $OPT$ from below as a function of $d$. We see that $OPT$ can either serve the requests ${L, R, q_f}$ in the order $L, R, q_f$ or in the order $R, L, q_f$. The worst case is the latter, where we see that $|OPT| \ge 2|R| + |L| + (|L| + |R| - d) = 3|R| + 2|L| - d$. Since $d \le |L| + |R|$, we obtain
    \begin{displaymath}
    \frac{|NEARFIRST|}{|OPT|} = 1 + \frac{|NEARFIRST| - |OPT|}{|OPT|} \le 1 + \frac{M + |L| + |R|}{2|R| + |L|}.
    \end{displaymath}
Because $NEARFIRST$ considers $L$ the near extreme due to the predictions, by Lemma \ref{lem:LOCATIONS} we find that $|R| \ge \frac{1 - 2\eta}{2}(|L| + |R|)$, which in turn proves our bound.

We now give the formal proof of Theorem \ref{the:NearFirstBound}.
First of all, we present two lemmas that are very important. Their proofs
are deferred to the Appendix, since they also refer to the $PIVOT$ 
algorithm, which is introduced later.

\begin{restatable}{lemma}{OpenRobustness}\label{lem:OpenRobustness}
    Let $ALG$ be either $NEARFIRST$ or $PIVOT$.
    Then, $ALG$ is 3-robust.
    \end{restatable}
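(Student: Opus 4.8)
The plan is to reuse the worst-case ``cleanup'' argument from the proof of Lemma~\ref{FarFirstRobustness}, adapted to the open variant where no return to the origin is required. Let $t_f$ be the latest release time of the instance. Since both $NEARFIRST$ and $PIVOT$ recompute their plan at every release, and since at time $t_f$ all requests --- hence all predictions --- have been released, from $t_f$ on both algorithms operate in their \emph{cleanup mode}: they walk to the nearer of the two extreme outstanding requests and then sweep across to the other one (for $NEARFIRST$ this is precisely the branch of Algorithm~\ref{alg:NearFirst} taken when $P'$ is empty, and $PIVOT$ uses the same rule once no prediction remains). I would first record the three elementary facts that drive the estimate: $|OPT|\ge t_f$, because the request released at $t_f$ cannot be served earlier; $|OPT|\ge |L|+|R|$, because any walk from the origin visiting both extremes costs at least this much; and $|pos_{ALG}(t_f)|\le t_f$, by the unit-speed limit.

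Next I would bound $|ALG|$ by $t_f$ plus the length of the cleanup walk and split into cases on where $pos_{ALG}(t_f)$ lies. If $O(t_f)=\emptyset$ the claim is immediate, since $|ALG|=t_{serve}\le t_f\le |OPT|$. If $pos_{ALG}(t_f)\in[L,R]$, then the current position and both extreme outstanding requests all lie in $[L,R]$, so the cleanup walk (to one extreme, then to the other) has length at most $2(R-L)=2(|L|+|R|)$, whence $|ALG|\le t_f+2(|L|+|R|)\le |OPT|+2|OPT|=3|OPT|$. If instead $pos_{ALG}(t_f)>R$ (the case $pos_{ALG}(t_f)<L$ being symmetric), the algorithm has chased a far prediction and every outstanding request lies to its left inside $[L,R]$; the cleanup is then a single leftward sweep of length $pos_{ALG}(t_f)-\min O(t_f)\le pos_{ALG}(t_f)-L\le t_f+|L|$, giving $|ALG|\le 2t_f+|L|\le 2|OPT|+|OPT|=3|OPT|$, using $|L|\le |L|+|R|\le|OPT|$.

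The step I expect to be the main obstacle is verifying that both $NEARFIRST$ and $PIVOT$ genuinely take the efficient cleanup route, in particular that when $pos_{ALG}(t_f)\notin[L,R]$ the algorithm does not backtrack. For $NEARFIRST$ this is a short check of the comparison against $(\max O+\min O)/2$ in the empty-$P'$ branch: if $pos_{ALG}(t_f)>R\ge\max O$, then $pos_{ALG}(t_f)$ exceeds the midpoint, so the algorithm moves to $\max O$ first and then to $\min O$, i.e.\ it sweeps monotonically leftward, and symmetrically on the other side. Since $PIVOT$ differs from $NEARFIRST$ only in which side it commits to first (it uses the endpoint prediction as a pivot) while serving everything the same way once predictions are exhausted, the identical reasoning applies; I would isolate this shared cleanup property as a small observation and invoke it for both algorithms, after which the case analysis above closes the proof.
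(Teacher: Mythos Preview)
Your proposal is correct and follows essentially the same approach as the paper: both argue from the latest release time $t_f$, use the elementary bounds $|OPT|\ge t_f$ and $|OPT|\ge |L|+|R|$, and split into cases depending on whether $pos_{ALG}(t_f)$ lies inside or outside the relevant interval of outstanding work. The only differences are cosmetic---you case on $[L,R]$ while the paper cases on $[L(t_f),R(t_f)]$, and you explicitly verify the cleanup branch's monotone sweep when $pos_{ALG}(t_f)\notin[L,R]$, which the paper leaves implicit.
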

\begin{restatable}{lemma}{OpenDelay}\label{lem:OpenDelay}
    Let $ALG$ be either $NEARFIRST$ or $PIVOT$.
    Also, let $q_f$ be the request served last 
    by $OPT$. Assume without loss of generality that $ALG$ serves 
    requests from left to right. Let $d = |q_f - R|$.
    Then, we have $|ALG| - |OPT| \le M + d$.
\end{restatable}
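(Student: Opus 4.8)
The plan is to cut $|ALG|$ at a single well-chosen time and bound the two pieces separately, so first some remarks on uniformity. Since the statement already fixes the sweep direction (``left to right''), and both $NEARFIRST$ and $PIVOT$ behave identically once the direction is chosen — repeatedly moving to the leftmost outstanding request or unreleased prediction and then parking on the leftmost unreleased prediction — the argument below treats them uniformly and never refers to the rule that selected the direction. Let $t_{q_f}$ be the first instant at which $ALG$ has served every request of position at most $q_f$; since $q_f\in[L,R]$ and $ALG$ sweeps rightward through this interval, at $t_{q_f}$ the agent is essentially at $q_f$, and note that $t_{q_f}$ is well defined only once every request $\le q_f$ has been released. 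The two halves I would prove are: \textbf{(A)} $t_{q_f}\le |OPT|+M$, and \textbf{(B)} either $|ALG|\le t_{q_f}+d$ or already $|ALG|\le |OPT|+M+d$; together they give the lemma. The workhorse throughout is the elementary per-request bound: for any request $q$ served by $OPT$ at time $\tau\ge rel(q)$, $OPT$ is at position $q$ then and ends at $q_f$, so $|OPT|\ge \tau+|q-q_f|\ge rel(q)+|q-q_f|$.

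\textbf{Step (A).} To bound $t_{q_f}$ I would locate the bottleneck among the requests in $[L,q_f]$. If a late release dominates, pick $q\in[L,q_f]$ maximizing $rel(q)+|q-q_f|$; parked on the prediction $\pi(q)$ (within $M$ of $q$ by the definition of $M$), $ALG$ serves $q$ by around $rel(q)+M$ and then sweeps to $q_f$ in $|q-q_f|$ further time, so $t_{q_f}\le rel(q)+M+|q-q_f|\le|OPT|+M$. If instead travel dominates, $ALG$ walks left to the leftmost of its outstanding requests and unreleased predictions — a point within $M$ of $L$ by Claim~\ref{claim:extremesM} — and back to $q_f$; any idle waiting incurred there is waiting for a request $OPT$ has not served either, so it costs nothing against $OPT$, while comparing $ALG$'s route on the sub-instance obtained by deleting all requests right of $q_f$ (whose optimum is $\le|OPT|$, since deleting requests cannot hurt) against that sub-instance's optimum produces exactly the $M$ slack. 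The subtle point — and the step I expect to be the main obstacle — is showing that a leftward overshoot to a prediction at $L-M'$ ($M'\le M$) costs a net of at most $M$, not $2M$: the outbound leg past $L$ must be charged against idle time $ALG$ would have spent anyway waiting for that still-unreleased request, leaving only the $\le M$ return leg, and formalizing this charging requires tracking $ALG$'s position, the released set, and $OPT$'s obligations simultaneously.

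\textbf{Step (B).} At $t_{q_f}$ the agent sits near $q_f$ with all of $[L,q_f]$ served — so no request $\le q_f$ is still unreleased — and it must serve $(q_f,R]$ by sweeping rightward. If it travels straight to $R$ with no parking and no overshoot, then $|ALG|=t_{q_f}+d$ and (A) finishes the proof. Otherwise its makespan is governed either by a request $q\in(q_f,R]$ released so late that $ALG$ idles on $\pi(q)$ before serving it, or by a rightward overshoot of at most $M$ past $R$ to chase the last unreleased prediction; choosing $q$ to maximize $rel(q)+|q-q_f|$ over $(q_f,R]$ guarantees no further backtracking after $q$, so $|ALG|\le rel(q)+M+(R-q)\le |OPT|+M+d$ using $R-q<d$ for the trailing sweep and the per-request bound for $q$, and the overshoot case is subsumed identically since the request pulling $ALG$ back lies within $M$ of $R$ and is released late. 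One must still check that at most a single $M$ and a single $d$ ever accumulate across these possibilities — for instance, an overshoot that happens \emph{during} an idle wait is free — but this is a finite, mechanical case check rather than a conceptual difficulty. Combining (A) and (B) yields $|ALG|-|OPT|\le M+d$, and since nothing above used the direction-selection rule, the argument applies verbatim to both $NEARFIRST$ and $PIVOT$.
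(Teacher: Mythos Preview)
Your split at $t_{q_f}$ is exactly the paper's \emph{informal} sketch (see the paragraph introducing the proof of Theorem~\ref{the:NearFirstBound}), but the paper's \emph{formal} proof does not follow that route. Instead it tracks the single scalar potential
\[
D(t)=|pos_{ALG}(t)-L_O(t)|+|L_O(t)-L_U[t]|,
\]
which measures how long $ALG$ still needs to clear everything left of the current ``checkpoint'' $L_U[t]$ and arrive there. The pivot is not $t_{q_f}$ but $t_{chase}$, the last instant up to the final release after which $D$ stays above $M$; a chainsaw-type claim (Claim~\ref{claim:OpenChainSaw}) shows $D$ cannot creep above $M$ between releases, so $D(t_{chase})\le M$, and a companion claim (Claim~\ref{claim:OpenFastFinish}) shows that from $t_{chase}$ on $ALG$ runs straight to $R$. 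The lemma then drops out of the one-line comparison $|OPT|\ge t_{chase}+|L_U[t_{chase}]-q_f|$ versus $|ALG|\le t_{chase}+M+|L_U[t_{chase}]-R|$, together with a separate elementary computation when $t_{chase}=0$.

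The potential $D$ is precisely the device that discharges what you correctly flag as the main obstacle --- the simultaneous bookkeeping of $ALG$'s position, the released set, and $OPT$'s obligations --- and it does so without any case split into ``late release'' versus ``travel''. Your direct plan has real soft spots beyond that acknowledged obstacle. First, $ALG$ need not sit at $q_f$ at time $t_{q_f}$: the last request $\le q_f$ to be served may be some $q'<q_f$ reached by a backtrack, so the clean ``$|ALG|=t_{q_f}+d$'' accounting in (B) already misses a term. Second, after $t_{q_f}$ the leftmost unreleased prediction can still lie \emph{left} of $q_f$ (it belongs to a request in $(q_f,R]$ whose prediction is off by up to $M$), so $ALG$ may backtrack again below $q_f$ --- a case entirely outside your (B) dichotomy. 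Third, the ``maximize $rel(q)+|q-q_f|$'' selections do not by themselves rule out several successive $M$-sized backtracks; each is individually harmless only because the associated release also stalls $OPT$, and that coupling is exactly what $D(t)$ encodes. Your outline can be repaired, but the repair essentially reinvents the potential; so the plan is sound in spirit yet genuinely different from, and messier than, the paper's formalization.
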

The robustness part of Theorem \ref{the:NearFirstBound} is implied by Lemma \ref{lem:OpenRobustness}.

Now, to prove Theorem \ref{the:NearFirstBound}, it remains to show the consistency/smoothness part, which is given by the following lemma.

\begin{lemma}\label{lem:NearFirstSmoothness}
    The algorithm $NEARFIRST$ is $f(\eta)$-smooth for $\eta < \frac{2}{3}$, where $f(\eta) = 1 + \frac{2(1 + \eta)}{3 - 2\eta}$.
\end{lemma}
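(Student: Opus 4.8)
The plan is to establish Lemma~\ref{lem:NearFirstSmoothness} by combining the two general-purpose bounds already available for $NEARFIRST$, namely the delay bound $|NEARFIRST| - |OPT| \le M + d$ from Lemma~\ref{lem:OpenDelay} (where $d = |q_f - R|$ and $q_f$ is the last request served by $OPT$) together with a lower bound on $|OPT|$ expressed in terms of $d$. Concretely, I would first observe that since $OPT$ must visit both extremes $L$ and $R$ and finish at $q_f$, and the best it can do is traverse the two extremes and then reach $q_f$, the worst of the two natural orderings gives
\begin{displaymath}
|OPT| \ge 2|R| + |L| + \bigl(|L| + |R| - d\bigr) = 3|R| + 2|L| - d.
\end{displaymath}
(The other ordering, $L$ then $R$ then $q_f$, yields an even larger quantity, so this is the binding one.) I would also record the trivial bound $d \le |L| + |R|$, which lets me pass from the $d$-dependent estimate to a $d$-free one.

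Next I would assemble the ratio. Writing $S = |L| + |R|$, the delay bound gives $|NEARFIRST| \le |OPT| + M + d$, so
\begin{displaymath}
\frac{|NEARFIRST|}{|OPT|} \le 1 + \frac{M + d}{|OPT|} \le 1 + \frac{M + d}{3|R| + 2|L| - d}.
\end{displaymath}
The right-hand side is increasing in $d$ (numerator grows, denominator shrinks), so I substitute the worst case $d = S = |L| + |R|$ to get $1 + \frac{M + S}{2|R| + |L|}$. This is exactly the expression flagged in the excerpt's proof sketch, so the remaining work is to control the denominator $2|R| + |L|$ from below in terms of $S$.

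The key structural input is that $NEARFIRST$, having been run under the assumption that $L$ is the near extreme, must have seen $|L_P| \le |R_P|$; Lemma~\ref{lem:LOCATIONS} then yields $|R| \ge |L| - 2M$, equivalently $2|R| \ge |L| + |R| - 2M = S - 2M$, i.e. $|R| \ge \tfrac12(S) - M$. Since $M = \eta S$, this is $|R| \ge \tfrac{1-2\eta}{2}S$, and then $2|R| + |L| = |R| + S \ge \bigl(1 + \tfrac{1-2\eta}{2}\bigr)S = \tfrac{3 - 2\eta}{2}S$. Plugging this in, together with $M = \eta S$ in the numerator,
\begin{displaymath}
\frac{|NEARFIRST|}{|OPT|} \le 1 + \frac{\eta S + S}{\tfrac{3-2\eta}{2}S} = 1 + \frac{2(1+\eta)}{3 - 2\eta},
\end{displaymath}
which is precisely $f(\eta)$, and the hypothesis $\eta < \tfrac23$ is exactly what keeps the denominator $3 - 2\eta$ positive so that all these manipulations are valid.

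I expect the only genuinely delicate point to be the monotonicity claim that the bound is worst at $d = |L| + |R|$: one must make sure the denominator $3|R| + 2|L| - d$ stays strictly positive over the admissible range of $d$, which follows because $d \le |L| + |R| \le 2|R| + |L| < 3|R| + 2|L|$ once $|R| > 0$ (and the degenerate case $|R| = 0$, forcing all requests at the origin, is trivial). Everything else is a routine substitution of $M = \eta S$ and an application of the already-proved Lemma~\ref{lem:LOCATIONS} and Lemma~\ref{lem:OpenDelay}; the theorem then follows immediately by combining this smoothness lemma with the robustness Lemma~\ref{lem:OpenRobustness} and taking the minimum.
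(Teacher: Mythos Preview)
Your overall structure is right and matches the paper's: invoke Lemma~\ref{lem:OpenDelay} for the delay, lower-bound $|OPT|$ via the tour through $L$, $R$, $q_f$, maximize over $d$, and then use Lemma~\ref{lem:LOCATIONS} to convert $2|R|+|L|$ into $\tfrac{3-2\eta}{2}S$. The final inequality you reach is exactly Equation~\eqref{BothCasesNearFirstBound} in the paper.

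There is, however, a genuine gap in the step where you claim that the ordering $R,L,q_f$ is ``the binding one'' and that the ordering $L,R,q_f$ ``yields an even larger quantity''. The two path lengths are $3|R|+2|L|-d$ and $2|L|+|R|+d$, and the second is larger than the first only when $d \ge |R|$. For instance, if $q_f = R$ (so $d=0$) the $L,R,q_f$ path has length $2|L|+|R|$, strictly smaller than $3|R|+2|L|$; hence $|OPT| \ge 3|R|+2|L|-d$ is simply false in that regime. You therefore cannot bypass the case split.

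The paper handles this by treating the two orders separately. In the $L,R,q_f$ case one gets $|OPT| \ge 2|L|+|R|+d$ and the ratio bound $1 + \frac{M+d}{2|L|+|R|+d}$; here monotonicity in $d$ is \emph{not} automatic (numerator and denominator both grow) and one must check that the derivative has the sign of $2|L|+|R|-M$, which is where the hypothesis $\eta < \tfrac{2}{3}$ does real work beyond keeping $3-2\eta$ positive. Maximizing at $d = |L|+|R|$ then gives $1+\frac{M+|L|+|R|}{3|L|+2|R|}$, which is dominated by your bound $1+\frac{M+|L|+|R|}{2|R|+|L|}$, so both cases feed into the same final estimate and the rest of your argument goes through unchanged.
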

\begin{proof}
    Assume w.l.o.g. that $NEARFIRST$ serves the left extreme first. By Lemma \ref{lem:OpenDelay}, we see that $|NEARFIRST| - |OPT| \le M + d$,
    where $d$ is the distance of $R$ to the request 
    $q_f$ served last by $OPT$.
    We distinguish two cases based on the order in which $OPT$
    serves the requests in $\{L, R, q_f\}$.
    \textbf{Case 1.} $OPT$ serves $L, R$ and then $q_f$. It can be 
    seen then that
        \begin{displaymath}
            \frac{|NEARFIRST|}{|OPT|} \le 1 + \frac{M + d}{2|L| + |R| + d}.
        \end{displaymath}
    The derivative of the right hand side with respect to $d$ is
        \begin{displaymath}
            \dfrac{|R| + 2|L| - M}{\left(d+|R|+2|L|\right)^2}.
        \end{displaymath}
    Because $\eta < \frac{2}{3}$, it must hold that $M < \frac{2}{3}(|L| + |R|) \implies \dfrac{|R| + 2|L| - M}{\left(d+|R|+2|L|\right)^2} > 0$.
    Thus, we may maximize $d$ to get an upper bound that is valid for any value of $d$.
    Because $d \le |L| + |R|$ we see that
        \begin{displaymath}
            \frac{|NEARFIRST|}{|OPT|} \le 1 + \frac{M + |L| + |R|}{3|L| + 2|R|}.
        \end{displaymath}
    \textbf{Case 2.} $OPT$ serves $R, L$ and then $q_f$. It can be 
    seen then that
        \begin{displaymath}
            \frac{|NEARFIRST|}{|OPT|} \le 1 + \frac{M + d}{2|R| + |L| + (|L| + |R| - d)} = 1 + \frac{M + d}{3|R| + 2|L| - d}.
        \end{displaymath}
    We can see that the right hand side is an
    increasing function of $d$. Thus, we may again maximize $d$ to get an upper bound.
    \begin{equation}\label{BothCasesNearFirstBound}
        \frac{|NEARFIRST|}{|OPT|} \le 1 + \frac{M + |L| + |R|}{2|R| + |L|}.
    \end{equation}
    In both cases, the bound of Equation \eqref{BothCasesNearFirstBound}
    is valid. Noting that $|L_P| \le |R_P|$ (because $NEARFIRST$ chose to go 
    to the left first), we now use Lemma \ref{lem:LOCATIONS} to finalize 
    our proof. 
        \begin{displaymath}
            |L_P| \le |R_P| \implies |R| \ge |L| - 2M \implies 2|R| \ge |L| + |R| - 2M =
        \end{displaymath}
        \begin{displaymath}
            (|L| + |R|)(1 - 2\eta) \implies |R| \ge \frac{1 - 2\eta}{2}(|L| + |R|) \implies
        \end{displaymath}
        \begin{displaymath}
            1 + \frac{M + |L| + |R|}{2|R| + |L|} \le 1 + \frac{(1 + \eta)(|L| + |R|)}{(1 + \frac{1 - 2\eta}{2})(|L| + |R|)} = 1 + \frac{2(1 + \eta)}{3 - 2\eta} \implies
        \end{displaymath}
        \begin{displaymath}
            \frac{|NEARFIRST|}{|OPT|} \le 1 + \frac{2(1 + \eta)}{3 - 2\eta}.
        \end{displaymath}
\end{proof}
We now give the proof of Theorem \ref{the:NearFirstBound}.
\begin{proof}[Proof of Theorem \ref{the:NearFirstBound}.]
By Lemma $\ref{lem:OpenRobustness}$, $NEARFIRST$ is 3-robust.
Additionally, by Lemma \ref{lem:NearFirstSmoothness}, $NEARFIRST$
is $(1 + \frac{2(1 + \eta)}{3 - 2\eta})$-smooth. Thus, Theorem \ref{the:NearFirstBound} holds.
\end{proof}

\paragraph*{A $1.\overline{44}$-attack.}
For this setting, we use an attack strategy that is very similar to 
the one introduced in Section \ref{section:closedLocations}. This allows us
to obtain the following theorem.
\CompetitiveRatioOpen*
The logic behind the attack we give here is exactly the 
same as the one used to prove Theorem \ref{the:CompetitiveRatio}.
There are two main differences demanded by the nature of the open 
variant. 

One is that the first phase is shorter in this attack. 
Instead of stopping when $ALG$ exits $[L_U(t), R_U(t)]$, the phase 
now stops when $ALG$ exits the interval $[3L_U(t) + 2, 3R_U(t) - 2]$.
This is so that both options of $ALG$ (switch to the other side or wait for the "frozen" requests) are equally hurtful.

The other difference lies in the release times of the 
second phase. Each request on the side chosen by $ALG$ 
now has its release time delayed to $2 + d$ (instead of $4 - d$), where $d$ is 
the request's distance from the origin. This is so $OPT$ can 
finish by $t = 3$, which is the fastest possible even if all 
requests are released immediately.

In the following, we assume without loss of generality that $ALG$ exits the 
interval $[3L_U(t) + 2, 3R_U(t) - 2]$ from the left side. We define the family $F_O$ of inputs just like we defined $F_C$ in the proof of Theorem \ref{the:CompetitiveRatio}. Of course, the release times are different as explained in the previous paragraphs.
 
An immediate observation that we have already mentioned is the following lemma.

\begin{lemma}\label{lem:OpenAttackOPT3}
    For any instance $f$ in the family $F_O$, $|OPT| = 3$.
\end{lemma}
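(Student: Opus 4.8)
The plan is to verify directly that $OPT$ can serve all requests and stop (the open variant has no return requirement) by time exactly $3$, and that it cannot do better. For the upper bound, recall the release schedule: the side $ALG$ did \emph{not} exit from (say the right side, under our standing assumption that $ALG$ leaves through the left) has every request at distance $d$ released at time $2-d$, exactly as in the first phase; the side $ALG$ commits to (the left) has each unreleased request at distance $d$ delayed to release time $2+d$. I would exhibit the offline route: $OPT$ starts at the origin, goes right reaching the point $1$ at time $1$, so at every intermediate time $t\le 1$ it sits at position $t$ and the request at distance $d=t$ on the right has release time $2-t \ge 1 > t$... wait — that is not served yet. So instead I would have $OPT$ move right \emph{more slowly}, or equivalently move right and come back: the clean description is that $OPT$ first goes to $1$ and returns to the origin, arriving at time $2$, picking up each right-side request at distance $d$ at time $2-d$ on the \emph{return} leg (since on the way back from $1$ it is at position $d$ at time $2-d$, precisely the release time). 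Then from the origin at time $2$ it proceeds left; each left-side request at distance $d$ has release time $2+d$, which is exactly when $OPT$ (moving at unit speed from the origin) reaches position $-d$. Hence $OPT$ reaches $-1$ at time $3$, having served everything, and stops there. So $|OPT|\le 3$.

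For the lower bound, I would argue that $3$ is forced for \emph{any} offline strategy on this instance, independent of $ALG$'s behavior beyond which side it exits. The extreme request at $-1$ on the committed side has release time $2+1=3$, so it cannot be served before time $3$; therefore $|OPT|\ge 3$. Combined with the upper bound, $|OPT|=3$. (If one is worried about the degenerate case $n=2$, the requests are exactly at $-1$ and $1$ together with the origin request $q_0$; the argument is unchanged since the extreme on the committed side still has release time $3$.)

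The only subtlety — and the step I would be most careful about — is making sure the schedule described in the setup actually lets $OPT$ perform the return-then-sweep route \emph{without ever waiting}, i.e. that the release time $2-d$ on the uncommitted side is met on the return leg rather than the outbound leg, and that $OPT$'s commitment to going right first (rather than left first) is consistent with which side got "frozen". Since the freezing is triggered by $ALG$ exiting on the left, it is the left side that is delayed, so $OPT$ should sweep the \emph{right} side first (the un-delayed one) and the left side last; this matches the route above and no waiting occurs. I would state this cleanly and note that it is exactly the design goal quoted in the paragraph preceding the lemma ("This is so $OPT$ can finish by $t=3$, which is the fastest possible even if all requests are released immediately"), so the proof is essentially a one-line verification of the route plus the release-time lower bound on the committed extreme.
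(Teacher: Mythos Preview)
Your upper-bound argument is correct and is exactly the route the paper gives: $OPT$ walks to $1$, returns to the origin by time $2$ (collecting each right-side request at distance $d$ on the return leg at time $2-d$, its release time), and then walks to $-1$ by time $3$ (collecting each left-side request at distance $d$ no later than time $2+d$, which covers both the first-phase releases and the delayed ones).

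Your lower-bound argument, however, has a gap. You assert that the extreme request at $-1$ on the committed side has release time $2+1=3$. That would hold only if this request were still \emph{unreleased} at $t_{commit}$ and therefore subject to the second-phase delay. But by Claim~\ref{claim:OpenAttackTCommitInterval} we have $t_{commit}\ge 1$, while the first-phase release time of the request at $-1$ is $2-1=1$; so this request is already released when the second phase begins and is \emph{not} delayed. Its release time is $1$, not $3$, and your argument does not go through.

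The correct lower bound is the purely geometric one the paper alludes to just before the lemma (``the fastest possible even if all requests are released immediately''): any tour starting at the origin that visits both extremes $-1$ and $1$ has length at least $\min\{2|L|+|R|,\,|L|+2|R|\}=3$, so $|OPT|\ge 3$ regardless of release times. The paper's own proof only exhibits the route and then writes $|OPT|=3$, treating this bound as understood; replacing your release-time argument with this one completes the proof.
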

\begin{proof}
    Since $ALG$ exits the interval from the left side, each request $q_r$ on the right side is released at time
    $2 - |q_r|$. Thus, by moving to $1$ and back, $OPT$ serves all the requests on the right side. Additionally, each request $q_l$ on the left side is released no later than $2 + |q_l|$, allowing $OPT$ to serve these requests by just moving to $-1$ after reaching the origin at $t = 2$. Therefore, $OPT$ can serve all the requests by time $3$, i.e. $|OPT| = 3$.
\end{proof}

The next piece of the puzzle is a lower bound 
on $ALG$. This is given by the following lemma.

\begin{lemma}\label{lem:ALGSlowOpen}
    For any instance $f$ in the family $F_O$ and any algorithm $ALG$,
    $|ALG| \ge \frac{13}{3} - 3\alpha(f)$,
    where $\alpha(f)$ is the distance between consecutive requests in $f$.
\end{lemma}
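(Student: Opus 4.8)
The plan is to mirror the case analysis used for the closed variant (Lemma \ref{lem:ALGSlow}), adapting the bookkeeping to the shorter first phase and the new second-phase release rule. First I would set up the same notation: let $t_{commit}$ be the start of the second phase, i.e. the first time $ALG$ exits the interval $[3L_U(t) + 2, 3R_U(t) - 2]$, which we assume happens on the left. Define $d_{commit} = |L_U(t_{commit})|$. The analogues of Claims \ref{claim:posCommitInterval}, \ref{claim:timeCommitInterval} and \ref{claim:noServesFirstPhase} should all go through with minor modifications: $ALG$ has served nothing before $t_{commit}$ (it never reached a released request, since it stayed inside the unreleased interval, which strictly contains the released ones during the first phase); we still have $pos_{ALG}(t_{commit}) \le L_U(t_{commit})$ up to an $\alpha(f)$ slack; and a lower bound on $t_{commit}$ in terms of $d_{commit}$ and $\alpha(f)$ follows from the release schedule $2 - |q|$ of the first phase combined with the condition defining when the first phase stops. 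The key recalibration is that the exit now happens at $|pos| \ge 3 \cdot 1 - 2 = 1$ only at the boundary, so I need to be careful to extract the right inequality relating $t_{commit}$, $pos_{ALG}(t_{commit})$ and $d_{commit}$; this is where I'd expect to spend the most care, since the constant $\tfrac{13}{3}$ comes out of balancing exactly these terms.

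Next I would enumerate, just as before, the orders in which $ALG$ can serve the three still-outstanding "extreme-ish" requests near $-1$, $L_U(t_{commit})$, and $1$, and lower-bound $|ALG|$ in each case. The three cases are: (1) $ALG$ goes to $1$ first, then $-1$ (and can pick up $L_U(t_{commit})$ along the way), costing at least $t_{commit} + |pos_{ALG}(t_{commit}) - 1| + 2$, since after reaching $1$ it must still travel the full width $[-1,1]$ and the open variant does not require returning to the origin; (2) $ALG$ serves $-1$ first, then $1$, then must double back to the frozen request $L_U(t_{commit})$, costing at least $t_{commit} + |pos_{ALG}(t_{commit}) - (-1)| + 2 + |1 - L_U(t_{commit})|$; (3) $ALG$ waits for $L_U(t_{commit})$ to be released (at time $2 + d_{commit}$) before serving $1$, costing at least $(2 + d_{commit}) + |L_U(t_{commit}) - 1| = 2 + d_{commit} + 1 + d_{commit}$, wait — here I must recheck whether $ALG$ still has $-1$ outstanding and whether it finishes at $1$ or must return; the open variant lets it finish at whichever extreme it serves last, so I'd track this carefully. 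In each case I substitute the lower bound on $t_{commit}$ and on $pos_{ALG}(t_{commit})$, collect the $d_{commit}$ terms (they should cancel, as in the closed case), and the $\alpha(f)$ terms accumulate to at most $3\alpha(f)$.

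The conclusion is that in every case $|ALG| \ge \tfrac{13}{3} - 3\alpha(f)$, possibly after observing that one case is not tight and gives a strictly larger bound, so the minimum over cases is the claimed value; combined with Lemma \ref{lem:OpenAttackOPT3} this yields a competitive ratio of at least $\tfrac{13/3 - 3\alpha(f)}{3} \to \tfrac{13}{9} = 1.\overline{44}$ as $\alpha(f) \to 0$ via Claim \ref{claim:distanceOfRequests}. The main obstacle, as flagged, is getting the constants to balance: the choice of the first-phase cutoff interval $[3L_U + 2, 3R_U - 2]$ is precisely engineered so that the "switch sides" branch and the "wait for frozen requests" branch both bottom out at $\tfrac{13}{3}$, and verifying this requires pinning down the exact relationship between $t_{commit}$ and $d_{commit}$ forced by that cutoff — everything else is the same structural argument as the closed-variant lemma, just with $|OPT| = 3$ and no mandatory return trip.
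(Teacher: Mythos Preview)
Your overall approach matches the paper's: same notation, same three-case split on the serving order of $\{-1, L_U(t_{commit}), 1\}$, same supporting claims carried over from the closed variant. The one genuine gap is your expectation that ``the $d_{commit}$ terms should cancel, as in the closed case.'' They do not. With the new exit rule, the position bound becomes $pos_{ALG}(t_{commit}) \le 3L_U(t_{commit}) + 2 = 2 - 3d_{commit}$, so $|pos_{ALG}(t_{commit})| \ge 3d_{commit} - 2$ rather than $\ge d_{commit}$. Plugging this into, say, your Case (1) together with $t_{commit} \ge 2 - d_{commit} - \alpha(f)$ gives
\[
|ALG| \ge (2 - d_{commit} - \alpha(f)) + (3d_{commit} - 2) + 1 + 2 = 3 + 2d_{commit} - \alpha(f),
\]
and the same $3 + 2d_{commit}$ floor appears in the other cases. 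So you still need a \emph{lower} bound on $d_{commit}$ to finish.

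That missing ingredient is the analogue of Claim~\ref{claim:commit_1_2} for the new interval: since $[3L_U(t)+2, \, 3R_U(t)-2]$ collapses to a point once $|L_U(t)| = |R_U(t)| = \tfrac{2}{3}$, one gets $t_{commit} \le \tfrac{4}{3}$, and combining with $t_{commit} \ge 2 - d_{commit} - \alpha(f)$ yields $d_{commit} \ge \tfrac{2}{3} - \alpha(f)$. Substituting this into $3 + 2d_{commit} - \alpha(f)$ gives exactly $\tfrac{13}{3} - 3\alpha(f)$. You correctly sensed that the cutoff $[3L_U+2, 3R_U-2]$ is engineered to balance the branches, but the mechanism is this lower bound on $d_{commit}$, not cancellation.
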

To prove this lemma, we will use some claims,
many of which are very similar to claims used for 
Lemma \ref{lem:ALGSlow}. To present these claims, we introduce two important terms. 

We denote with $t_{commit}$ the start time of the second phase, i.e. 
\begin{displaymath}
    t_{commit} = min(\{t \: : \: \lnot (3L_U(t) + 2 < pos_{ALG}(t) < 3R_U(t) - 2)\}).
\end{displaymath}
Additionally, we draw attention to the value $d_{commit} = |L_U(t_{commit})|$, which is very important for the attack.

We now present the claims which are very similar to those used for the closed variant. Claims 
\ref{claim:OpenAttackTCommitInterval}, \ref{claim:OpenAttackPosCommitInterval} and \ref{claim:OpenAttackTDCommit} can be seen in the same way as Claims \ref{claim:commit_1_2}, \ref{claim:posCommitInterval} and \ref{claim:timeCommitInterval}, respectively.

\begin{claim}\label{claim:OpenAttackTCommitInterval}
    $1 \le t_{commit} \le 1 + \frac{1}{3}$.
\end{claim}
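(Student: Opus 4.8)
The plan is to mirror the proof of Claim~\ref{claim:commit_1_2}, establishing the lower and upper bounds on $t_{commit}$ separately, each by contradiction.

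For the lower bound $t_{commit} \ge 1$: recall that during the first phase a request at distance $d$ from the origin is released at time $2-d$, and every request of $F_O$ lies in $[-1,1]$; hence no request is released strictly before time $1$. Consequently $L_U(t) = -1$ and $R_U(t) = 1$ for every $t < 1$, so the first-phase interval $[\,3L_U(t)+2,\,3R_U(t)-2\,]$ equals $[-1,1]$ there. If $t_{commit} < 1$, then by the definition of $t_{commit}$ we would need $pos_{ALG}(t_{commit}) \notin (-1,1)$, i.e.\ $|pos_{ALG}(t_{commit})| \ge 1$, contradicting the unit-speed bound $|pos_{ALG}(t_{commit})| \le t_{commit} < 1$ (using $pos_{ALG}(0)=0$).

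For the upper bound $t_{commit} \le 1+\tfrac13$: suppose instead $t_{commit} > \tfrac43$, so the strict containment $3L_U(\tfrac43)+2 < pos_{ALG}(\tfrac43) < 3R_U(\tfrac43)-2$ holds at time $\tfrac43$. By time $\tfrac43$ every request at distance $d \ge \tfrac23$ from the origin has already been released (its release time is $2-d \le \tfrac43$), so every still-unreleased request has amplitude strictly less than $\tfrac23$; hence $L_U(\tfrac43) > -\tfrac23$ and $R_U(\tfrac43) < \tfrac23$, which gives $3R_U(\tfrac43)-2 < 0 < 3L_U(\tfrac43)+2$. Thus the interval $[\,3L_U(\tfrac43)+2,\,3R_U(\tfrac43)-2\,]$ is empty and no position can lie strictly inside it, a contradiction. (If no unreleased request remains at time $\tfrac43$, then $L_U(\tfrac43)$ and $R_U(\tfrac43)$ equal their fallback values $R$ and $L$, and the interval is all the more trivially empty.)

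I do not anticipate a real obstacle. The only point needing a moment's care is the boundary behaviour of $L_U(\cdot)$ and $R_U(\cdot)$: the defining inequality on release times is strict, so a request at distance exactly $\tfrac23$ counts as released at time $\tfrac43$ and therefore does not spoil $|L_U(\tfrac43)|, |R_U(\tfrac43)| < \tfrac23$; and the fallback values $L_{lim}, R_{lim}$ only ever shrink the interval, so they cannot invalidate the emptiness conclusion.
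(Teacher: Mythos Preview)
Your proof is correct and follows essentially the same approach the paper intends: the paper does not spell out a proof here but explicitly says the claim ``can be seen in the same way as Claim~\ref{claim:commit_1_2},'' and your two contradiction arguments are precisely the natural adaptation of that proof to the modified interval $[3L_U(t)+2,\,3R_U(t)-2]$ and the new threshold $\tfrac{4}{3}$. Your handling of the strict-inequality boundary at $t=\tfrac{4}{3}$ and of the fallback values $L_{lim}=R$, $R_{lim}=L$ is accurate and more careful than the paper's own terse reference.
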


\begin{claim}\label{claim:OpenAttackPosCommitInterval}
    $pos(t_{commit}) \le 3L_U(t_{commit}) + 2$.
\end{claim}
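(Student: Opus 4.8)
The plan is to replay the two-case argument behind Claim~\ref{claim:posCommitInterval}, with the closed-variant safe region $[L_U(t),R_U(t)]$ replaced by the open-variant one $[\,3L_U(t)+2,\ 3R_U(t)-2\,]$. Recall that $t_{commit}$ is the first time the first-phase predicate $3L_U(t)+2 < pos_{ALG}(t) < 3R_U(t)-2$ stops holding, and that we have assumed without loss of generality that it stops holding ``on the left'', i.e.\ ALG commits to the left extreme; so I must check every way the predicate can fail and in each case land with $pos_{ALG}(t_{commit}) \le 3L_U(t_{commit})+2$. First I would record two standing facts that drive everything: $pos_{ALG}(\cdot)$ is $1$-Lipschitz, hence continuous; and $L_U(\cdot)$ is non-decreasing, since the set of unreleased requests only shrinks with time, so the left endpoint $3L_U(t)+2$ of the safe region is non-decreasing and changes only at the (finitely many) first-phase release times.

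\textbf{Case 1:} ALG leaves the safe region under its own motion, i.e.\ no request is released exactly at $t_{commit}$. Then $3L_U(\cdot)+2$ is constant on a left neighbourhood of $t_{commit}$, and since $pos_{ALG}$ is continuous and was strictly inside the region an instant earlier, the agent reaches the left endpoint exactly: $pos_{ALG}(t_{commit}) = 3L_U(t_{commit})+2$. \textbf{Case 2:} the predicate fails because a request is released at $t_{commit}$; in the left-exit case this forces the currently leftmost unreleased request to be among those released at $t_{commit}$, so $L_U$ jumps up and the left endpoint slides rightwards, and since $pos_{ALG}(t_{commit}) = pos_{ALG}(t_{commit}^-)$ by continuity while the predicate now fails on the left, we again obtain $pos_{ALG}(t_{commit}) \le 3L_U(t_{commit})+2$. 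In both cases the claimed inequality holds.

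I do not anticipate a genuine obstacle, since this is a direct adaptation of Claim~\ref{claim:posCommitInterval}. The mild points of care are: invoking the ``exits on the left'' hypothesis to discard the symmetric failure $pos_{ALG}(t)\ge 3R_U(t)-2$ (an exit on the right); the degenerate possibility that the safe region shrinks onto (or past) a single point rather than being crossed at a still-open endpoint — ruled out after $t=\frac{4}{3}$ by Claim~\ref{claim:OpenAttackTCommitInterval}, and otherwise handled by classifying the commit as left or right according to which side of the collapse point ALG sits on; and, for the companion lower bound $pos_{ALG}(t_{commit}) \ge 3L_U(t_{commit})+2-3\alpha(f)$ that feeds into Lemma~\ref{lem:ALGSlowOpen}, keeping track of how far $L_U$ can jump in Case~2 — unlike the closed variant, where the floor $L_{lim}=0$ pins the jump to exactly $\alpha(f)$, here it can be a small constant multiple of $\alpha(f)$, which is exactly the source of the $3\alpha(f)$ slack in that lemma.
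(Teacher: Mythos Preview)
Your proposal is correct and follows exactly the approach the paper intends: the paper does not give a standalone proof of this claim but simply notes that it ``can be seen in the same way as'' Claim~\ref{claim:posCommitInterval}, whose two-case argument (ALG moves out of the safe interval versus ALG is forced out by a release) you have faithfully replayed with the interval $[L_U(t),R_U(t)]$ replaced by $[3L_U(t)+2,\,3R_U(t)-2]$.
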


\begin{claim}\label{claim:OpenAttackTDCommit}
 For any instance $f$ in the family $F_O$, $t_{commit} \ge 2 - d_{commit} - \alpha(f)$.
\end{claim}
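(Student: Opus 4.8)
The plan is to carry the argument over, essentially line for line, from the closed-variant analogue, Claim~\ref{claim:timeCommitInterval}. The first-phase release schedule of $F_O$ coincides with that of $F_C$ — a request at distance $d$ from the origin is released at time $2-d$ — and the quantity $2-d_{commit}-\alpha(f)$ will play exactly the role it did in the closed case. The underlying idea is simply that $d_{commit}=|L_U(t_{commit})|$ measures how much of the side $ALG$ commits to has \emph{not} yet been released by time $t_{commit}$, and the release rule then forces a matching lower bound on $t_{commit}$: the request just to the left of $L_U(t_{commit})$ is not released before time $2-d_{commit}-\alpha(f)$, so $L_U(t)$ cannot already have climbed up to $L_U(t_{commit})$ before then.

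Concretely, I would argue by a short case split. If $L_U(t_{commit})=-1$, then $d_{commit}=1$, so $2-d_{commit}-\alpha(f)=1-\alpha(f)\le 1\le t_{commit}$ by Claim~\ref{claim:OpenAttackTCommitInterval}, and we are done. Otherwise $L_U(t_{commit})>-1$, so the request grid contains the adjacent point $L_{prev}:=L_U(t_{commit})-\alpha(f)$, which lies at distance at most $d_{commit}+\alpha(f)$ from the origin; under the first-phase rule ($2-d$) such a request is released at time $2-|L_{prev}|\ge 2-d_{commit}-\alpha(f)$, and under the committed-side second-phase rule ($2+d$) it is released even later, so in all cases $L_{prev}$ is unreleased at every time $t<2-d_{commit}-\alpha(f)$, whence $L_U(t)\le L_{prev}$ for such $t$. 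Suppose, for contradiction, that $t_{commit}<2-d_{commit}-\alpha(f)$; applying the previous observation with $t=t_{commit}$ gives $L_U(t_{commit})\le L_{prev}=L_U(t_{commit})-\alpha(f)$, a contradiction. Hence $t_{commit}\ge 2-d_{commit}-\alpha(f)$.

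Honestly, there is no serious obstacle in this particular claim: it is a routine transcription of Claim~\ref{claim:timeCommitInterval}, the only point meriting a second glance being that the second phase of $F_O$ postpones committed-side requests to $2+d$ rather than to $4-d$ — and since $2+d\ge 2-d$ for $d\ge 0$, this change only delays $L_{prev}$, which is harmless for the inequality above. The substantive work of this attack lies downstream: combining $|OPT|=3$ (Lemma~\ref{lem:OpenAttackOPT3}) with this bound on $t_{commit}$, the position bound $pos_{ALG}(t_{commit})\le 3L_U(t_{commit})+2$ (Claim~\ref{claim:OpenAttackPosCommitInterval}), and the fact that $ALG$ has served nothing by $t_{commit}$, so as to derive $|ALG|\ge \frac{13}{3}-3\alpha(f)$ in Lemma~\ref{lem:ALGSlowOpen} — where the factor $3$ in front of $\alpha(f)$, and hence the exact constant $\frac{13}{3}$, traces back precisely to the scaled interval $[3L_U(t)+2,\,3R_U(t)-2]$ used to define the first phase.
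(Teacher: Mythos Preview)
Your proof is correct and follows exactly the approach the paper intends: the paper simply states that Claim~\ref{claim:OpenAttackTDCommit} ``can be seen in the same way as'' Claim~\ref{claim:timeCommitInterval}, and your argument is precisely that transcription. Your extra remark that the open-variant second-phase rule $2+d$ only delays $L_{prev}$ further (compared to $2-d$) is a small but valid refinement, since under the contradiction hypothesis $L_{prev}$ would indeed fall into the second phase.
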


Moreover, we also have the following claim.
\begin{claim}\label{claim:OpenAttackDCommit23}
    For any instance $f$ of the family $F_O$, $d_{commit} \ge \frac{2}{3} - \alpha(f)$.
\end{claim}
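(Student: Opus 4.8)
The plan is to obtain this bound by directly combining the two previously established estimates on $t_{commit}$ — no new machinery is needed. First I would note that, under our standing assumption that $ALG$ exits the interval $[3L_U(t)+2,\,3R_U(t)-2]$ from the left, the request realizing $L_U(t_{commit})$ lies strictly to the left of the origin (not all left-side requests can have been released by $t_{commit}\le\tfrac43<2$), so that $d_{commit}=|L_U(t_{commit})|=-L_U(t_{commit})$ is a genuine quantity in $(0,1]$.

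Next I would invoke Claim~\ref{claim:OpenAttackTDCommit}, which gives $t_{commit}\ge 2-d_{commit}-\alpha(f)$: the request sitting at distance $d_{commit}+\alpha(f)$ immediately to the left of $L_U(t_{commit})$ is released at time $2-(d_{commit}+\alpha(f))$ during the first phase, so the first phase — and hence $t_{commit}$ — cannot occur earlier than that. On the other side, Claim~\ref{claim:OpenAttackTCommitInterval} supplies $t_{commit}\le 1+\tfrac13=\tfrac43$, which is the crucial improvement over the weaker $t_{commit}\le 2$ available in the closed variant and is bought precisely by the narrower stopping interval $[3L_U(t)+2,\,3R_U(t)-2]$.

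Chaining the two, $\tfrac43\ge t_{commit}\ge 2-d_{commit}-\alpha(f)$, and rearranging yields $d_{commit}\ge 2-\tfrac43-\alpha(f)=\tfrac23-\alpha(f)$, which is the claim. There is essentially no obstacle in this step itself; it is a one-line manipulation, and the real content was front-loaded into Claims~\ref{claim:OpenAttackTCommitInterval} and~\ref{claim:OpenAttackTDCommit}. The only thing worth flagging is that this lower bound on $d_{commit}$ is exactly what later forces $ALG$ to pay for reaching the "frozen" left requests and is the source of the $\tfrac{13}{3}$ term in Lemma~\ref{lem:ALGSlowOpen}; so one should keep track of how $\alpha(f)$ enters here, since it must vanish in the limit to deliver the claimed $1.\overline{44}$ ratio.
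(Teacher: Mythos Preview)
Your proof is correct and follows exactly the same approach as the paper: combine Claim~\ref{claim:OpenAttackTDCommit} (giving $t_{commit}\ge 2-d_{commit}-\alpha(f)$) with the upper bound $t_{commit}\le\tfrac43$ from Claim~\ref{claim:OpenAttackTCommitInterval} and rearrange. The additional remarks you include about $d_{commit}$ being a genuine positive quantity and about the role of $\alpha(f)$ downstream are accurate but not needed for the argument itself.
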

\begin{proof}
    By Claim \ref{claim:OpenAttackTDCommit}, we have $d_{commit} \ge 2 - t_{commit} - \alpha(f)$.
    Additionally, Claim \ref{claim:OpenAttackTCommitInterval} implies that $t_{commit} \le \frac{4}{3}$.
    The claim follows.
\end{proof}

We are now ready to prove Lemma \ref{lem:ALGSlowOpen}.
\begin{proof}[Proof of Lemma \ref{lem:ALGSlowOpen}.]
    We distinguish cases based on the order in which 
    $ALG$ chooses to serve $-1, 1, L_U(t_{commit})$ after $t_{commit}$.
    \textbf{Case 1.} $ALG$ serves $1$ before $-1$. By Claims \ref{claim:OpenAttackTDCommit}, \ref{claim:OpenAttackPosCommitInterval} and \ref{claim:OpenAttackDCommit23}, this takes at least
    \begin{displaymath}
        |ALG| \ge t_{commit} + |pos(t_{commit})| + 2 + 1 \ge
    \end{displaymath}
    \begin{displaymath}
        2 - d_{commit} - \alpha(f) + |3L_U(t_{commit}) + 2| + 2 + 1 = 
    \end{displaymath}
    \begin{displaymath}
        2 - d_{commit} -\alpha(f) + 3d_{commit} - 2 + 2 + 1 = 3 + 2d_{commit} -\alpha(f)\ge
    \end{displaymath}
    \begin{displaymath}
        3 + \frac{4}{3} - 3\alpha(f) \ge \frac{13}{3} - 3\alpha(f).
    \end{displaymath}
    \textbf{Case 2.} $ALG$ serves $L_U(t_{commit})$ before $1$. By the definition 
    of the second phase's release times and Claim \ref{claim:OpenAttackDCommit23}, we have
    \begin{displaymath}
        |ALG| \ge rel(L_U(t_{commit})) + |L_U(t_{commit})| + 1 = 2 + d_{commit} + d_{commit} + 1
    \end{displaymath}
    \begin{displaymath}
        3 + 2d_{commit} \ge \frac{13}{3} - 2\alpha(f).
    \end{displaymath}
    \textbf{Case 3.} $ALG$ serves in the order $-1$, $1$, $L_U(t_{commit})$. By Claim \ref{claim:OpenAttackDCommit23},
    we easily obtain
    \begin{displaymath}
        |ALG| \ge 2 + 2 + \frac{2}{3} -\alpha(f) \ge \frac{13}{3} - \alpha(f).
    \end{displaymath}
    These cases are exhaustive and thus Lemma \ref{lem:ALGSlowOpen} follows.
\end{proof}
We can now use Lemmas \ref{lem:OpenAttackOPT3} and \ref{lem:ALGSlowOpen} to prove Theorem \ref{the:CompetitiveRatioOpen}.
\begin{proof}[Proof of Theorem \ref{the:CompetitiveRatioOpen}.]
By Lemma \ref{lem:OpenAttackOPT3}, we have $|OPT| = 3$. On the other hand, by Lemma \ref{lem:ALGSlowOpen}, we see that $|ALG| \ge \frac{13}{3} - 3\alpha(f)$. Thus, we obtain a competitive 
ratio of at least $\frac{\frac{13}{3} - 3\alpha(f)}{3}$. For arbitrarily small $\alpha(f)$,
this value can be arbitrarily close to $1.\overline{44}$, proving the Theorem.
\end{proof}

\subsection{The \textit{LOCATIONS+FINAL} prediction model}\label{section:openLF}
In our final setting we consider the open variant under the $LF$
prediction model. We give the $PIVOT$ algorithm, which is $1.\overline{33}$-competitive
with perfect predictions and is also smooth and robust. We also reuse the attack strategy described 
for the closed variant to achieve a lower bound of $1.25$.

\paragraph*{The $PIVOT$ algorithm.}
The final algorithm we present works in the same way as 
$NEARFIRST$, except for the order in which it focuses on the two sides 
of the origin. Instead of heading to the near extreme first, $PIVOT$ prioritizes 
the side whose extreme is further away from the predicted endpoint of $OPT$, which is provided
by the $LF$ prediction model. The pseudocode for $PIVOT$ is given below. Note that 
$P_{f'}$ refers to the element in $P$ with label $f'$.

\begin{nscenter}
\begin{algorithm}
    \caption{$PIVOT$ update function.}\label{alg:Pivot}
    \SetKwInOut{Input}{Input}
        \SetKwInOut{Output}{Output}
        \Input{Current position $pos$, set $O$ of unserved released requests, set $P$ of predictions, label $f'$
        of $OPT$'s predicted endpoint.}
        \Output{A series of (unit speed) moves to carry out until the next request is released.}
    $P' \gets$ the unreleased predictions in $P$\;
    \If{$P'$ is empty}{
        \lIf{$pos < \frac{max(O) + min(O)}{2}$}{
            \Return $move( min(O)) \oplus move(max(O))$
            }
            \lElse{
            \Return $move(max(O)) \oplus move( min(O))$
            }
    }
    \lIf{$P_{f'} > \frac{max(P) + min(P)}{2}$}{
        \Return $move(min(P' \cup O)) \oplus move(min(P'))$
    }
    \lElse{
        \Return $move(max(P' \cup O)) \oplus move(max(P'))$
    }

\end{algorithm}
\end{nscenter}

As for the previous algorithms, we show a theorem that pertains to $PIVOT$'s competitive 
ratio for different values of the $\eta$ and $\delta$ errors.
\PivotBound*
The proof is very similar to the one used for $NEARFIRST$'s competitive ratio. In fact, the robustness is shown 
in exactly the same way. For the consistency/smoothness, the delay is bounded by $M + d$ in the same way, where $d$ is the distance of the last request $q_f$ served by $OPT$ to the extreme served second by $PIVOT$. The same lower bounds for $|OPT|$ hold as well. We additionally bound $d$ as a function of the error-dependent values $\Delta$ and $M$. When there is no error, we can bound $d$ to be at most $\frac{|L| + |R|}{2}$ instead of $|L| + |R|$, which gives a better competitive ratio than that of $NEARFIRST$. An important distinction is that we do not make use of Lemma \ref{lem:LOCATIONS}, since the algorithm does not consider the amplitudes of $L$ and $R$.

The formal proof of Theorem \ref{the:PivotBound} is given here.
The robustness part of Theorem \ref{the:PivotBound} is implied by Lemma \ref{lem:OpenRobustness}.
To prove Theorem \ref{the:PivotBound}, it remains to show the consistency/smoothness part, which is given by the following lemma.

\begin{lemma}\label{lem:PivotSmoothness}
    The algorithm $PIVOT$ is $f(\eta, \delta)$-smooth for $3 - 2(\delta + 2\eta) > 0$, where
    \begin{displaymath}
        f(\eta, \delta) = 1 + \frac{1 + 2(\delta + 3\eta)}{3 - 2(\delta + 2\eta)}.
    \end{displaymath}
\end{lemma}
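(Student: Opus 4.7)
My plan is to adapt the proof of Lemma \ref{lem:NearFirstSmoothness} to this richer prediction model: every ingredient about the delay of PIVOT and OPT's tour structure carries over unchanged; only the bound on the distance $d := |q_f - R|$ needs to be redone, because we no longer appeal to Lemma \ref{lem:LOCATIONS} but instead to the FINAL prediction. Assume without loss of generality that PIVOT serves the left side first, so that $R$ is the extreme PIVOT serves last. Lemma \ref{lem:OpenDelay} then gives $|PIVOT| - |OPT| \le M + d$, and the same case analysis on the order in which OPT visits $\{L, R, q_f\}$ yields $|OPT| \ge 2|L| + |R| + d$ (Case 1: $L,R,q_f$) or $|OPT| \ge 3|R| + 2|L| - d$ (Case 2: $R,L,q_f$).

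The central new step is to bound $d$ in terms of $M$, $\Delta$, and $S := |L| + |R|$. I will first note that $|q_f - P_{f'}| \le M + \Delta$ by the triangle inequality and the definitions of $\eta, \delta$, reducing the task to bounding $|P_{f'} - R|$. Since PIVOT chose to head left, its rule guarantees $P_{f'} > \tfrac{L_P + R_P}{2}$. The key algebraic step is
\[
R - P_{f'} \;\le\; R - \frac{L_P + R_P}{2} \;=\; \frac{R-L}{2} + \frac{(R - R_P) + (L - L_P)}{2} \;\le\; \frac{S}{2} + M,
\]
where the last inequality uses Claim \ref{claim:extremesM}. Combined with the trivial $P_{f'} - R \le R_P - R \le M$ this gives $|P_{f'} - R| \le \tfrac{S}{2} + M$, and hence $d \le \tfrac{S}{2} + 2M + \Delta$.

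I expect the algebraic rearrangement above to be the main obstacle. A naive application of the triangle inequality, $|P_{f'} - R| \le |P_{f'} - R_P| + |R_P - R| \le \tfrac{R_P - L_P}{2} + M \le \tfrac{S}{2} + 2M$, loses one extra $M$; this gives $d \le \tfrac{S}{2} + 3M + \Delta$, which ultimately produces the wrong denominator in $f(\eta, \delta)$ (it yields $3 - 6\eta - 2\delta$ instead of $3 - 2(\delta + 2\eta)$). The trick is to combine the two $M$-contributions via the midpoint before invoking Claim \ref{claim:extremesM}, so they are averaged rather than summed.

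To finish, I substitute $d \le \tfrac{S}{2} + 2M + \Delta$ into both cases. Both Case 1 and Case 2 ratios $\tfrac{M+d}{\,\cdot\,}$ are monotone increasing in $d$, so the worst value of $d$ is the bound just derived. Minimising the denominator over $|L|, |R|$ with $|L|+|R| = S$, Case 1 is minimised at $|L| = 0$ (denominator $\tfrac{3S}{2} + 2M + \Delta$) and Case 2 at $|R| = 0$ (denominator $\tfrac{3S}{2} - 2M - \Delta$). Case 2 is the binding one, yielding
\[
\frac{|PIVOT|}{|OPT|} \;\le\; 1 + \frac{\tfrac{S}{2} + 3M + \Delta}{\tfrac{3S}{2} - 2M - \Delta} \;=\; 1 + \frac{1 + 2(\delta + 3\eta)}{3 - 2(\delta + 2\eta)} \;=\; f(\eta, \delta),
\]
after using $M = \eta S$ and $\Delta = \delta S$. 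The hypothesis $3 - 2(\delta + 2\eta) > 0$ is precisely what keeps this denominator positive, so the bound is valid on the entire stated range.
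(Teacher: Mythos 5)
Your proof is correct and takes essentially the same route as the paper: Lemma~\ref{lem:OpenDelay} for the delay bound $|PIVOT|-|OPT|\le M+d$, the two-case analysis on OPT's visiting order for $\{L,R,q_f\}$, and the key intermediate bound $d\le \tfrac{|L|+|R|}{2}+2M+\Delta$, which matches the paper's Claim~\ref{claim:dBound}. Your derivation of that bound (pass through $P_{f'}$, then average the two $M$-contributions via the midpoint $\tfrac{L_P+R_P}{2}$ before applying Claim~\ref{claim:extremesM}) is a slightly cleaner rewrite of the same chain of inequalities the paper uses around its displays \eqref{R_to_q_f}--\eqref{L_to_q_f}, so there is no real difference in approach.
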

    We assume without loss of generality that $PIVOT$ first serves the left extreme.  
    By Lemma \ref{lem:OpenDelay}, we see that $|PIVOT| - |OPT| \le M + d$,
    where $d$ is the distance of $R$ to the request 
    $q_f$ served last by $OPT$. We now show a bound on the value 
    of $d$ that depends on $\eta$ and $\delta$.
    \begin{claim}\label{claim:dBound}
        $d = |R - q_f| \le (|L| + |R|)(\frac{1}{2} + \delta + 2\eta)$.
    \end{claim}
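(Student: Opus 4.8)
The plan is to derive Claim~\ref{claim:dBound} from a short chain of inequalities linking the decision rule that made $PIVOT$ serve the left extreme first to a lower bound on the position of $q_f$.

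First I would extract the content of $PIVOT$'s choice. Since $PIVOT$ is assumed to serve the left extreme first, its update function took the branch with condition $P_{f'} > \frac{\max(P) + \min(P)}{2}$, i.e.\ the predicted endpoint satisfies $p_{f'} > \frac{L_P + R_P}{2}$. I would then collect the three available bounds: by Claim~\ref{claim:extremesM}, $|L_P - L| \le M$ and $|R_P - R| \le M$, so $\frac{L_P+R_P}{2} \ge \frac{L+R}{2} - M$; by the definition of the $\eta$ error, $|q_{f'} - p_{f'}| \le M$; and by the definition of the $\delta$ error (recall $q_f$ is the last request of an optimal tour chosen to minimize its distance to $q_{f'}$), $|q_{f'} - q_f| \le \Delta$.

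Chaining these bounds gives
\begin{displaymath}
q_f \ge q_{f'} - \Delta \ge p_{f'} - M - \Delta > \frac{L_P + R_P}{2} - M - \Delta \ge \frac{L + R}{2} - 2M - \Delta,
\end{displaymath}
and since $R = \max(Q) \ge q_f$ and $R - L = |R| + |L|$ (because $R \ge 0 \ge L$),
\begin{displaymath}
d = |R - q_f| = R - q_f < R - \frac{L+R}{2} + 2M + \Delta = \frac{|L| + |R|}{2} + 2M + \Delta .
\end{displaymath}
Substituting $M = \eta(|L|+|R|)$ and $\Delta = \delta(|L|+|R|)$ yields $d \le (|L|+|R|)\left(\frac12 + \delta + 2\eta\right)$, as claimed.

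The step I expect to require the most care is the first one: justifying that ``$PIVOT$ serves the left extreme first'' is equivalent to $p_{f'} > \frac{L_P+R_P}{2}$, including the degenerate case in which all predictions are released immediately (so that the direction is decided from $O$ rather than from $P$), and making sure the optimal tour whose last stop is the distance-minimizing $q_f$ is the same one to which Lemma~\ref{lem:OpenDelay} is applied. Everything after that is the routine computation above. Once the claim is in hand, I would plug $d \le (|L|+|R|)(\frac12 + \delta + 2\eta)$ together with the lower bounds $|OPT| \ge 2|L|+|R|+d$ and $|OPT| \ge 3|R|+2|L|-d$ into $|PIVOT| - |OPT| \le M + d$ from Lemma~\ref{lem:OpenDelay}, mirroring the two-case analysis in the proof of Lemma~\ref{lem:NearFirstSmoothness}, but now without invoking Lemma~\ref{lem:LOCATIONS} since $PIVOT$ ignores the amplitudes of $L$ and $R$.
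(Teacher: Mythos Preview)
Your proposal is correct and follows essentially the same route as the paper: both start from $PIVOT$'s branch condition to get $p_{f'} \ge \frac{L_P+R_P}{2}$, then successively pass to $q_{f'}$ (losing $M$), to $q_f$ (losing $\Delta$), and to the actual extremes via Claim~\ref{claim:extremesM} (losing another $M$), arriving at $d \le \frac{|L|+|R|}{2} + 2M + \Delta$. Your presentation is slightly more streamlined---you bound $\frac{L_P+R_P}{2}$ by $\frac{L+R}{2}-M$ directly, whereas the paper detours through the two auxiliary inequalities $R_P - q_f \ge |R - q_f| - M$ and $q_f - L_P \le |q_f - L| + M$ before recombining them---but the ingredients and the final arithmetic are identical.
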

    \begin{proof}
    We first show two inequalities that will be used later to prove the claim. Note that $|R - R_P| \le M$ and $|L - L_P| \le M$ by Claim \ref{claim:extremesM}. The first inequality is
    \begin{displaymath}
        |R - R_P| \le M \implies R - R_p \le M \implies R_P - q_f \ge R - q_f - M \implies
    \end{displaymath}
    \begin{equation}\label{R_to_q_f}
        R_P - q_f \ge |R - q_f| - M.
    \end{equation}
    Similarly, we see that
    \begin{displaymath}
        |L - L_P| \le M \implies L_P - L \ge -M \implies q_f - L + M \ge q_f - L_P
    \end{displaymath}
    \begin{equation}\label{L_to_q_f}
        |q_f - L| + M \ge q_f - L_P.
    \end{equation}
    Because of $PIVOT$'s choice to go left first, we see that
    \begin{displaymath}
        |R_P - p_{f'}| \le |L_P - p_{f'}| \implies R_P - p_{f'} \le p_{f'} - L_P \implies p_{f'} \ge \frac{R_P + L_P}{2} \implies 
    \end{displaymath}
    \begin{displaymath}
       q_{f'} \ge \frac{R_P + L_P}{2} - M \implies q_f \ge \frac{R_P + L_P}{2} - M -\Delta \implies
    \end{displaymath}
    \begin{displaymath}
        q_f - L_P + M + \Delta \ge R_P - q_f - M - \Delta \overset{\eqref{R_to_q_f}, \eqref{L_to_q_f}}{\implies}
    \end{displaymath}
    \begin{displaymath}
        |q_f - L| + 2M + \Delta \ge |R - q_f| - 2M - \Delta \implies 
    \end{displaymath}
    \begin{displaymath}
        |q_f - L| + |R - q_f| + 2M + \Delta \ge 2|R - q_f| - 2M - \Delta \implies
    \end{displaymath}
    \begin{displaymath}
        d = |R - q_f| \le (|L| + |R|)(\frac{1}{2} + \delta + 2\eta).
    \end{displaymath}
    \end{proof}
    Using Claim \ref{claim:dBound}, we can prove Lemma \ref{lem:PivotSmoothness}.
    \begin{proof}[Proof of Lemma \ref{lem:PivotSmoothness}.]
    We distinguish two cases based on the order in which $OPT$
    serves the requests of the set $\{L, R, q_f\}$.
    \textbf{Case 1.}
    $OPT$ serves in the order $L, R, q_f$. Thus, we know that
    \begin{displaymath}
        \frac{|PIVOT|}{|OPT|} \le 1 + \frac{M + d}{2|L| + |R| + d}.
    \end{displaymath}
    The derivative of the right part
    with respect to $d$ is
    \begin{displaymath}
        \dfrac{|R| + 2|L| - M}{\left(d + |R| + 2|L|\right)^2}.
    \end{displaymath}
    Since we have assumed $3 - 2(\delta + 2\eta) > 0 \implies \eta < 1$, this value is always positive. Therefore,
    we can set $d$ to the maximum value described in Claim \ref{claim:dBound} to obtain the following bound.
    \begin{displaymath}
        \frac{|PIVOT|}{|OPT|} \le 1 + \frac{(|L| + |R|)(\frac{1}{2} + \delta + 2\eta) + \eta(|L| + |R|)}{2|L| + |R| + (|L| + |R|)(\frac{1}{2} + \delta + 2\eta)} \le
    \end{displaymath}
    \begin{displaymath}
        1 + \frac{1 + 2\delta + 6\eta}{3 + 2\delta + 4\eta} \le 1 + \frac{1 + 2(\delta + 3\eta)}{3 - 2(\delta + 2\eta)}.
    \end{displaymath}
    \textbf{Case 2.} $OPT$ serves in the order $R, L, q_f$. In that case, we have
    \begin{displaymath}
        \frac{|PIVOT|}{|OPT|} \le 1 + \frac{d + M}{2|R| + |L| + (|L| + |R| - d)} \overset{\ref{claim:dBound}}{\le}
    \end{displaymath}
    \begin{displaymath}
        1 + \frac{(|L| + |R|)(\frac{1}{2} + \delta + 2\eta) + \eta(|L| + |R|)}{2|R| + |L| + (|L| + |R|)(\frac{1}{2} - \delta - 2\eta)} \le 1 + \frac{1 + 2\delta + 6\eta}{3 - 2\delta - 4\eta} =
    \end{displaymath}
    \begin{displaymath}
        1 + \frac{1 + 2(\delta + 3\eta)}{3 - 2(\delta + 2\eta)}.
    \end{displaymath}
    In both cases, we have shown the smoothness bound. Therefore, the proof is complete.
\end{proof}

We now give the proof of Theorem \ref{the:PivotBound}. 
\begin{proof}[Proof of Theorem \ref{the:PivotBound}.]
By Lemma \ref{lem:OpenRobustness},
$PIVOT$ is 3-robust. Also, by Lemma \ref{lem:PivotSmoothness}, it is $(1 + \frac{1 + 2(\delta + 3\eta)}{3 - 2(\delta + 2\eta)})$-smooth. Thus, Theorem \ref{the:PivotBound} follows.
\end{proof}

\paragraph*{A $1.25$-attack.}
We make use of the original attack strategy of Section \ref{section:closedLocations} yet again to obtain a lower 
bound for this setting. Our final theorem is presented here.
\LFAttack*
To prove this theorem, we will again utilize 
the attack strategy given in the proof of Theorem \ref{the:CompetitiveRatio}. The inputs generated are the same, except for a new request $q_0$ placed at the origin and released at $t = 4$. Let ${F_C}'$ denote this new family of inputs. We observe the following lemmas.
\begin{lemma}\label{lem:LFOPT4}
    For any instance $f$ in the family ${F_C}'$,
    $|OPT| = 4$.
\end{lemma}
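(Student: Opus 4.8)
The claim is that for the modified family ${F_C}'$ (the closed-variant attack instances of Section~\ref{section:closedLocations}, augmented with an extra request $q_0$ at the origin released at $t=4$), the optimal offline makespan is still $4$. The strategy is simply to recall why $|OPT| = 4$ held for the original family $F_C$ (Lemma~\ref{lem:OPT4}) and to observe that the new request changes nothing, because it is placed at the origin and released at exactly the time $OPT$ already returns there.

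\begin{proof}[Proof of Lemma~\ref{lem:LFOPT4}.]
Recall that by Lemma~\ref{lem:OPT4}, for the corresponding instance in the family $F_C$ (which has the identical set of non-origin requests with identical release times), $OPT$ serves all requests of the side $ALG$ did not exit from, returns to the origin by $t=2$, then serves the other side and returns to the origin, finishing at $t=4$. The only difference in ${F_C}'$ is the additional request $q_0$ at position $0$ with $rel(q_0)=4$. Since this request is located at the origin, it is served at any moment the agent is at the origin with $t \ge 4$; in particular, the optimal route described above already has the agent at the origin at $t=4$, so $q_0$ is served without any extra movement and $|OPT| \le 4$. For the matching lower bound, note that in the open variant the agent must be at position $0$ at some time $t' \ge 4$ in order to serve $q_0$ (since $rel(q_0)=4$), hence $|OPT| \ge 4$. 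Combining, $|OPT| = 4$.
\end{proof}

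\textbf{Main obstacle.} There is essentially no obstacle here: the lemma is an almost immediate consequence of the earlier analysis, and the only thing to be careful about is the direction of the argument — for the upper bound we reuse the explicit $OPT$ schedule from Lemma~\ref{lem:OPT4} and check that $q_0$ is free to serve at $t=4$, and for the lower bound we invoke the release time of $q_0$ together with the fact that $q_0$ sits at the origin, which any open-variant solution must visit at time at least $4$. The real work of this section lies in the companion lemma bounding $|ALG|$ from below for ${F_C}'$ and in combining the two to derive the $1.25$ competitive-ratio lower bound of Theorem~\ref{the:LFAttack}; this lemma is just the easy half.
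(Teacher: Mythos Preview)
Your proof is correct and follows essentially the same approach as the paper: exhibit the explicit schedule (first side, back to origin by $t=2$, second side, back to origin at $t=4$) to show $|OPT|\le 4$, and note that $q_0$ forces $|OPT|\ge 4$. If anything, you are slightly more explicit than the paper in separating the upper and lower bounds.
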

\begin{proof}
We see that the requests of the side which 
    $ALG$ did not exit from are released such that $OPT$
    can serve them all and return to the origin by $t = 2$.
    Additionally, the other side's requests are released 
    such that $OPT$ never has to stop for them either, i.e. 
    it can serve them all and return to the origin by $t = 4$. The request on the origin is released at exactly $t = 4$, so this is also served right as $OPT$ returns to the origin from the second trip. Thus, $|OPT| = 4$.
\end{proof}
In the following, $\alpha(f)$ will refer to the distance of consecutive requests in $f$, disregarding $q_0$.
\begin{lemma}\label{ALGSlowLF}
    For any instance $f$ in the family ${F_C}'$,
    $|ALG| \ge 5 - 2\alpha(f)$.
\end{lemma}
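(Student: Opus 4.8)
The plan is to reuse wholesale the attack strategy and the auxiliary claims developed for the closed variant in the proof of Theorem~\ref{the:CompetitiveRatio}. The only difference between $F_C$ and ${F_C}'$ is the extra request $q_0$ at the origin released at time $4$; since $t_{commit}\le 2<4$ and $q_0$ sits between the two unreleased extremes throughout the entire first phase, nothing in the analysis of the first phase changes, so Claims~\ref{claim:commit_1_2}, \ref{claim:posCommitInterval}, \ref{claim:timeCommitInterval} and \ref{claim:noServesFirstPhase} carry over verbatim. Assuming w.l.o.g.\ that $ALG$ commits to the left side at time $t_{commit}$, I may then use $1\le t_{commit}$, $t_{commit}\ge 2-d_{commit}-\alpha(f)$ with $d_{commit}=|L_U(t_{commit})|\in[0,1]$, the bounds $-d_{commit}-\alpha(f)\le pos_{ALG}(t_{commit})\le L_U(t_{commit})=-d_{commit}$, and the fact that $-1$, $1$ and $L_U(t_{commit})$ are all still outstanding at $t_{commit}$.

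I then split into three exhaustive cases according to the order in which $ALG$ serves $-1$, $1$ and $L_U(t_{commit})$ after $t_{commit}$, mirroring the proof of Lemma~\ref{lem:ALGSlow}; the only structural change is that the ``$+1$ to return to the origin'' of the closed variant is replaced by the weaker obligation that $ALG$ be at the origin at some moment $\ge 4$ in order to serve $q_0$, which loses exactly one unit. \textbf{Case 1} ($ALG$ serves $1$ before $-1$): $ALG$ reaches $1$ no earlier than $t_{commit}+|pos_{ALG}(t_{commit})-1|\ge t_{commit}+1+d_{commit}\ge 3-\alpha(f)$, and must still travel to $-1$, so $|ALG|\ge 5-\alpha(f)$. \textbf{Case 3} ($ALG$ serves $L_U(t_{commit})$ before $1$): $ALG$ must wait until $rel(L_U(t_{commit}))=4-d_{commit}$ and then travel $1+d_{commit}$ to reach $1$, so $|ALG|\ge 5$. \textbf{Case 2} ($ALG$ serves in the order $-1,1,L_U(t_{commit})$): any trajectory realizing this order travels at least $|pos_{ALG}(t_{commit})+1|+2+(1+d_{commit})$ after $t_{commit}$, and since $|pos_{ALG}(t_{commit})+1|\ge 1-d_{commit}-\alpha(f)$ this gives $|ALG|\ge t_{commit}+(1-d_{commit}-\alpha(f))+3+d_{commit}\ge 6-d_{commit}-2\alpha(f)\ge 5-2\alpha(f)$; one checks in passing that this trajectory crosses the origin on its final leg at time at least $t_{commit}+3\ge 4$, so $q_0$ is served along the way and never has to be paid for by a detour. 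Each of the three bounds is at least $5-2\alpha(f)$, which proves the lemma.

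The arithmetic in the three cases is routine; the two points needing a little care are (i) justifying that the closed-variant claims really do transfer to ${F_C}'$, i.e.\ that $q_0$ — released well after $t_{commit}$ and lying between the current extremes — is irrelevant to the first phase, and (ii) checking in Case 2 that the cheapest order-$(-1,1,L_U(t_{commit}))$ trajectory already visits the origin at a time $\ge 4$, the only input being $t_{commit}\ge 1$. Together with $|OPT|=4$ from Lemma~\ref{lem:LFOPT4}, this yields a competitive ratio of at least $\frac{5-2\alpha(f)}{4}$, which tends to $\frac{5}{4}$ as $\alpha(f)\to 0$, establishing Theorem~\ref{the:LFAttack}.
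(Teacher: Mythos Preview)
Your proof is correct, but it takes a different route from the paper. You redo the three-case analysis of Lemma~\ref{lem:ALGSlow} from scratch, carefully tracking how the extra request $q_0$ interacts with each case and dropping the ``return to origin'' obligation of the closed variant. The paper instead gives a short reduction: assuming $|ALG| < 5 - 2\alpha(f)$, it observes that $ALG$ necessarily ends inside $[-1,1]$, so appending a unit-length walk back to the origin yields a closed-variant algorithm $ALG'$ with $|ALG'| < 6 - 2\alpha(f)$ on the instance $f' \in F_C$ obtained by deleting $q_0$; this contradicts Lemma~\ref{lem:ALGSlow} directly, with no new casework. The paper's argument is shorter and makes the ``lose exactly one unit'' phenomenon you noted completely transparent. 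Your approach, on the other hand, is self-contained and shows explicitly where the bounds come from in each case. One small remark: your check in Case~2 that $q_0$ is served along the way is not actually needed for the lower bound, since $|ALG| \ge t_{LU}$ holds regardless of when $q_0$ is served; it would only matter if you were arguing tightness.
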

\begin{proof}
Suppose for the sake of contradiction that $|ALG| < 5 - 2\alpha(f)$. We can see that $|pos_{ALG}(|ALG|)| \le 1$. Thus, an algorithm $ALG'$ could copy $ALG$ until it serves all requests and then return to the origin. That would mean that $ALG'$ solves the closed variant of $f$ such that $|ALG'| < 6 - 2\alpha(f)$. Observe that there exists an instance $f' \in F_C$ that is identical to $f$ except for $q_0$. We can see that $ALG'$ also solves $f'$ in less than $6 - 2\alpha(f) = 6 - 2\alpha(f')$ time units, since $f'$ only contains a subset of the requests in $f$. 
Therefore, we have a contradiction to Lemma \ref{lem:ALGSlow}.
\end{proof}
We can now prove Theorem \ref{the:LFAttack}.
\begin{proof}[Proof of Theorem \ref{the:LFAttack}.]
By Lemma \ref{lem:LFOPT4}, we see that $|OPT| = 4$. We also see that $|ALG| \ge 5 - 2\alpha(f)$ by Lemma \ref{ALGSlowLF}. Thus, we get a competitive ratio of at least $\frac{5 - 2\alpha(f)}{4}$, which can be arbitrarily close to $1.25$, concluding the proof.
\end{proof}

\section{Experimental Evaluation}
We have generated synthetic instances and corresponding predictions and tested our algorithms on them. In this
section, we explain how this data was generated and present the results we acquired.

Note that mirroring of the positions of the requests and/or uniform scaling of the positions and release times does not affect the competitive ratio of any algorithm. Therefore, we choose to generate the inputs as explained below.
\paragraph{Generating inputs.} 
In the following, any reference of randomness will correspond to a uniform distribution. We have a maximum number of requests $n_{max} \ge 2$ and a maximum release time $r_{max}$.
Our generator first randomly chooses an integer number of requests $n \in [2, n_{max}]$. 
Then it randomly chooses a value $c' \in [1, c]$. We then generate the \textit{positions}
of the requests as follows. We always have a request at $-1$ and one at $c'$. The other 
$n - 2$ requests are randomly placed in the interval $[-1, c']$. The release time 
of each request is randomly chosen from $[0, r_{max}]$.

\paragraph{Generating predictions.} 
We now briefly explain how the predictions of the $LOCATIONS$ model 
are generated. Each input generated also comes 
with a prediction "mould". This mould contains $n$ scalars $m_i \in [-1, 1]$, one 
for each request. At least one of these scalars has an absolute value of $1$.
For a given error $\eta$, we calculate $M$ and then add an offset of $m_i \cdot M$ to the position 
of request $q_i$ to get the prediction $p_i$. In this way, at least one prediction 
is guaranteed to have a distance of $M$ to its associated request.

For the $LF$ prediction model, we simply try each label of the generated input
as a different prediction. 
Each label choice corresponds to 
a different error $\delta$, which 
is calculated after choosing the label.

\paragraph{Results.}
We generated $7500$ random input-predictions pairs with at most $20$ requests. A value 
of $c = 2$ was chosen, since higher values of $c$ in general only benefit our algorithms. The maximum release time was set to $6$. Again, higher release times in general lead to better competitive ratios for our algorithms, because they increase $|OPT|$.

The error $\eta$ of these predictions varied from $0$ to $1$. We ran $FARFIRST$ and $NEARFIRST$ on each 
of these instances. Additionally, for each of these instances, we ran $PIVOT$ with each of the instance's request labels 
as the prediction of the $LF$ prediction model. Thus, the $PIVOT$ algorithm was ran approximately $75000$ times.

We did not compare the results of our algorithms to the classical algorithms because that would 
be unfair. That is because our algorithms have the benefit of knowing the number of requests $n$
which helps in practice, even if the theoretical lower bounds are almost identical. In contrast, 
the theoretically optimal classical online algorithms resort to 
waiting techniques, which in turn almost always maximizes their competitive ratio to the theoretical bound.

The experiments were executed on a typical modern laptop computer (CPU: AMD Ryzen 7 4700U 2.0 Ghz 8 cores, RAM: 16GB). The execution time did not exceed $2$ minutes.
We present our results via various graphs in the following subsections.

\subsection{\textit{FARFIRST}}\label{section:farFirstExperiments}
Figure \ref{figure:farFirstSimple} shows the maximum competitive ratio observed for the $FARFIRST$ algorithm
in all instances with error $\eta$ up to the value of the $x$ axis. In figure \ref{figure:farFirstPercentile}, we have also provided a plot that depicts the maximum competitive ratio observed for $x$ \% of the best instances with error $\eta$ up to the value of the $y$ axis. We note that the grid turns red near the very edge, which means that high competitive ratios are rare.

\begin{figure}[ht]
\centering
\begin{minipage}{.48\textwidth}
  \centering
  \includegraphics[width=1\linewidth]{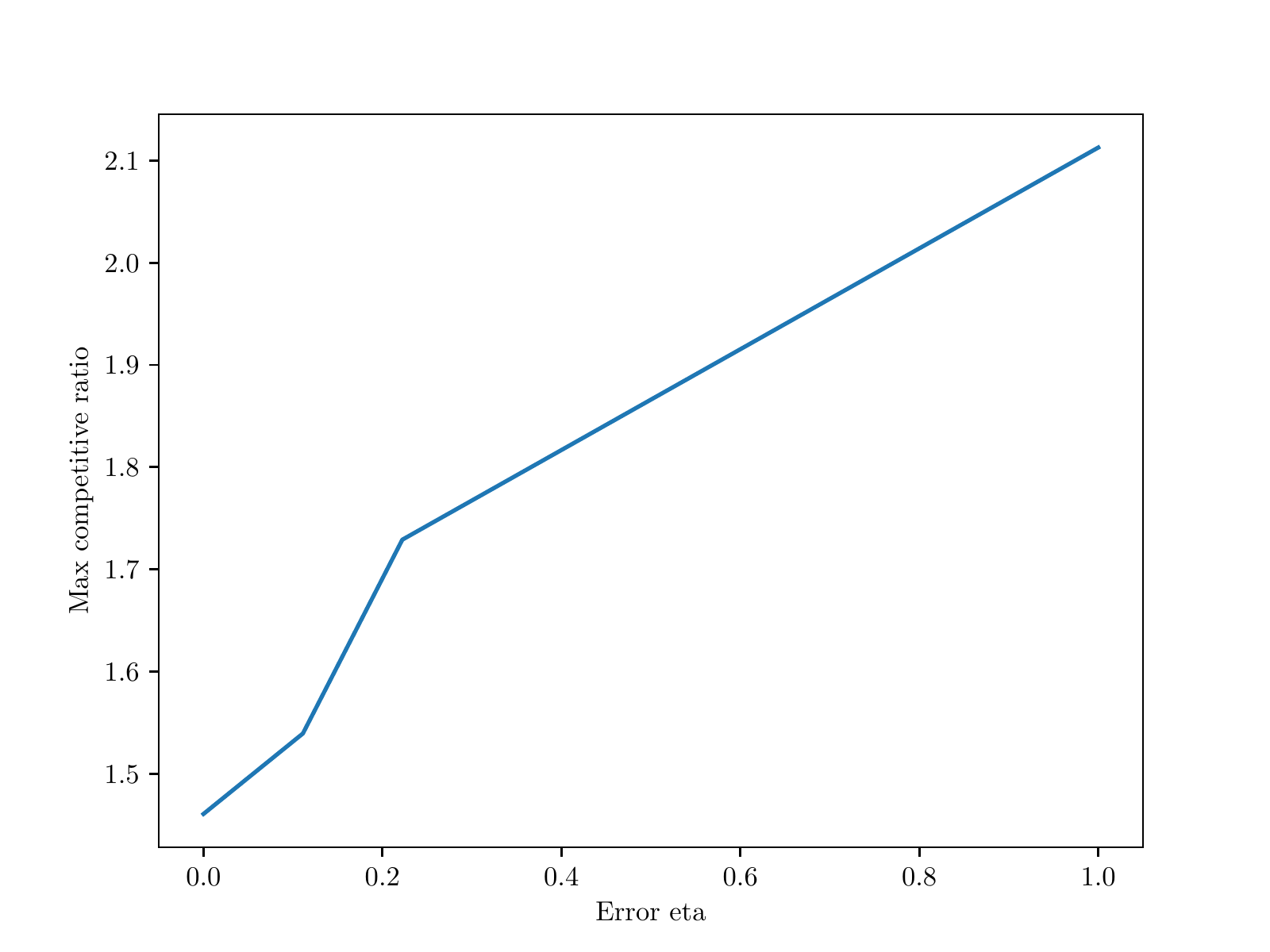}
  \captionof{figure}{\textit{FARFIRST}'s competitive ratio for increasing error. As can be seen in the figure, the competitive ratio never surpasses $\approx 2.15$ for $\eta \le 1$. Additionally, we find that the competitive ratio even with zero error is close to the theoretical upper bound of $1.5$. It should also be noted that the theoretical lower bound of $1.64$ (without predictions) is broken for $\eta$ roughly up to $0.2$.}
  \label{figure:farFirstSimple}
\end{minipage}%
\hfill
\begin{minipage}{.48\textwidth}
  \centering
  \includegraphics[width=1\linewidth]{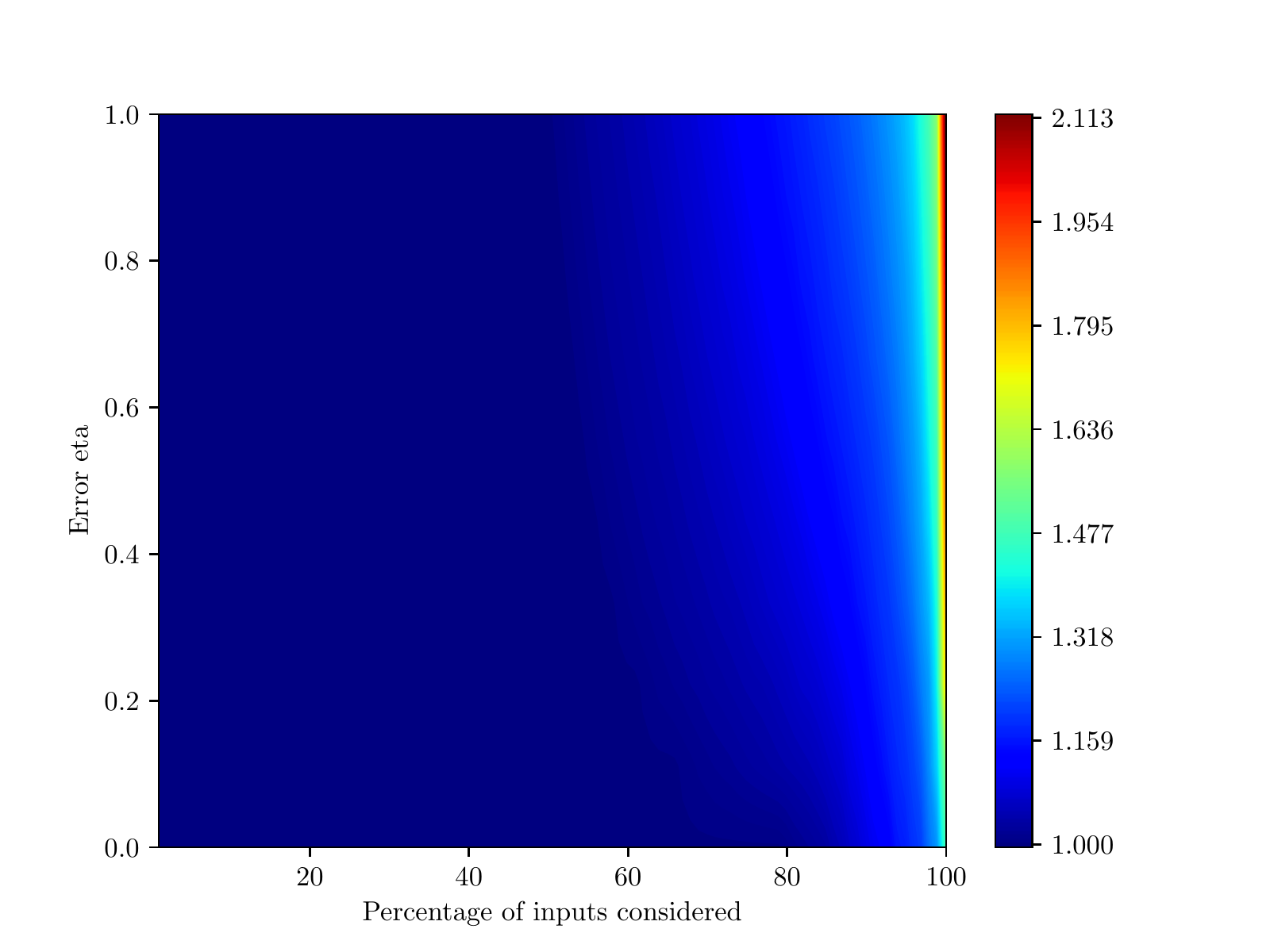}
  \captionof{figure}{\textit{FARFIRST}'s competitive ratio for increasing error and 
  percentage of inputs considered, sorted by the competitive ratio 
  that $FARFIRST$ obtains on them. The narrowness of the red portion of the grid suggests
  that high competitive ratios are rare. We note that the colors 
  of the grid are generally blue, i.e.
  $FARFIRST$ exhibits a relatively low competitive ratio in most cases.}
  \label{figure:farFirstPercentile}
\end{minipage}
\end{figure}

\subsection{\textit{NEARFIRST}}
The figures presented here are analogous to those of Section \ref{section:farFirstExperiments}. Figure \ref{figure:nearFirstSimple} shows the maximum competitive ratio observed for the $NEARFIRST$ algorithm
in all instances with error $\eta$ up to the value of the $x$ axis. In figure \ref{figure:nearFirstPercentile}, a plot analogous to that seen in figure \ref{figure:farFirstPercentile} is shown for the $NEARFIRST$ algorithm. The red portion of the grid is again quite limited as in the case for $FARFIRST$.

\begin{figure}[ht]
\centering
\begin{minipage}{.48\textwidth}
  \centering
  \includegraphics[width=1\linewidth]{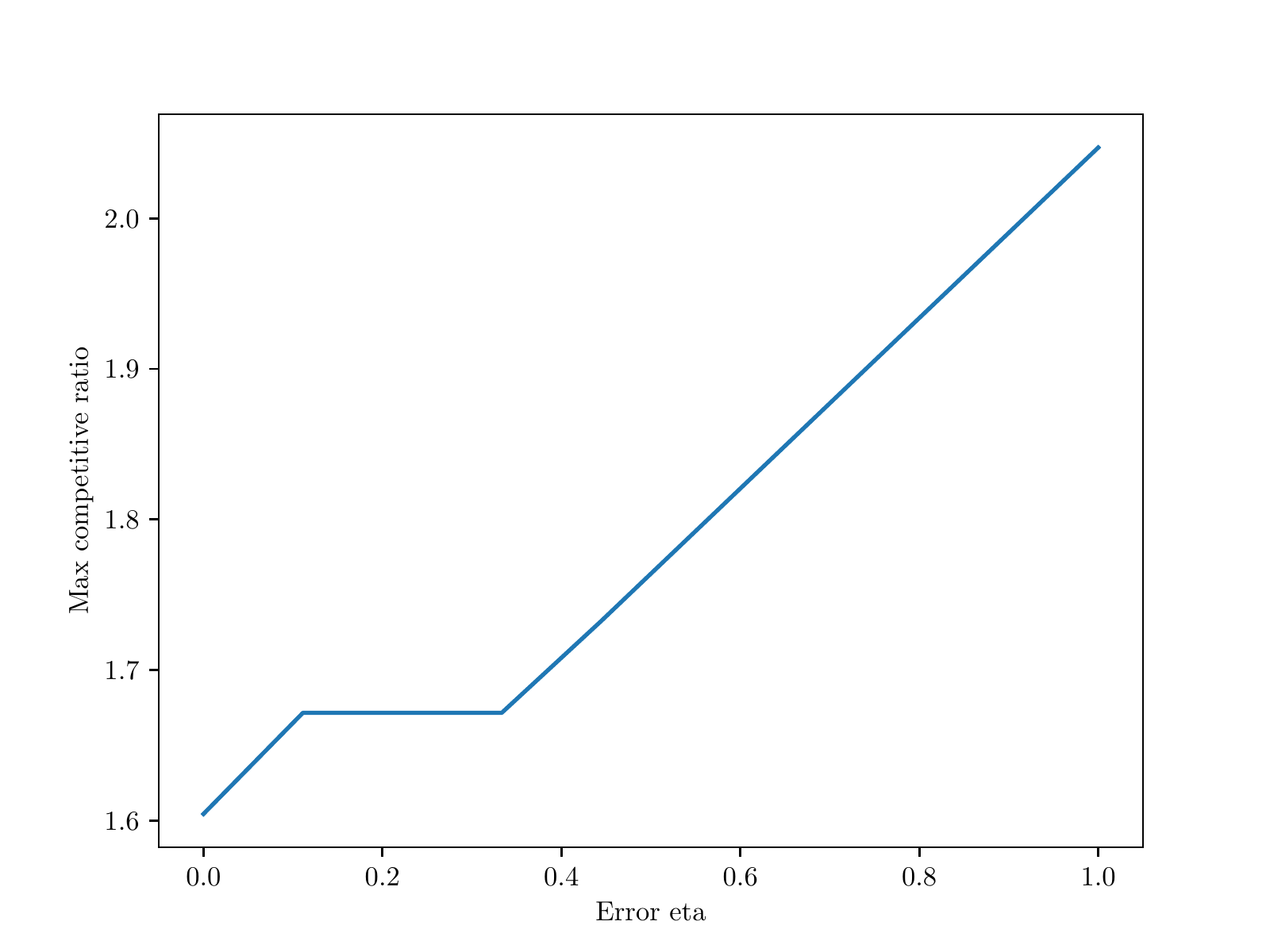}
  \captionof{figure}{\textit{NEARFIRST}'s competitive ratio for increasing error. As can be seen in the figure, the competitive ratio never surpasses $\approx 2.05$ for $\eta \le 1$. Additionally, we find that the competitive ratio even with zero error is close to the theoretical upper bound of $1.\overline{66}$. It should also be noted that the theoretical lower bound of $2$ (without predictions) is broken even for $\eta$ very close to $1$.}
  \label{figure:nearFirstSimple}
\end{minipage}%
\hfill
\begin{minipage}{.48\textwidth}
  \centering
  \includegraphics[width=1\linewidth]{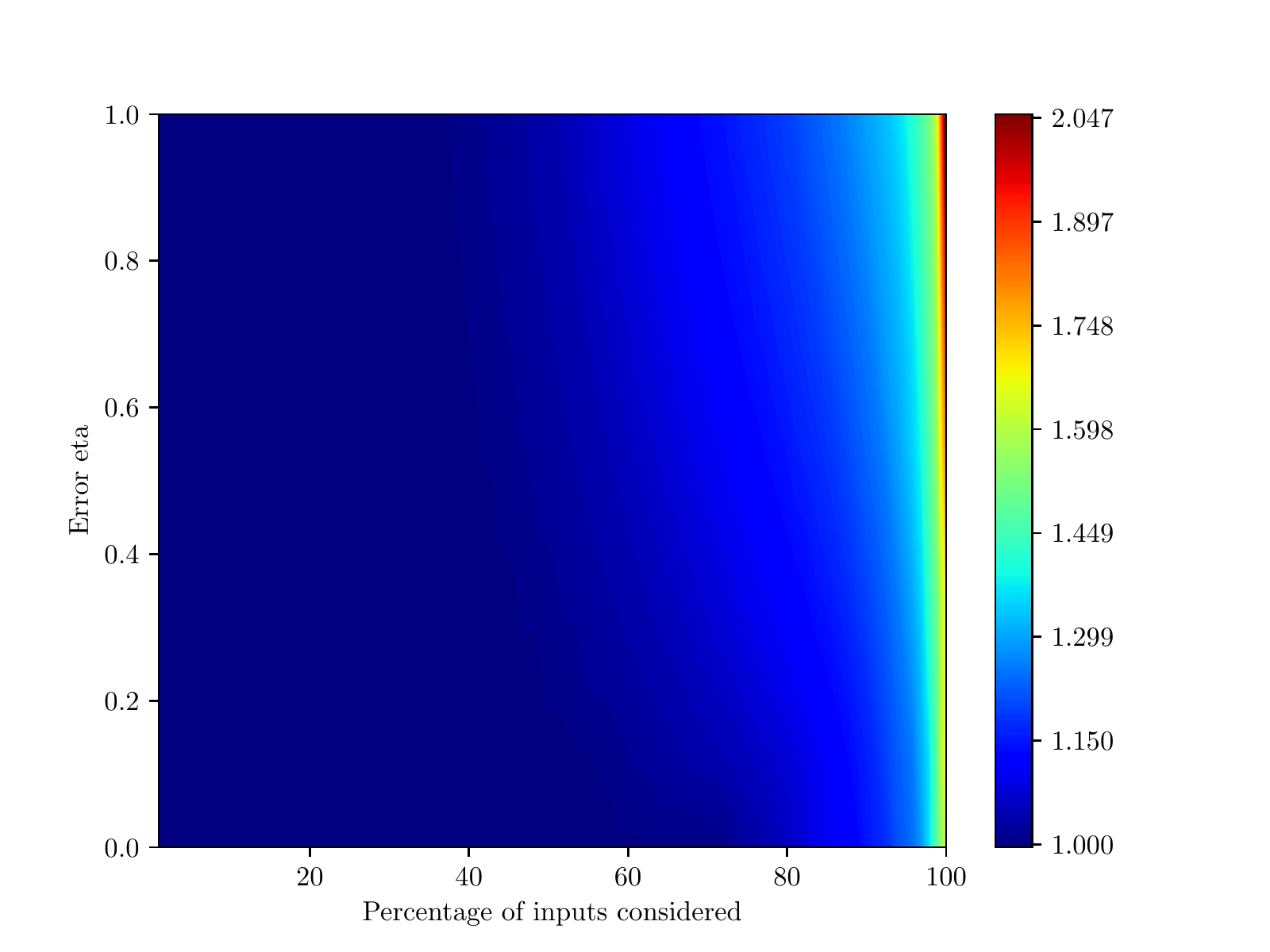}
  \captionof{figure}{\textit{NEARFIRST}'s competitive ratio for increasing error and 
  percentage of inputs considered, sorted by the competitive ratio 
  that NEARFIRST obtains on them. The narrowness of the red portion of the grid suggests
  that high competitive ratios are rare. We note that the colors 
  of the grid are generally blue, i.e.
  $NEARFIRST$ exhibits a relatively low competitive ratio in most cases.}
  \label{figure:nearFirstPercentile}
\end{minipage}
\end{figure}

\subsection{\textit{PIVOT}}
In this final subsection we condsider the $PIVOT$ algorithm. In figure \ref{figure:pivotSimple}, the color of the pixel in coordinates $(x, y)$ corresponds to the maximum competitive ratio observed for all instances with errors $\delta \le x$ and $\eta \le y$. We should explain here that the colors change abruptly in this figure since the $\delta$ error does not vary smoothly in the generated predictions. This is because we only have a discrete set of choices for the label $f'$ of the $LF$ prediction model through which $\delta$ is calculated.

\begin{figure}[ht]
\centering
\includegraphics{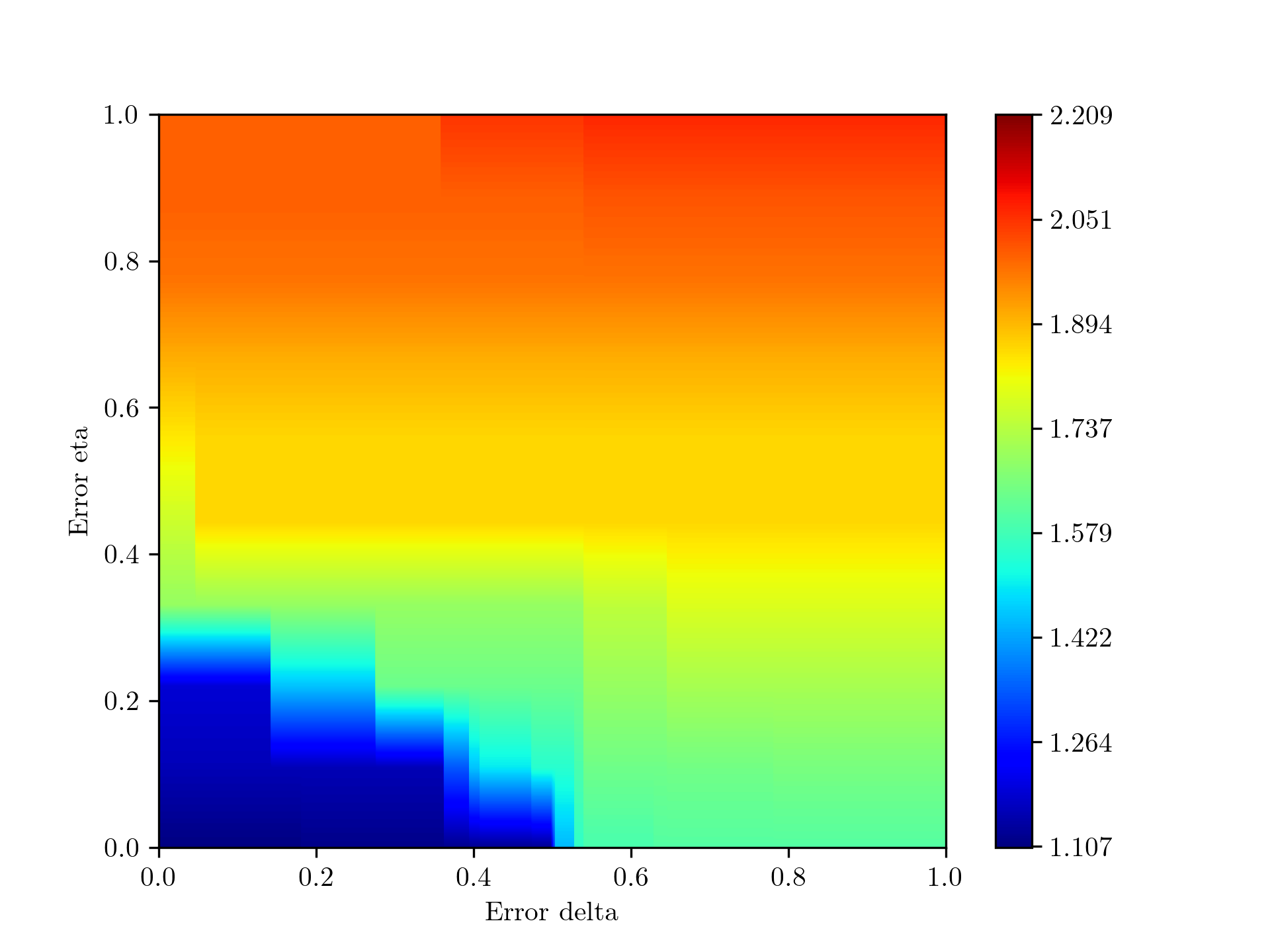}
\caption{\textit{PIVOT}'s competitive ratio for increasing errors $\delta$ and $\eta$. The color of the pixel in coordinates $(x, y)$ corresponds to the maximum competitive ratio observed for all instances with errors $\delta \le x$ and $\eta \le y$. We observe that the competitive ratio is more sensitive to $\eta$ than to $\delta$, as was to be expected by the corresponding theoretical bound. With perfect predictions, the maximum competitive ratio is not greater than $\approx 1.11$, which is considerably lower 
than the theoretical upper bound of $1.\overline{33}$. In general, the competitive ratio increases smoothly along the main diagonal of the grid. Finally, $PIVOT$'s competitive ratio surpasses the lower bound of $2$ (without predictions) only for large values of $\delta$, $\eta$.}\label{figure:pivotSimple}
\end{figure}

\section{Conclusion}
We have examined the online TSP on the line and provided lower bounds as well as algorithms for three different learning-augmented settings.
An immediate extension of our results would be to bridge the gap between the lower and upper bounds we have shown for the open variant. Also, it would be interesting to establish error-dependent lower bounds and/or optimal consistency-robustness tradeoffs. Moreover, an improvement would be to remove the assumption of knowing the number of requests $n$. A technique that could perhaps allow an algorithm to achieve that is to periodically make sure that the algorithm terminates in case no new requests appear. Finally, more general versions of online TSP could be investigated like the case of trees.

\clearpage

\bibliographystyle{plain}
\bibliography{references}
\pagebreak

\appendix
\section{Lemmas for the case of known \textit{n}}

\begin{lemma}
For any $\epsilon > 0$, no algorithm can be $(2-\epsilon)$-competitive
for open online TSP on the line without predictions when the 
number of requests $n$ is known. Also, 
there exists an algorithm that matches this lower bound.
\end{lemma}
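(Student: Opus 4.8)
The plan is to establish the statement in two independent pieces: a lower bound of $2$, proving that even an algorithm that is told $n$ in advance cannot beat $2$, and a matching $2$-competitive algorithm whose whole power comes precisely from knowing $n$. The intuition behind why knowing $n$ matters is that it lets an algorithm recognize the moment $t_n:=\max_{q\in Q} rel(q)$ at which the last request has appeared — impossible in the classical model, which is exactly why the classical tight bound for the open variant is $\approx 2.04$ rather than $2$.

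For the lower bound I would use the standard adaptive-adversary construction, which needs only $n=2$ requests besides the trivial origin request. Let $ALG$ run freely; at time $1$ the adversary inspects $pos_{ALG}(1)$ and releases the single non-trivial request at $-1$ (with release time $1$) if $pos_{ALG}(1)\ge 0$, and at $+1$ otherwise. Assume WLOG the former. Then $OPT$ walks straight from the origin to $-1$, arriving at time $1=rel$, so $|OPT|=1$, whereas $ALG$'s agent sits at a non-negative position at time $1$ and therefore cannot reach $-1$ before time $1+(pos_{ALG}(1)+1)\ge 2$, giving $|ALG|\ge 2$ and a ratio of at least $2>2-\epsilon$. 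For an arbitrary $n\ge 2$ one pads the instance with $n-2$ extra requests placed at the origin with release time $0$, which affects neither $|ALG|$ nor $|OPT|$. The only point requiring care is that the adversary is permitted to choose the side adaptively after observing $ALG$, which is exactly the adversary model of Section~\ref{section:ourContributions}; no real obstacle arises here.

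For the upper bound I would analyze the ``wait-then-finish'' algorithm $ALG$: stay at the origin until all $n$ requests have been released, then set $L=\min(Q)$, $R=\max(Q)$ and traverse the shortest open walk from the origin that covers $[L,R]$ — move to whichever of $L,R$ is closer to the origin and then to the other, and stop. Since $ALG$ makes no move before $t_n$, at that time every request is outstanding and lies in $[L,R]$, so this walk serves all requests and has length exactly $|L|+|R|+\min\{|L|,|R|\}$; hence $|ALG| = t_n + |L|+|R|+\min\{|L|,|R|\}$. It then remains to bound $OPT$ below by each summand. First, $|OPT|\ge t_n$, because $OPT$ serves the request attaining the maximum release time no earlier than that time and its makespan is at least the time at which it serves that request. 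Second, $|OPT|\ge |L|+|R|+\min\{|L|,|R|\}$, because $OPT$'s trajectory starts at the origin and must visit both $L\le 0$ and $R\ge 0$, and any open walk from $0$ through both endpoints has length at least $|L|+|R|+\min\{|L|,|R|\}$, while the total distance travelled is at most $|OPT|$. Combining, $|ALG| \le 2\max\{t_n,\ |L|+|R|+\min\{|L|,|R|\}\} \le 2|OPT|$, so $ALG$ is $2$-competitive, matching the lower bound. The only mildly delicate points to spell out are the elementary geometric fact about shortest open walks covering an interval on the line and the observation that the completion after $t_n$ is genuinely optimal from the origin (immediate here since $ALG$ has not moved), so I do not anticipate a serious difficulty; the conceptual content is entirely in noticing that knowledge of $n$ turns the forbidden ``wait forever'' strategy into a legitimate and in fact optimal one.
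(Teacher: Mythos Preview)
Your proposal is correct and follows essentially the same approach as the paper: the lower bound is the identical single-request adaptive attack (with your extra padding remark for general $n$), and the upper bound is the same ``wait until the last request, then finish optimally'' idea. The only cosmetic difference is that the paper phrases the upper bound as ``copy $OPT$'s actions'' and bounds each part by $|OPT|$, whereas you spell out the explicit shortest open walk $|L|+|R|+\min\{|L|,|R|\}$ and bound it by $|OPT|$ directly; the content is the same.
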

\begin{proof}
A very simple attack can be 
used to show the lower bound of $2$. If $pos_{ALG}(1) \le 0$, we present a request 
at $1$ with a release time of $1$. In the other case, the request's position is $-1$.
It is easy to see that $|OPT| = 1$, while $|ALG| \ge 2$, proving the bound. There also 
exists a very simple algorithm that matches this bound. Such an algorithm need only wait 
until all requests have been released and then copy $OPT$'s actions, which are at this point 
computable. The waiting part does not take more than $|OPT|$ and neither does the moving 
part, which implies a competitive ratio of $2$ for such an algorithm.
\end{proof}

\begin{lemma}
For any $\epsilon > 0$, no algorithm can be $(1.64-\epsilon)$-competitive
for closed online TSP on the line without predictions when the 
number of requests $n$ is known. Also, 
there exists an algorithm that matches this lower bound.
\end{lemma}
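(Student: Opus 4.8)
The plan is to treat the two halves of the statement separately: the matching algorithm comes essentially for free, while the lower bound needs the classical hard instance to be adapted so that the total number of requests is a fixed constant that the algorithm may be told in advance.

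\textbf{The upper bound.} I would simply invoke the tight $1.64$-competitive algorithm for closed online TSP on the line of Bjelde et al.~\cite{Bjelde:1.64}. That algorithm does not use the value $n$ anywhere; it is $1.64$-competitive over all inputs, hence in particular over all inputs that happen to have exactly $n$ requests. Since handing an algorithm the extra input $n$ cannot hurt an algorithm that ignores it, the very same algorithm is $1.64$-competitive in our known-$n$ model (after the harmless modification, already noted in the Preliminaries, of prepending $q_0$ at the origin to every instance). So there is nothing to close on the algorithmic side.

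\textbf{The lower bound.} I would start from the classical adversary strategy witnessing the $\approx 1.64$ bound for the closed variant on the line~\cite{Ausiello:1.75, Bjelde:1.64}: for every $\epsilon>0$ it produces, against any algorithm, an instance with $|ALG| \ge (1.64-\epsilon)\,|OPT|$ while releasing only boundedly many requests, say at most $N=N(\epsilon)$. The difficulty is that the exact number of requests released may depend on the behaviour of $ALG$, whereas our adversary must commit up front to a single count $n$ that the algorithm learns. I would fix this by \emph{padding}: set $n:=N$, and in any branch where the classical strategy releases only $n'<N$ genuine requests, the adversary additionally releases $N-n'$ dummy requests \emph{at the origin}. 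A request at $0$ does not change $|OPT|$ (the optimal closed tour already starts and ends at the origin) and, in the closed variant, is served at no extra cost by both $OPT$ and any algorithm, since both pass through the origin at their finishing time. Thus the ratio $|ALG|/|OPT|$ on the padded instance equals that on the original, and the classical analysis applies verbatim; running it against $ALG_{N(\epsilon)}$ gives a ratio $\ge 1.64-\epsilon$ and proves the theorem. (If one uses a version of the classical construction that already releases a fixed number of requests on every branch, the padding is vacuous and only $n:=N$ is needed.)

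\textbf{The main obstacle.} The delicate point is \emph{when} to release the dummy requests. Releasing them at time $0$ would let the algorithm read off the number of genuine requests still to come, and thereby distinguish the branches of the adversary strategy prematurely — exactly the power that the known-$n$ model must not grant, and precisely the phenomenon that lets the known-$n$ open-variant bound drop from $2.04$ down to $2$. I would instead align the release times of the dummies with genuine releases of the ``fullest'' branch (or, equivalently, release all dummies at a single time lying after the last branching moment of the classical strategy), so that at every instant the multiset of requests released so far is the same across branches; then $ALG$ genuinely cannot exploit $n$. Verifying that such a release schedule exists and that the origin dummies cost nothing for $OPT$ and for every algorithm — i.e., that every chosen release time is at most $|OPT|$ (and hence at most $|ALG|$) on its branch — is the one step that requires unwinding the details of the classical construction; everything else is bookkeeping on top of the known $1.64$ proof.
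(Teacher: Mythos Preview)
Your approach is essentially the paper's: both pad the classical Ausiello et al.\ adversary with dummy requests up to a fixed total count and release them only after the adversary's last branching decision, so that $|OPT|$ is unchanged and the classical lower bound on $|ALG|$ carries over verbatim. The paper is simply more concrete---it observes that the attack never uses more than $n_{\max}=3$ requests and places the extras between the origin and $q_{win}$ at time $rel(q_{win})$ (so $OPT$ serves them on the way back) rather than at the origin as you do; both placements work for the same reason.
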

\begin{proof}
By taking 
a close look at the attack strategy described in Section 3.3 of \cite{Ausiello:1.75}, we observe that the 
number of requests is never higher than a specific value $n_{max}$. In fact, it turns out 
that $n_{max} = 3$, i.e. the attack never uses more than 3 requests. We modify this attack 
so that it can also be used when $n$ is known. Any instance of the modified attack will have 
\textit{exactly} $n_{max}$ requests. Thus, the algorithm will be informed that there will indeed be $n_{max}$ requests.

We first give a brief description of the original attack strategy for context. 
Let $\rho = \frac{9 + \sqrt{17}}{8} \approx 1.64$, $I = [-(2\rho - 3), (2\rho - 3)]$, $I' = [-(7 - 4\rho), (7 - 4\rho)]$. Note that $I$ is contained in $I'$ and both of them are contained in $[-1, 1]$. If $pos_{ALG}(1) \notin I$, then a single request at $-1$ or $1$ (released at $t = 1$) suffices to achieve the competitive ratio. Assuming that $pos_{ALG}(1) \in I$, we simultaneously present two requests at $-1$ and $1$ at $t = 1$. At $t = 3$, $ALG$ cannot have possibly served both of these requests. If $pos_{ALG}(3) \in I'$, then another request at $-1$ or $1$ (released at $t = 3$) is sufficient. Therefore, we continue assuming that $pos_{ALG}(3) \notin I'$. This means that $ALG$ is close to one extreme and still has not served the other. When $ALG$ crosses the origin to serve the other extreme at time $3 + x$, a request is placed at either $1 + x$ or $-(1 + x)$ (depending on which extreme $ALG$ has not served). The competitive ratio turns out to be at least $\rho$ in this (final) case also. 

We now describe our modification of this strategy. Initially, the original attack strategy is followed. Let $q_{win}$ be the last request released by 
the original attack strategy, after the release of which the competitive ratio is guaranteed to be at least $1.64$ in 
case no new requests appear. Let $n_{original}$ be the number of requests released via the original attack strategy. If $n_{original} < n_{max}$, then
$n_{max} - n_{original}$ extra requests are released at time $rel(q_{win})$, placed arbitrarily between the origin and $q_{win}$.
These extra requests are served by $OPT$ on the way back from $q_{win}$, without incurring extra cost. In other words, $|OPT|$ does not increase with the 
addition of these requests. Also, $|ALG|$ certainly cannot decrease since we only \textit{added} requests. Therefore, the same 
lower bound holds even for known $n$.

The algorithm is exactly the same as the one for unknown number of requests, since it 
can just ignore the number $n$ and still achieve the same competitive ratio.
\end{proof}

\section{Omitted proofs from section \ref{section:Open}}
In this subsection, we give the formal proofs of two lemmas which we used 
to prove Theorems \ref{the:NearFirstBound} and \ref{the:PivotBound}.
\OpenRobustness*
    \begin{proof}
    Let $t_{f}$ denote the latest release time for a fixed instance of the problem.
    We assume w.l.o.g. that $pos_{ALG}(t_{f}) \le \frac{L(t_f) + R(t_f)}{2}$. Note that after $t_f$, $ALG$
    will move to $L(t_f)$ and then to $R(t_f)$. Thus, we observe that
    \begin{equation}\label{OpenCleanupTime}
        |ALG| = t_f + |pos(t_f) - L(t_f)| + |L(t_f) - R(t_f)|.
    \end{equation}
    We distinguish two 
    cases based on the position of $ALG$ at time $t_f$.
    \textbf{Case 1.} $pos(t_{f}) \ge L(t_{f})$. In this case, we see that
        \begin{displaymath}
    {\eqref{OpenCleanupTime} \implies
    |ALG| = t_f + pos(t_f) - L(t_f) + R(t_f) - L(t_f) \le}
    \end{displaymath}
    \begin{displaymath}
    {t_f + \frac{L(t_f) + R(t_f)}{2} - L(t_f) + R(t_f) - L(t_f) =}
    \end{displaymath}
    \begin{displaymath}
        {t_f + \frac{3(|L(t_f)| + |R(t_f)|)}{2} \le 2.5|OPT| \le 3|OPT|.}
    \end{displaymath}
    \textbf{Case 2.}  $pos(t_{f}) < L(t_{f})$. Similarly, we have
        \begin{displaymath}
    {\eqref{OpenCleanupTime} \implies
    |ALG| = t_f + L(t_f) - pos(t_{f}) + R(t_f) - L(t_f)\le}
    \end{displaymath}
    \begin{displaymath}
        {2t_f + |R(t_f)| \le 3|OPT|.}
    \end{displaymath}
    \end{proof}
\OpenDelay*
    To prove this lemma, we first give some definitions. Note first that $L_U[t]$ is sort of a "checkpoint" for $OPT$, meaning that $OPT$ must be located 
    at $L_U[t]$ for some point in time on or after $t$ in order
    to serve that request. Then, it must move from $L_U[t]$ to $q_f$. This idea helps us keep track of $|OPT|$ so we can 
    compare it with $|ALG|$.

    $D(t)$ denotes the least amount of time 
    necessary to serve all requests to the left of $L_U[t]$ (assuming they have been released) and then move to $L_U[t]$,
    starting at position $pos_{ALG}(t)$. This amounts 
    to
        \begin{displaymath}
            D(t) = |pos_{ALG}(t) - L_O(t)| + |L_O(t) - L_U[t]|.
        \end{displaymath}
This function exhibits a useful bound property. If it 
drops to $M$ or below at some time $t$, it can only increase 
above $M$ again due to a request release. This property is 
described more formally in the following claim. But first, 
another useful definition is given. 

We define $L_P[t]$ as the leftmost prediction that is released on or after $t$. That is,
$L_P[t] =  min(\{p \in P \: : \: rel(\pi(q)) \ge t\})$. If this set is empty, then $L_P[t] = R$.

Using this definition, the following claim can be seen in the same way as Claim \ref{claim:chainSaw}.
\begin{claim}\label{claim:OpenChainSaw}
    Let $t_{drop}$ be a time point such that $D(t_{drop}) \le M$. 
    If $t_{next}$ is the earliest release time of a request 
    after $t_{drop}$, then
        \begin{displaymath}
            D(t') \le M, \: \forall \: t' \in [t_{drop}, t_{next}].
        \end{displaymath}
\end{claim}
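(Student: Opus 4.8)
The plan is to mirror, reflected left-to-right, the proof of Claim~\ref{claim:chainSaw}, which established the analogous ``chainsaw'' property for $FARFIRST$ on the right side. First I would record two facts that hold throughout $[t_{drop},t_{next}]$. Since $t_{next}$ is by definition the earliest release strictly after $t_{drop}$, no request is released during this interval, so $L_U[t]$ and $L_P[t]$ are constant there; write $L_U$ and $p$ for these values. Second, I would prove the left analogue of Claim~\ref{claim:pToRU}, namely $|p-L_U|\le M$: if every request has already been released by time $t$ then $L_U=R=p$ and there is nothing to show, and otherwise $L_U$ is a genuine request not released before $t$, so its prediction $\pi(L_U)$ is admissible in the definition of $L_P[t]$ and lies within $M$ of $L_U$, giving $p\le\pi(L_U)\le L_U+M$, while conversely $p=\pi(q')$ for some request $q'$ not released before $t$, hence $q'\ge L_U$ and $p=\pi(q')\ge q'-M\ge L_U-M$.

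Next, since the $NEARFIRST$ update function (and identically that of $PIVOT$, which differs only in how it selects the side) is re-invoked only when a request is released, on the release-free interval $[t_{drop},t_{next}]$ the algorithm — serving requests left to right — follows a single fixed plan: move to the leftmost among the unreleased predictions and the unserved released requests, then move to the leftmost unreleased prediction $p$, then wait there. I would split $[t_{drop},t_{next}]$ into the (at most) three consecutive subintervals matching these phases and bound $D(t)=|pos_{ALG}(t)-L_O(t)|+|L_O(t)-L_U|$ on each, exactly as for $FARFIRST$. While the agent is moving toward a released outstanding request lying strictly left of $\min\{p,pos_{ALG}(t)\}$, the term $|pos_{ALG}(t)-L_O(t)|$ decreases while $|L_O(t)-L_U|$ is unchanged, so $D$ is non-increasing; as its value at the start of $[t_{drop},t_{next}]$ is $D(t_{drop})\le M$, it stays $\le M$ on this phase. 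Once no released outstanding request lies left of $pos_{ALG}(t)$, the only outstanding positions below $pos_{ALG}(t)$ come from unreleased requests, whose leftmost is $L_U$, so $L_O(t)=\min(pos_{ALG}(t),L_U)$ and $D(t)=|pos_{ALG}(t)-L_U|$; since $pos_{ALG}(t)$ then moves monotonically to $p$ and $x\mapsto|x-L_U|$ is convex, $D$ is at most the larger of its value when this phase begins (at most $M$, inherited from the previous phase or from $t_{drop}$) and $|p-L_U|\le M$, hence $\le M$. Finally, while the agent waits at $p$, every released outstanding request lies at or to the right of $p$, so $D(t)=|p-L_U|\le M$ is constant.

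Assembling the three phases covers all of $[t_{drop},t_{next}]$ and gives $D(t)\le M$ there, as claimed. The step I expect to be the main obstacle is the phase bookkeeping itself: one must verify from the $NEARFIRST$/$PIVOT$ update function that during a release-free interval the agent never halts except at $p$ and serves every released request it sweeps over, so that after the first phase only unreleased requests can drag $L_O(t)$ left of $pos_{ALG}(t)$ — this, together with the hypothesis $D(t_{drop})\le M$ and the bound $|p-L_U|\le M$, is precisely what makes the convexity estimate close. The case in which the agent is momentarily far to the left because it chased a badly placed prediction is harmless and already subsumed by the second phase, where $D$ only decreases as the agent walks back toward $p$.
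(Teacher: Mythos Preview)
Your proposal is correct and follows exactly the approach the paper indicates: the paper itself gives no separate proof of Claim~\ref{claim:OpenChainSaw}, merely stating that it ``can be seen in the same way as Claim~\ref{claim:chainSaw},'' and your three-part decomposition is precisely the left-to-right mirror of that argument. Your direct verification of $|p-L_U|\le M$ (the analogue of Claim~\ref{claim:pToRU}) is in fact slightly cleaner here than in the $FARFIRST$ setting, since the open-variant algorithms do not distinguish predictions by sign and so no analogue of ${R_U}'[t]$ or Claim~\ref{claim:blocks} is needed.
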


We now draw our attention to a point in time that is very 
central to our proof.

Let $t_{release}$ be the latest release time of a request. Note 
    that $L_U[t] = R, \: \forall \: t > t_{release}$. Then, we define
        \begin{displaymath}
            t_{chase} = min\{t : \: t_s \le t \le t_{release}, \: (D(t') > M, \: \forall \: t < t' \le t_{release})\}.
        \end{displaymath}
In essence, similarly to the definition in the previous section, $t_{chase}$ denotes the time after which $ALG$ gets to finish as soon as possible without waiting for predictions or backtracking for requests. In the following, we assume for simplicity and without loss of generality
that $ALG$ \textit{always} serves the requests left to right,
even if at time $t_{release}$ it is clear that going 
to the right first is faster. It is true that our 
algorithm may indeed make such a decision at time $t_{release}$,
but that is a trivial optimization that does not invalidate 
our proof, since it can only decrease $|ALG|$ and by extension,
the value $|ALG| - |OPT|$. Under this assumption, we proceed 
by showing that after $t_{chase}$, $ALG$ moves to $L_O(t_{chase})$
and then straight to $L_U[t_{chase}]$, serving all
intermediate requests on the way. In fact, it also
keeps moving to the right until it reaches $R$ and finishes. The following claim can be seen in the same way as Claim \ref{claim:FastFinalState}.
\begin{claim}\label{claim:OpenFastFinish}
    Let $t' = t_{chase} + D(t_{chase})$. Then, 
    $pos(t') = L_U[t_{chase}]$. 
    
    Also, $|ALG| = t' + |pos(t') - R|$.
\end{claim}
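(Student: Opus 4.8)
The plan is to transcribe the proof of Claim~\ref{claim:FastFinalState} with the real line reflected through the origin: the rightmost outstanding request $R_O(t_{chase})$ is replaced by the leftmost one $L_O(t_{chase})$, the rightmost unreleased prediction by the leftmost one $L_P[t]$, and the target $x_{state}$ by $L_U[t_{chase}]$. The claim to establish is that from time $t_{chase}$ onward $ALG$ executes a single uninterrupted schedule: it first runs left to $L_O(t_{chase})$, serving every released request it passes, and then runs right without ever stopping, passing through $L_U[t_{chase}]$ exactly at time $t_{chase}+|pos(t_{chase})-L_O(t_{chase})|+|L_O(t_{chase})-L_U[t_{chase}]|=t_{chase}+D(t_{chase})$ and ending at $R$. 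I would first record the trivialities $L_O(t_{chase})\le L_U[t_{chase}]\le R$ (the leftmost unreleased request is in particular outstanding, and $L_U[\cdot]\le R$ in the open variant) and the fact that no request can be released strictly to the left of $L_O(t_{chase})$ after $t_{chase}$ — such a request would force $L_U[t_{chase}]<L_O(t_{chase})$, which is impossible — so the target $L_O(t_{chase})$ never moves.

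For the leftward leg, by the definition of $t_{chase}$ we have $D(t')>M>0$ for all $t'\in(t_{chase},t_{release}]$, and since $D(t)=|pos(t)-L_O(t)|+|L_O(t)-L_U[t]|$, positivity of $D$ keeps both $L_O(t)$ and $L_U[t]$ weakly to the left of $pos(t)$; hence $ALG$, which serves left to right and so always heads first to $\min(P'\cup O)$, has no reason to wait on a prediction or to backtrack rightward while $t\le t_{release}$. I would split this run exactly as in Claim~\ref{claim:FastFinalState}: while $ALG$ is still more than $M$ to the right of $L_O(t_{chase})$, either $L_O(t_{chase})$ has been released, in which case $ALG$ heads straight to serve it, or it has not, in which case $L_U[t]=L_O(t_{chase})$ and the open-variant analogue of Claims~\ref{claim:blocks} and~\ref{claim:pToRU} — namely $|L_P[t]-L_U[t]|\le M$ whenever $L_U[t]$ is a genuine request, since the prediction of $L_U[t]$ is unreleased and within $M$ of $L_U[t]$ — gives $L_P[t]\le L_U[t]+M=L_O(t_{chase})+M$, so $ALG$ is carried left at least to position $L_O(t_{chase})+M$; and over the final stretch from $L_O(t_{chase})+M$ down to $L_O(t_{chase})$ the request $L_O(t_{chase})$ must already be released, for otherwise $D$ would drop to at most $M$ somewhere on $(t_{chase},t_{release}]$ by the argument of Part~2 of that proof, contradicting the choice of $t_{chase}$.

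For the rightward leg, once $ALG$ stands on $L_O(t_{chase})$ it runs right; it cannot pause on the leftmost unreleased prediction, because arriving at that prediction — which lies within $M$ of $L_U[t]\ge L_U[t_{chase}]$ — with nothing outstanding to its left would make $D(t)\le M$ while $t\le t_{release}$, again contradicting the definition of $t_{chase}$; and once $t>t_{release}$ we have $L_U[t]=R$, so $ALG$ simply continues right to $R$ and terminates. Thus $ALG$ moves at unit speed throughout $[t_{chase},|ALG|]$ along the path $pos(t_{chase})\to L_O(t_{chase})\to R$, which yields $pos(t_{chase}+D(t_{chase}))=L_U[t_{chase}]$ and $|ALG|=t_{chase}+D(t_{chase})+|L_U[t_{chase}]-R|=t'+|pos(t')-R|$. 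The main obstacle is precisely this bookkeeping — showing $ALG$ never deviates from the monotone left-then-right sweep and never idles before finishing — which is why the full strength of the definition of $t_{chase}$ (that $D$ stays above $M$ until the last release) together with the $|L_P[t]-L_U[t]|\le M$ estimates are needed; any request released during the sweep lies to the right of $L_O(t_{chase})$ and is picked up without changing the schedule.
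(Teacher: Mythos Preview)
Your proposal is correct and follows exactly the approach the paper intends: the paper itself does not give a standalone proof but simply says the claim ``can be seen in the same way as Claim~\ref{claim:FastFinalState},'' and your write-up is precisely that transcription to the open-variant, left-to-right setting.

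One small imprecision worth cleaning up: the sentence ``positivity of $D$ keeps both $L_O(t)$ and $L_U[t]$ weakly to the left of $pos(t)$'' is not accurate as stated. The inequality $L_O(t)\le pos(t)$ is immediate from the definition of $L_O$, but $L_U[t]\le pos(t)$ is generally false (indeed on the rightward leg you use $L_U[t]>pos(t)$). Fortunately your detailed two-leg argument does not rely on that sentence; it uses the full strength of $D(t')>M$ on $(t_{chase},t_{release}]$ together with $|L_P[t]-L_U[t]|\le M$, which is exactly what is needed. Drop or rephrase that preview line and the proof is clean.
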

We now give the proof of Lemma \ref{lem:OpenDelay}.
\begin{proof}[Proof of Lemma \ref{lem:OpenDelay}.]
We distinguish two cases.

\textbf{Case 1.} $t_{chase} = t_s$. This easily implies 
that $|ALG| = 2|L| + |R|$ by Claim \ref{claim:OpenFastFinish}. It 
remains to show that $|OPT| \ge 2|L| + |R| - M - d$. 

If $OPT$
follows the order $L \xrightarrow{} R \xrightarrow{} q_f$, then 
    \begin{displaymath}
        |OPT| \ge 2|L| + |R| + d \ge 2|L| + |R| - M - d.
    \end{displaymath}
On the other 
hand, if $OPT$
follows the order $R \xrightarrow{} L \xrightarrow{} q_f$, then
    \begin{displaymath}
        |OPT| \ge 2|R| + |L| + (|L| + |R| - d) \ge 3|R| + 2|L| - d \ge 2|L| + |R| - M - d.
    \end{displaymath}
\textbf{Case 2.} $t_{chase} > t_s$. It can be 
seen then by Claim \ref{claim:OpenChainSaw} that $D(t_{chase}) \le M$.
It is easy to see 
that $|OPT| \ge t_{chase} + |L_U[t_{chase}] - q_f|$. At 
the same time, by Claim \ref{claim:OpenFastFinish} we see that
    \begin{displaymath}
        |ALG| = t_{chase} + D(t_{chase}) + |L_U[t_{chase}] - R| \le
    \end{displaymath}
    \begin{displaymath}
        t_{chase} + M + |L_U[t_{chase}] - q_f| + |q_f - R| \le |OPT| + M + d.
    \end{displaymath}
\end{proof}

\end{document}